\newtheorem{theorem}{Theorem}[section]
\newcommand\su{\mathrm{SU}(2,1)}
\newcommand\supp{\mathrm{SU}(2,1)^{++}}
\newcommand\suppp{\mathrm{SU}(2,1)^{+++}}
\newcommand\asu{\mathfrak{su}(2,1)}
\newcommand\asupp{\mathfrak{su}(2,1)^{++}}
\newcommand\asuppp{\mathfrak{su}(2,1)^{+++}}
\newcommand\dd{\mathrm{d}}
\newcommand\mbb{\mathbb}
\newcommand\mc{\mathcal}
\newcommand\m{\mu}
\newcommand\n{\nu}
\newcommand\p{\partial}
\newcommand\s{\sigma}
\newcommand\nn{\nonumber}
\newcommand\be{\begin{equation}}
\newcommand\ee{\end{equation}}
\begin{document}

\thispagestyle{empty}
\setcounter{page}{0}
\renewcommand{\theequation}{\thesection.\arabic{equation}}

{\hfill{\tt ULB-TH/09-10}}

\begin{center} {\bf \Large Finite and infinite-dimensional symmetries 

\vspace{0.3cm}

of pure $\mathcal{N}=2$ supergravity in $D=4$}

\vspace{.9cm}

Laurent Houart, Axel
Kleinschmidt, \\Josef Lindman H\"ornlund, Daniel Persson\footnote{Also at \emph{Fundamental Physics, Chalmers University of Technology, SE-412 96, G\"oteborg, Sweden}} and Nassiba
Tabti

\footnotesize
\vspace{.9 cm}

{\em Service de Physique Th\'eorique et Math\'ematique,\\
Universit\'e Libre de Bruxelles \& International Solvay Institutes\\ Campus Plaine C.P. 231, Boulevard du
Triomphe, B-1050 Bruxelles, 
Belgium}

\vspace{.4 cm}

 {\tt
 lhouart, axel.kleinschmidt, jlindman, dpersson,  ntabti@ulb.ac.be } 

\end{center}

\vspace {.7cm}

\begin{center}
{\bf Abstract} 
\end{center}

\vspace{0.3 cm}
{\small
\noindent We study the symmetries of  pure $\mathcal{N}=2$ supergravity in $D=4$. As is known, this theory reduced on one Killing vector is characterised by a non-linearly realised symmetry $\mathrm{SU}(2,1)$ which is a non-split real form of $\mathrm{SL}(3, \mathbb C)$. We consider the BPS brane solutions of the theory preserving half of the supersymmetry and the action of $\mathrm{SU}(2,1)$ on them. Furthermore we provide evidence that the theory exhibits an underlying algebraic structure described by the Lorentzian Kac-Moody group $\mathrm{SU}(2,1)^{+++}$. This evidence arises both from  the correspondence  between the bosonic space-time fields of  $\mathcal{N}=2$ supergravity in $D=4$ and a one-parameter sigma-model based on the hyperbolic group $\mathrm{SU}(2,1)^{++}$, as well as from the fact that the structure of BPS brane solutions is neatly encoded in $\mathrm{SU}(2,1)^{+++}$.  As a nice by-product of our analysis, we obtain a regular embedding of the Kac-Moody algebra $\asuppp$ in $\mathfrak e_{11}$ based on brane physics. }\\

\newpage
\tableofcontents
\newpage
\setcounter{equation}{0}


\section{Introduction and discussion}

Many supergravity theories exhibit continuous global symmetries. These are of importance both for the generation of solutions to the field equations and also for the study of quantization. The symmetries are either inherent in the formulation of the theory, as in the case of type IIB supergravity in $D=10$ which admits an $\mathrm{SL}(2,\mathbb{R})$ symmetry, or act on certain subclasses of solutions admitting Killing vectors. This was first noticed in the case of non-supersymmetric $D=4$ gravity with one Killing vector by Ehlers in~\cite{Ehlers:1957zz} where it is also  $\mathrm{SL}(2,\mathbb{R})$ that acts on the set of solutions. Many other instances of this phenomenon are known, the most prominent being ${\cal N}=8$ supergravity in $D=4$ possessing an exceptional $\mathrm E_{7(7)}$ symmetry~\cite{Cremmer:1978ds}. This symmetry can also be viewed as a symmetry of the solutions of ${\cal N}=1$ supergravity in $D=11$ admitting seven commuting space-like Killing vectors, in agreement with the construction of the ${\cal N}=8$ theory in $D=4$ by dimensional reduction of the $D=11$ theory on a seven-torus $T^7$. 
It is also known that further dimensional reduction of the ${\cal N}=8$ theory to $D=3$ leads to an $\mathrm E_{8(8)}$ symmetry~\cite{Marcus:1983hb} and to an infinite-dimensional affine $\mathrm E_{9(9)}$ symmetry in $D=2$ ~\cite{Julia:1982gx,Nicolai:1987kz}. A longstanding conjecture is that upon further reduction to $D=1$ this yields the hyperbolic Kac--Moody symmetry group $\mathrm E_{10(10)}$~\cite{Julia:1982gx,Mizoguchi:1997si}. 

This idea has received new impetus in a modified form over the last years. Concretely, it has been suggested that there should be ways to reformulate the {\em unreduced} maximally supersymmetric $D=11$ theory such that it becomes invariant under $\mathrm E_{10(10)}$~\cite{Damour:2002cu} or even $\mathrm E_{11(11)}$~\cite{West:2001as}. In both cases, the symmetry acts non-linearly on the fields of the theory. An important difference, however, is the implementation of space-time in the two proposals. In the case of $\mathrm E_{10(10)}$ the ten spatial directions are thought to be rearranged in the infinitely many fields contained in the hyperbolic symmetry group $\mathrm E_{10(10)}$ and all the fields in the proposed non-linear sigma model only depend on a single parameter identified with time~\cite{Damour:2002cu}. We will refer to this as the cosmological $\mathrm E_{10(10)}$ model. In contrast, in the case of $\mathrm E_{11(11)}$, all fields depend on the eleven-dimensional space-time coordinates (or even additional coordinates implied by $\mathrm E_{11(11)}$ covariance~\cite{West:2003fc,Kleinschmidt:2003jf}). By embedding the cosmological $\mathrm E_{10(10)}$ model in a one-parameter sigma-model based on $\mathrm E_{11(11)}$ one can obtain variants of the model where the parameter is not identified with time but with a spatial direction. This model permits one to describe smeared BPS brane solutions~\cite{Englert:2003py,Englert:2004ph}. We refer to this model as the brane $\mathrm E_{10(10)}$ model.

This picture has been generalized to the case of any simple split symmetry group $\mathrm G$ in a theory of gravity coupled to matter in $D=3$ in~\cite{Englert:2003zs}. There is a general construction of extending the symmetry group $\mathrm G$ to an affine group $\mathrm G^+$, a hyperbolic or Lorentzian group $\mathrm G^{++}$ and a Lorentzian group $\mathrm G^{+++}$~\cite{Gaberdiel:2002db} and in the discussion above $\mathrm E_{10(10)}$ and $\mathrm E_{11(11)}$ have to be replaced by $\mathrm G^{++}$ and $\mathrm G^{+++}$ to obtain a set of more general conjectures for a wider class of theories. It has been verified that for all simple $\mathrm G$ the extended symmetry group describes the correct field content to make the conjecture work~\cite{Kleinschmidt:2003mf} but a full dynamical confirmation of the conjectures is still an open problem. Many aspects of these ideas have been discussed and instead of reviewing this work we refer the reader to~\cite{Englert:2007qb,Damour:2007dt,Riccioni:2009hi,Bergshoeff:2009zq} and references therein for further information.

In the present paper, we investigate these ideas in the context of four-dimensional pure ${\cal N}=2$ supergravity which is of interest for several reasons. The bosonic sector of this model consists of gravity coupled to a single Maxwell field. It admits half-BPS solutions like the extremal Reissner-Nordstr\"om black hole. In fact, a general half-BPS solution can be described by four charges $m$, $n$, $q$ and $h$ subject to the constraint $m^2+n^2=q^2+h^2$~\cite{Kallosh:1994ba,Argurio:2008zt}. The first two charges are gravitational mass and NUT charge and the latter two correspond to the electric and magnetic charges under the Maxwell field. In addition, it has been known for a long time that the solutions of this model with one Killing vector transform under the group $\mathrm G=\mathrm{SU}(2,1)$~\cite{Kinnersley}. This symmetry group is not in its split real form (which would be $\mathrm{SL}(3,\mathbb{R}))$ and one of our motivations for this work was to investigate whether the conjectures discussed above apply also in this case (see~\cite{HenneauxJulia,deBuyl:2003ub,Henneaux:2007ej,Riccioni:2008jz} for related work). In particular, the theory of real forms for the extended infinite-dimensional symmetries $\mathrm G^{++}$ and $\mathrm G^{+++}$ is not as well-developed as for finite-dimensional groups but see~\cite{Rousseau1989,Rousseau1995,BenMessaoud} for some mathematical results. Since the symmetry $\mathrm{SU}(2,1)$ mixes the two gravitational charges one can study the question of gravitational dualities~\cite{Hull:1997kt,Hull:2000zn,Henneaux:2004jw,Bunster:2006rt,Boulanger:2008nd,Bergshoeff:2009zq,Argurio:2008zt,Nieto:1999pn} analogous to electromagnetic duality in this simple model.

In more detail, we show the following in this paper. We first review some facts about pure ${\cal  N}=2$ supergravity in $D=4$ and the group $\mathrm{SU}(2,1)$ acting on its solutions with one time-like or 
one space-like Killing vector in Section~\ref{sec:EinsteinMaxwell}. Then, we go on to study the action of the finite-dimensional $\mathrm{SU}(2,1)$ on the BPS solutions in Section~\ref{sec:GroupAction}. There we show that the four charges transform linearly under the non-compact subgroup $\mathrm{SL}(2,\mathbb{R})\times \mathrm{U}(1)$ of $\mathrm{SU}(2,1)$. In particular, we show that the moduli space of half-BPS solutions can be described as a certain coset space, in agreement with recent results in the literature, and discuss the extension to the quantum theory from a string theory perspective. By analysing the Lie algebra of $\mathrm{SU}(2,1)^{+++}$ we then demonstrate that the field content of the extended symmetry group is correct in Section~\ref{sec:su21+++}. This requires understanding which generators are present in this particular real form of the Kac-Moody algebra. Starting from this observation, one can construct a correspondence between the one-parameter cosmological model based on $\supp$ and ${\cal N}=2$ supergravity in exactly the same way as for $\mathrm E_{10(10)}$ and this is shown in Section~\ref{sec:su21++}. We demonstrate also how the algebraic structure of $\suppp$ captures the half-BPS solutions in Section~\ref{sec:su21++}. This provides a detailed study of the proposed infinite-dimensional symmetries of ${\cal N}=2$. The extremal BPS solutions that occur in ${\cal N}=2$ supergravity can be derived from intersecting brane construction in M-theory and this leads us to an embedding of the non-split $\mathrm{SU}(2,1)^{+++}$ in the split $\mathrm  E_{11(11)}$, which is described in Section \ref{sec:Embedding}, thus nicely unifying our analysis with existing results. Questions not addressed in this paper are the supersymmetric deformations of $\mathcal{N}=2$ supergravity (e.g. adding a cosmological constant) and their consistency of the algebraic structure of $\suppp$ via higher rank forms~\cite{Bergshoeff:2007qi,Riccioni:2007au,Gomis:2007gb,Kleinschmidt:2008jj} as well as the coupling of the fermionic sector.

Our results can be taken as evidence that the conjectured $\mathrm{G}^{++}$ and $\mathrm{G}^{+++}$ also appear in situations when $\mathrm{G}$ is not in split real form. Their full verification is subject to the same restrictions regarding the correct interpretation of the infinity of their generators as in the case when $\mathrm{G}$ is split. One can establish a correspondence (or dictionary) between the cosmological coset model based on $\mathrm{G}^{++}$ and the supergravity equations at low levels and account for the algebraic structure of  half-BPS solutions in $\mathrm{G}^{+++}$. The finite $\mathrm{G}$ part of the symmetry acts as a solution generating group in $D=3$. In particular, there are non-linear transformations acting as gravitational dualities on BPS solutions. Furthermore, the construction of ${\cal N}=2$ supergravity as a truncation of the maximal ${\cal N}=8$ theory has an algebraic counterpart since $\mathfrak{su}(2,1)^{+++}$ is contained in $\mathfrak{e}_{11}$ as a subalgebra.

\setcounter{equation}{0}
\section{Symmetries and BPS solutions of pure $\mathcal{N}=2$  supergravity}
\label{sec:EinsteinMaxwell}

Pure $\mathcal{N}=2$ supergravity in four dimensions is the natural supersymmetric completion of Einstein-Maxwell theory. To set the scene, we shall in this section present our conventions for this theory, and in particular discuss its underlying symmetries in the presence of a space-like or a time-like  Killing vector. The presence of these Killing vectors is equivalent to performing a Kaluza-Klein reduction of the theory on a space-like or a time-like circle, respectively. This process reveals a hidden global symmetry in $D=3$, described by the group SU(2,1) \cite{Kinnersley,Breitenlohner:1987dg}. In this section, we also discuss some important properties of this group and its associated Lie algebra $\mathfrak{su}(2,1)$, which will be of importance in the remainder of this paper.
\subsection{Einstein--Maxwell in $D=4$}

The field content of four-dimensional $\, \mc{N}=2$ supergravity consists of a gravity multiplet, with a graviton $g_{\alpha \beta}$, two gravitino $\Psi^a_\alpha$ ($a=1,2$) and a Maxwell field $A_\alpha$. The bosonic part of the theory is then described by the standard Einstein-Maxwell Lagrangian:
\begin{eqnarray}
\label{eqn:SugraAction4d}
\mathcal{L}_{4d}=  \frac{1}{4}\, \sqrt{- g} \, \Big(R- F_{\alpha\beta} F^{\alpha \beta}\Big)\,, 
\end{eqnarray}
such that the Maxwell field is minimally coupled to gravity, and where $F_{\alpha \beta} = 2\,  \p_{\, [ \alpha} \,A_{\beta ]}$, locally. We will take space-time $\mathcal{M}_4$ to be Lorentzian with signature $(-, +, +, +)$.\footnote{Regarding index notation, Greek letters $\alpha, \beta...$ will indicate the four-dimensional curved space-time indices, $\mu, \nu...$ the three-dimensional curved indices, $A,B,...$ flat space-time indices, and $a,b...$ flat space indices.} The equations of motion derived from (\ref{eqn:SugraAction4d}), written in flat coordinates, are for the metric
\begin{equation}
\label{eqn:Einstein}
R_{AB} + \frac{1}{2} \eta_{AB} F_{CD} F^{CD} - 2 F_{AC}{F_B}^{C} = 0, 
\end{equation}
and for the Maxwell field
\begin{equation}
\label{eqn:Maxwell}
D^A F_{AB} = 0 .
\end{equation}
Here $D$ is the covariant derivative with respect to the spin connection. From the symmetry properties of the fields, we can derive the following Bianchi-identities for the Riemann-tensor and the field strength
\begin{eqnarray}
\label{eqn:MaxwellBianchi}
\epsilon^{ABCD}D_B F_{CD} &= &0 ,\\
\label{eqn:RiemannBianchi1}
\epsilon^{ABCD} R_{BCDE} &=& 0 .
\end{eqnarray}

For the analysis of finite symmetries, we will mainly be concerned with space-times preserving some subgroup of the diffeomorphism group of $\mathcal{M}_4$. These residual symmetries are described by the existence of Killing vectors $\kappa$. The Maxwell field will also preserve this symmetry if its Lie derivative with $\kappa$ vanishes. The dynamics of such solutions can be described from a three-dimensional perspective, formally reducing (\ref{eqn:SugraAction4d}) on the orbits of $\kappa$. This three-dimensional theory is then living on an orbit space $\mathcal{M}_3 =  \mathcal{M}_4/\Sigma$, where $\Sigma$ is the exponentiated action of $\kappa$ on $\mathcal{M}_4$.\footnote{Note that generally $\kappa$ will vanish on certain submanifolds of $\mathcal{M}_4$, and when defining its orbit space, we choose a component of $\mathcal{M}_4$ where $\kappa$ is non-vanishing and connected to infinity.}

In three dimensions vector fields have only one propagating degree of freedom and are hence equivalent to scalar fields. One can therefore dualise a vector -- using the Hodge star on the corresponding field strength -- to a scalar by explicitly imposing its Bianchi-identity and consequently write the three-dimensional Lagrangian only in terms of a metric and a set of scalars. For example, a four-dimensional stationary Maxwell-field will in three dimensions be described by two scalars (one from the component of the potential in the time-direction, and one from dualisation). As we will see below, this will concretely realize electromagnetic duality as well as a gravitational duality such as the Ehlers symmetry. As a consequence, the three-dimensional theory allows for a big set of symmetries, acting on the set of solutions preserving the given Killing vector. In fact, the moduli space of solutions  (almost) realizes a generally non-linear representation of this symmetry group. We will discuss this in more detail in Section \ref{sec:GroupAction}.

\subsection{$\mathrm{SU}(2,1)$ and coset models}
\label{SU(2,1)SigmaModel}

In the following, the group $\mathrm{SU}(2,1)$ and some of its subgroups will play an important role since $\mathrm{SU}(2,1)$ is the global symmetry group of Einstein-Maxwell theory in the presence of a Killing vector~\cite{Kinnersley}. We briefly discuss its definition and the construction of coset models with $\mathrm{SU}(2,1)$ symmetry, relegating more details and explicit expressions of the generators to the Appendices \ref{app:SigmaModel} and \ref{app:su21}.

In our conventions the group $\mathrm{SU}(2,1)$ is defined by the set of all unit-determinant complex $(3\times 3)$ matrices $g$ that preserve a metric $\eta$ of signature $(2,1)$;
\begin{equation}
{\mathrm {SU}}(2,1) = \left\{ g \in \mathrm{SL}(3, \mathbb{C}) \,:\, g^\dagger \eta g = \eta \right\}\quad\text{with}\quad
\eta = \left(\begin{array}{ccc}0&0&-1\\0&1&0\\-1&0&0\end{array}\right)\,,
\end{equation}
and we denote the associated Lie algebra by $\mathfrak{su}(2,1)$. This Lie algebra is a real form\footnote{We refer the reader to \cite{Helgason:1978,Henneaux:2007ej,deBuyl:2006gp} for introductions on real forms of complex Lie algebras.} of $\mathfrak{sl}(3,\mathbb{C})$ which may be described via the Tits-Satake diagram shown in Figure~\ref{fig:su21}. The labelling of nodes in Figure~\ref{fig:su21} is chosen to leave room for the extension of $\mathfrak{su}(2,1)$ to the Kac-Moody algebra $\mathfrak{su}(2,1)^{+++}$ to be discussed in Section~\ref{sec:su21+++}. 

\begin{figure}[t]
\begin{center}
\begin{pgfpicture}{1cm}{0cm}{1cm}{2cm}

\pgfnodecircle{Node1}[stroke]{\pgfxy(1,0.5)}{0.25cm}
\pgfnodecircle{Node2}[stroke]
{\pgfrelative{\pgfxy(0,2)}{\pgfnodecenter{Node1}}}{0.25cm}

\pgfnodebox{Node6}[virtual]{\pgfxy(1,0)}{$\alpha_{4}$}{2pt}{2pt}
\pgfnodebox{Node7}[virtual]{\pgfxy(1,3)}{$\alpha_{5}$}{2pt}{2pt}
\pgfnodeconnline{Node1}{Node2} 

\pgfsetstartarrow{\pgfarrowtriangle{4pt}}
\pgfsetendarrow{\pgfarrowtriangle{4pt}}
\pgfnodesetsepend{5pt}
\pgfnodesetsepstart{5pt}
\pgfnodeconncurve{Node2}{Node1}{-10}{10}{1cm}{1cm}

\end{pgfpicture}
\caption {\label{fig:su21} \sl \small The Tits-Satake diagram of the real form $\asu$. This real form is in one to one correspondence with a conjugation $\sigma$ of the complex Lie algebra $A_2=\mathfrak{sl}(3,\mathbb{C})$ which fixes completely $\asu$. The Tits-Satake diagram precisely gives  the action of  $\sigma$ on the simple roots of $A_2$. The presence of the double arrow means that $\sigma(\alpha_4)= \alpha_5$ and $\sigma(\alpha_5)= \alpha_4$. See Section \ref{sec:su21+++} and Appendix \ref{app:su21} for more details.} 
\end{center}
\end{figure}
The Lie algebra $\mathfrak{su}(2,1)$ has two maximal subalgebras that will play a central role in what follows. The first one is the maximal compact subalgebra $\mathfrak{k}=\mathfrak{su}(2)\oplus \mathfrak{u}(1)$, defined by the subset of generators which are pointwise fixed by the so-called Cartan involution $\theta$:
\begin{equation}
\mathfrak{k}=\mathfrak{su}(2)\oplus \mathfrak{u}(1)=\{ x\in \mathfrak{su}(2,1)\,: \, \theta(x)=x\}.
\end{equation}
The other (non-compact) maximal subalgebra is $\mathfrak{k}^{*}=\mathfrak{sl}(2,\mbb{R})\oplus \mathfrak{u}(1)$, which is similarly defined with respect to a ``temporal involution'' $\Omega_4$. The two involutions $\theta$ and $\Omega_4$ are discussed in more detail in Appendices \ref{app:k}, \ref{app:k*} and in Section \ref{sec:carttemporalin}. The Cartan involution induces the following Cartan decomposition in terms of vector spaces (see, e.g.,~\cite{Helgason:1978})
\begin{equation}
\mathfrak{su}(2,1)=\mathfrak{k}\oplus \mathfrak{p},
\end{equation}
where $\mathfrak{p}$ is the subspace which is anti-invariant under $\theta$, corresponding to the orthogonal complement of $\mathfrak{k}$ with respect to the Killing form on $\mathfrak{su}(2,1)$. Similarly, the temporal involution $\Omega_4$ induces the analogous decomposition
\be \mathfrak{su}(2,1)=\mathfrak{k}^{*}\oplus \mathfrak{p}^{*}.
\ee 
Note that $\mathfrak{p}$ and $\mathfrak{p}^{*}$ transform respectively  in representations of $\mathfrak{k}$ and $\mathfrak{k}^{*}$ but are not subalgebras of $\mathfrak{su}(2,1)$. For later reference, let us also give another useful decomposition of $\mathfrak{su}(2,1)$, known as the algebraic Iwasawa decomposition in terms of vector spaces
\begin{equation}
\mathfrak{su}(2,1)=\mathfrak{k}\oplus \mathfrak{a} \oplus \mathfrak{n}_+,
\label{Iwasawa}
\end{equation}
where $\mathfrak{a}$ is the non-compact part of the Cartan subalgebra $\mathfrak{h}\subset \mathfrak{su}(2,1)$ and $\mathfrak{n}_+$ is the nilpotent subspace of upper-triangular matrices. The subspace $\mathfrak{b}_+=\mathfrak{a} \oplus \mathfrak{n}_+$ is known as the standard Borel subalgebra.

At the group level, we then have the corresponding maximal compact subgroup  
\begin{equation}
\mathrm{K} = \mathrm{SU}(2)\times \mathrm{U}(1)
\end{equation}
and non-compact subgroup 
\begin{equation}
\mathrm{K}^* = \mathrm{SL}(2, \mathbb{R})\times \mathrm{U}(1)\, .
\end{equation}
Similarly, the subspaces $\mathfrak{p}$ and $\mathfrak{p}^{*}$ correspond to the two coset spaces
\begin{equation}
\mathcal{C} = \mathrm{G}/\mathrm{K} = \frac{\mathrm{SU}(2,1)}{\mathrm{SU}(2)\times \mathrm{U}(1)}
\quad\text{and}\quad
\mathcal{C}^* = \mathrm{G}/\mathrm{K}^* = \frac{\mathrm{SU}(2,1)}{\mathrm{SL}(2, \mathbb{R})\times \mathrm{U}(1)}\,.
\end{equation}
Physically, $\mathcal{C}$ and $\mathcal{C}^{*}$ arise, respectively, as the moduli spaces of scalars upon reduction to $D=3$ of the Einstein-Maxwell Lagrangian on a space-like or a time-like circle. 

The coset space $\mathcal{C}=\mathrm{G}/\mathrm{K}$ is a Riemannian symmetric space of dimension $\dim(\mathrm{G})-\dim(\mathrm{K})=4$, matching the combined number of degrees of freedom contained in the metric and the Maxwell field in $D=4$. To describe a coset model on this space one can choose a map $\mathcal{V}: \mathcal{M}_3  \rightarrow \mathrm{G}/\mathrm{K}$ in a fixed gauge, that transforms under global transformations $g\in\mathrm{G}$ as $\mathcal{V}(x) \to k(x) \mathcal{V} g^{-1}$, where $k(x)\in\mathrm{K}$ is a local compensating transformation required to restore the chosen gauge for the coset representative.

A manifestly SU(2,1)-invariant Lagrangian can now be constructed using the Maurer-Cartan form $\mathrm{d}\mathcal{V} \mathcal{V}^{-1}$ as follows. Its projection
\begin{equation}
\mathcal{P}= \frac{1}{2} \big(\dd\mathcal{V} \mathcal{V}^{-1}\,  - \,  \theta(\dd\mathcal{V} \mathcal{V}^{-1})\big) \ \in \mathfrak p
\end{equation}
 along the coset transforms $\mathrm{K}$-covariantly under the global $\mathrm{G}$ action as $\mathcal{P}\to k \mathcal{P}k^{-1}$ and the (invariant) trace of its square can be used as a $\mathrm{G}$-invariant Lagrangian that is second order in derivatives: 
 \begin{equation}
 \mathcal{L}=\sqrt{|h|}h^{\mu\nu}  (\mathcal{P}_{\mu} | \mathcal{P}_{\nu}) ,
 \end{equation}
 where $h_{\mu\nu}$ is the metric on $\mathcal{M}_3$. To make this concrete, we shall extend the decomposition (\ref{Iwasawa}) to the group level using the Iwasawa theorem, so that
\begin{equation}
\mathrm{SU}(2,1)=\mathrm{KAN},
\end{equation}
where $\mathrm A$ is the abelian group with the Lie algebra $\mathfrak{a}$ and $\mathrm N$ is the nilpotent group corresponding to the subspace $\mathfrak{n}_+$.
This ensures that we may choose a coset representative $\mathcal{V}\in \mathrm{AN}$ of upper-triangular matrices, traditionally referred to as the ``Borel gauge''. As a consequence of this gauge choice, the coset element  $\mathcal{V}$ can be parametrized using four scalar fields (coordinates on G/K), to be illustrated in detail below.

On the other hand, the coset space $\mathcal{C}^*=\mathrm{G}/\mathrm{K}^*$ is a homogeneous space of dimension $\dim(\mathrm{G})-\dim(\mathrm{K}^*)=4$, but is no longer a Riemannian manifold. Rather, it has signature $(2,2)$, and is usually referred to as a ``pseudo-Riemannian'' symmetric space \cite{Breitenlohner:1987dg}. The general construction of an SU(2,1)-invariant coset model discussed above is still applicable, although in this case the global Iwasawa decomposition is no longer valid. 

For both choices of subgroup, $\mathrm{K}$ and $\mathrm{K}^*$, there is a Noether current 
\begin{equation}
\label{eqn:Noether}
\mathcal J^{\mu} =  \sqrt{|h|} h^{\mu \nu} \mathcal{V}^{-1} \mathcal P_{\nu} \mathcal{V} ,
\end{equation}
associated to the global $\mathrm{G}$ symmetry. We will see later that its values ``at infinity''  for $\mathcal{V}$ describing a half-BPS solutions can be related to the four charges describing the most general such solution.

\subsection{Solutions with space-like Killing vector}
\label{sec:SpacelikeKillingVector}

To give a flavor of the relevance of coset models, we will first quickly consider the case of solutions for which the preserved Killing vector $\kappa$ is space-like. After choosing suitable coordinates so that $\kappa = \partial_x$, the reduction of the Einstein-Maxwell Lagrangian (\ref{eqn:SugraAction4d}) yields in three dimensions, after dualisation, an Einstein plus scalar Lagrangian, with scalar part given by:
\be \label{eqn:lag3ds} \begin{split}
\mathcal{\tilde{L}}_{\mathrm{scal}}= -\,  \frac{\sqrt{ -h}}{4}\,  \Big( \frac{1}{2}\,  \p_{ \m} \phi\,  \p^{\m} \phi &+ 2\,  e^{ \phi} (\p_{ \m} \chi_e\,  \p^{\m} \chi_e+ \p_{ \m} \chi_m\,  \p^{\m} \chi_m) \\ 
&+ e^{2 \phi} (\p_{ \m} \psi +  \sqrt{2}\,  \chi_m \, \p_{\mu} \chi_e - \sqrt{2} \, \chi_e\, \p_{\mu} \chi_m)^2 \Big).
\end{split}
\ee

This reduced Lagrangian contains a dilaton $\phi$ and three axions: $\chi_e$ coming from the component $A_x$ of the Maxwell vector potential, $\chi_m$ coming from the dualisation of the Maxwell vector potential in three dimensions and $\psi$ arising from the dualization of the graviphoton. 
These four scalar fields parametrize the coset  space $\mathcal{C} = \su/(\mathrm{SU}(2)\times \mathrm{U}(1))$ \cite{Kinnersley,Julia:1980gr}. More concretely, the scalar dynamics given by (\ref{eqn:lag3ds}) is equivalent to a non-linear $\sigma$-model describing maps $\mathcal{V}$ from $\mathcal{M}_3$ to the homogeneous space $\mathcal{C}$ as described above. 
This map $\mathcal{V}$ is the composition of a map into the tangent space of $\mathcal{C}$ together with the exponential map from this tangent space to the coset. Naturally parametrized by the four scalar fields, it is therefore given by the expression
\be
\label{eqn:CosetElement} \begin{split}
\mathcal{V}&= e^{\frac{1}{2} \phi\, h_4}\, e^{\sqrt{2} \,\chi_e\,   e_4 + \sqrt{2} \,\chi_{m} \,  e_5 + \sqrt{2}\, \psi \,  e_{4,5}}\\
&= \left(\begin{array}{ccc}e^{\frac{\phi}{2}} & e^{\frac{\phi}{2}}(\sqrt{2} \, \chi_e + i \, \sqrt{2}\,  \chi_m) & e^{\frac{\phi}{2}}(\chi_e^{\  2} +  \chi_m^{\  2 }+ i\,  \sqrt{2} \, \psi) \\0 & 1 & \sqrt{2} \, \chi_e - i \, \sqrt{2}\,  \chi_m \\0& 0 & e^{- \frac{\phi}{2}} \end{array}\right)\, ,
\end{split}
\ee
where $  h_4,\,   e_4,\,   e_5$ and  $  e_{4,5}$ (see (\ref{eqn:su21gens})) are the generators of the Borel subalgebra $\mathfrak{b}_+=\mathfrak{a}\oplus \mathfrak{n}_+$.\footnote{Recall from the previous section that the Borel subalgebra $\mathfrak{b}$ of a non-split real form does not contain the full Cartan subalgebra $\mathfrak{h}$ but only its non-compact part $\mathfrak{a}$ (see Appendix \ref{app:su21} for more details). }  The scalar fields here depend on the coordinates $x^\mu$ of $\mathcal{M}_3$.

\subsection{Solutions with time-like Killing vector}
\label{sec:StationarySolutions}
Let us now repeat this discussion in a little bit more detail in a case that will be more interesting for us, namely BPS solutions. These solutions  preserve a time-like Killing vector, which is most easily seen by considering the fact that a BPS-solution necessarily preserve a Killing spinor $\epsilon$. Forming the supersymmetry generator $Q = Q^{\mu}\epsilon_{\mu}$, the supersymmetry algebra -- in which $Q$ squares to the generator of time-translation -- shows that the solution must be preserved under time-translation. The set of single centered BPS-solutions is a subset of a general set of generalized Reissner-Nordstr\"om black hole-like solutions to the equations of motion (\ref{eqn:Einstein}) and (\ref{eqn:Maxwell}) with mass $m$, NUT charge $n$, and electric and magnetic Maxwell charges $q$ and $h$. This solution can be written as~\cite{Argurio:2008zt}
\be\label{eqn:TaubNUTsolution} \begin{split}
\dd s^2 =& - \frac{{\tilde r}^2-n^2- 2\,m\,{\tilde r}+q^2+h^2}{{\tilde r}^2+n^2}\,  ( \, \dd t + 2 \,n \,\cos \theta\, \dd \phi \,)^2  \\ 
&+ \frac{\tilde{r}^2+n^2}{{\tilde r}^2-n^2- 2\,m\, {\tilde r}+q^2+h^2}\, \dd {\tilde r}^2 + ({\tilde r}^2+n^2)(\dd \theta^2+ \sin^2 \theta \, \dd \phi^2)\, , 
\end{split}
\ee
\begin{eqnarray} \label{eqn:TaubNUTsolution2}
& A_t& = \frac{q\,{\tilde r} +n\,h}{{\tilde r}^2+n^2}, \hspace{1cm}   A_{\phi}= \frac{2\, n\, q\,  {\tilde r} - h\, ({\tilde r}^2-n^2)}{{\tilde r}^2 +n^2}\, \cos \theta \, .
\end{eqnarray}
For our purpose it is convenient to introduce isotropic coordinates defined by ${\tilde r}= r+ m$, in which  the metric in (\ref{eqn:TaubNUTsolution}) becomes
\begin{equation}
\dd s^2=-\frac{\lambda(r)}{R^2(r)} (\dd t+2\,n \, \cos \theta \, \dd \phi \, )^2+ \frac{R^2(r)}{\lambda(r)} \dd r^2 +
R^2(r) (\dd \theta^2+ \sin^2 \theta \, \dd \phi^2)\, .
\ee
Here the functions $\lambda(r)$ and $R^2(r)$ are given by 
\begin{eqnarray}
&&\lambda(r)= r^2-\Delta^2, \qquad \Delta^2 \equiv m^2+n^2-q^2-h^2 \label{lambda}, \\
&& R^2(r)=r^2+2mr+m^2+n^2 \label{R2} .
\end{eqnarray}

We will be interested in the subclass of solutions (\ref{eqn:TaubNUTsolution}) which are BPS, namely the ones which preserve 1/2 of the supersymmetry. These solutions are characterised by the following  BPS condition among the four charges \cite{Kallosh:1994ba}  (see also \cite{Argurio:2008zt}):
\begin{equation}
\label{BPScondp}
\Delta^2 =0   \quad \Leftrightarrow \quad m^2+n^2=q^2+h^2.
\end{equation}
Using  (\ref{lambda}) and (\ref{BPScondp}), we have $\lambda_{BPS}=r^2$ and the BPS metric is:
\begin{equation}
\label{Isom}
\dd s^2_{BPS}=-\frac{r^2}{R^2(r)} (\dd t+2\,n \, \cos \theta \, \dd \phi \, )^2 + \frac{R^2(r)}{r^2} \sum_{a=1}^{3}\dd x_a^2,
\end{equation} 
where the $x_a$'s are the flat Euclidean space coordinates.
In the particular case $n=0$, one finds again the extremal Reissner-Nordstr\"om black hole (or an extremal $0$ brane in four dimensions) characterised by the harmonic function $1+\frac{m}{r}$ (see for instance \cite{Argurio:1998cp}). Note that upon dimensional reduction of (\ref{eqn:TaubNUTsolution}) on time the three-dimensional Euclidean metric is
\begin{equation}
\dd s^2_{3D}= \dd r^2 + (r^2-\Delta^2) \dd \Omega_2^2.
\label{redu3d}
\end{equation}
When the BPS condition (\ref{BPScondp}) is fulfilled, the line element (\ref{redu3d}) is just the line element for three-dimensional Euclidean flat space in spherical coordinates.

This solution is an example of a solution to the Einstein-Maxwell system which allows for a time-like Killing vector. 
These solutions are referred to as {\it stationary}. More generally, choosing suitable coordinates, in this case so that $\kappa = \partial_t$, a convenient metric ansatz for this type of solutions is\footnote{In the original work on time-like reductions to $D=3$ \cite{Breitenlohner:1987dg}, it was noted that by further assuming spherical symmetry for the three-dimensional metric $g_3$, the Einstein-scalar Lagrangian in $D=3$ describes the geodesic motion of a fiducial particle moving on (a cone over) the moduli space $\mathcal{C}^{*}=\mathrm{SU(2,1)}/(\mathrm{SL}(2,\mathbb{R})\times \mathrm{U(1)})$. The dynamics of the particle on $\mathcal{C}^{*}$ thus corresponds to motion in the space of stationary, spherically symmetric solutions to Einstein-Maxwell theory. This point of view has been extended recently in \cite{Gaiotto:2007ag,Bergshoeff:2008be} in the context of solution-generating techniques. Moreover, for the special case of BPS solutions this philosophy was elaborated upon in \cite{Pioline:2006ni,Neitzke:2007ke}, where it has been shown that the classical phase space of the particle dynamics coincides with the six-dimensional coset space $\mathcal{Z}=\mathrm{SU(2,1)/(U(1)}\times \mathrm{U(1)})$ (known as the twistor space of $\mathcal{C}$, see e.g. \cite{Gunaydin:2007qq}). This result has been used as a starting point for quantizing BPS black hole solutions by (radial) quantization of the particle dynamics on $\mathcal{C}^{*}$ \cite{Gunaydin:2007bg}.} 
\begin{eqnarray}
\label{redukill}
\dd s^{2}_{4D} = - e^{- \phi} (\dd t + \omega)^2+ e^{\phi} \dd s_{3D}^2\, ,
\end{eqnarray}
where $\dd s^2_{3D}$ is the invariant line element on $\mathcal{M}_3$ corresponding to the pullback metric $g_3$ of the four dimensional metric $g$ to $\mathcal{M}_3$.\footnote{In the case of extremal solutions this will give a flat $g_3$ metric, and for extremal solutions with horizon this will generally give $\mathcal{M}_3$ a topology homeomorphic to $\mathbb{R}^3\backslash \{0\}$.} Here we see explicitly the origin of the graviphoton, given as the 1-form $\omega$.  The dynamics of these particle-like solutions is now governed by (\ref{eqn:SugraAction4d}) reduced on the orbits of the Killing vector $\kappa$. Explicitly
\be \label{eqn:3dlagt} \begin{split}
\mathcal{L}_{3D}' =  \frac{1}{4} \sqrt{g_3} \,\Big[\ &\big(R_3 - \tfrac{1}{2}  \p_{ \m} \phi\,  \p^{\m} \phi+ \tfrac{1}{4} e^{-2 \phi} \, F^2_{(2)} \big)\\
&-\, \big(e^{- \phi} \tilde{F}^2_{(2)} - 2\,  e^{ \phi}\, \p_{ \m} \chi_e\,  \p^{\m} \chi_e  \big)   \   \Big]\, ,
\end{split}\ee
where $F_{(2)}= \dd\omega$ and $\tilde{F}_{(2)}= \dd A- \dd\chi_e\wedge \omega  $. After dualization of the two field strengths,
\begin{eqnarray}
\label{eqn:DualisingMaxwell}
\tilde{F}^{\lambda \nu }&=& - \frac{ \epsilon^{\mu \lambda \nu}}{\sqrt{g_3}}\, e^{\phi}\, \p_{\mu}\chi_m\, , \\
\label{eqn:DualisingEinstein}
F^{\lambda \nu }&=&  \frac{ \epsilon^{\mu \lambda \nu}}{\sqrt{g_3}}\, e^{2 \phi}\, \Big( 2\, (\chi_m\, \p_{\mu} \chi_e - \chi_e\,\p_{\mu} \chi_m ) + \sqrt{2} \, \p_{\mu} \psi \Big)\, ,
\end{eqnarray}
with $ \epsilon^{r \theta \phi} =-1 $, we can  rewrite the three-dimensional Lagrangian (\ref{eqn:3dlagt}) using only the metric in three dimensions and the scalar fields $\phi, \chi_e, \chi_m$ and $\psi$. We hence get
\be  \label{eqn:3dStationaryLagrangian} \begin{split}
 \mathcal{L}_{3D}=  \frac{1}{4}\, \sqrt{g_3} \, \Big[ \ R_3 &- \tfrac{1}{2}\,  \p_{ \m} \phi\,  \p^{\m} \phi + 2\,  e^{ \phi}\, (\p_{ \m} \chi_e\,  \p^{\m} \chi_e+ \p_{ \m} \chi_m\,  \p^{\m} \chi_m) \\
&- e^{2 \phi}\, (\p_{ \m} \psi +  \sqrt{2}\,  \chi_m \, \p_{\mu} \chi_e - \sqrt{2} \, \chi_e\, \p_{\mu} \chi_m)^2 \, \Big]\, .
\end{split}
\ee
One sees directly that $\chi_e$ and $\chi_m$ appear completely symmetrically. The duality between the two gravitational scalars is less apparent, but is in fact present as we will see in Section \ref{sec:GroupAction}. Note the change of signs in front of the kinetic terms for the Maxwell scalars in comparison with \eqref{eqn:lag3ds}, revealing that the two scalar actions for space-like and time-like reductions are related by a ``Wick rotation'' of the Maxwell scalars $\chi_e$ and $\chi_m$.

The scalar part of (\ref{eqn:3dStationaryLagrangian}) can now be identified with a non-linear $\sigma$-model constructed on the coset $\mathcal{C}^* = \su/(\mathrm{SL}(2, \mathbb{R})\times \mathrm{U}(1))$, where the change in the quotient group has its origin in the different kinetic terms. Hence the construction of this theory is the same as the one used for space-like reduction, except that when deriving the coset Lagrangian we replace the Cartan involution $\theta$ by the temporal involution $\Omega_4$ (introduced above and defined in Appendix \ref{app:k*} ), having as fixed subalgebra $\mathfrak{k}^* = \mathfrak{sl}(2,\mathbb{R}) \oplus \mathfrak{u}(1)$.

Generally we still write maps $\mathcal{V}$ in the Borel gauge by using the expression (\ref{eqn:CosetElement}). The solution (\ref{eqn:TaubNUTsolution}) and (\ref{eqn:TaubNUTsolution2})  in terms of our four scalar fields can then be rewritten as:
\be \begin{split}\label{eqn:BPSscalars}
\phi(r)&= \ln \Big(\frac{(r+m)^2 + n^2 }{ r^2}\Big)\, , \\
\chi_e(r)&= \frac{h n + q (m+r)}{(r+m)^2 + n^2 } \,,  \\
\chi_m(r)&= \frac{nq - h(m+r)}{(r+m)^2 + n^2}\, , \\
\psi(r)&=   -\frac{\sqrt{2} n r}{(r+m)^2 + n^2}\, .
\end{split}
\ee
Using the Noether current of the coset Lagrangian, we can now relate the three-dimensional ``conserved'' $\sigma$-model quantities to the four-dimensional charges. At infinity ($r\to\infty$) the coset element $\mathcal{V}$ parametrized by (\ref{eqn:BPSscalars}) tends towards the identity element ${\bf{1}}$ of $\mathrm{SU}(2,1)$, implying that $\mathcal{J} \rightarrow \mathcal{P}$ when $r\to\infty$. 

Furthermore, one can compute using \eqref{eqn:Noether} (see also \cite{Bossard:2009at})
\begin{equation}\label{eqn:charges} \begin{split} \
Q &= \int_{S^{\infty}} \mathcal{J} =  \int_{S^{\infty}} \mathcal{P} \\
&=  - m\, h_4+ n\, (e_{4,5}- f_{4,5})- \frac{q}{\sqrt{2}} \, (e_4- f_4) +\frac{h}{\sqrt{2}}\, (e_5+f_5) \\
&= \left(\begin{array}{ccc}-m & \frac{i(h+iq)}{\sqrt{2}} & in \\\frac{ih + q}{\sqrt{2}} & 0 & \frac{-ih-q}{\sqrt{2}} \\-in & \frac{-ih+q}{\sqrt{2}} & m\end{array}\right),
\end{split}
\end{equation} 
where $m,n,q$ and $h$ are the four-dimensional charges. For the derivation of the elements in $\mathfrak{p}^*$ which is the orthogonal complement of $\mathfrak{k}^*$ with respect to the Killing form, see Appendix \ref{app:su21}.

The form (\ref{eqn:charges}) is preserved by coset transformations belonging to $\mathrm{K}^*$ since $\mathcal{J}=\mathcal{P}\in \mathfrak{p}^*$ and the reductive homogeneous space decomposition ensures that $[\mathfrak{k}^*,\mathfrak{p}^*]\subset\mathfrak{p}^*$. As we will argue below, the transformations from $\mathrm{K}^*$ preserve the asymptotic conditions on the BPS solution and therefore act (linearly) on the four BPS charges. The transformations that belong to $\mathrm{K}^*$ also preserve the asymptotic condition $\mathcal{V}\to \bf{1}$ as $r\to\infty$. The $\mathrm{SU}(2,1)$ transformations that are not part of $\mathrm{K}^*$ violate this asymptotic condition on the coset element and also map the Noether current $\mathcal{J}$ out of $\mathfrak{p}^*$. This makes the identification of the four BPS charges from the Noether current less evident. However, as the physical fields are related to the scalar fields of the coset $\mathcal{C}^*$ (mostly) by duality relations these transformations do preserve the asymptotic conditions on the physical fields. In fact, we will show below that  Iwasawa decomposable  transformations of $\mathrm{SU}(2,1)$ outside $\mathrm{K}^*$ act as gauge transformations on the physical fields and do not change the BPS charges. For this reason it will turn out to be sufficient to use the Noether charges from (\ref{eqn:charges}) and their transformation under $\mathrm{K}^*$ to find the orbits of BPS solutions under $\mathrm{SU}(2,1)$.

\setcounter{equation}{0}
\section{Action of $\mathrm{SU}(2,1)$ on BPS solutions}
\label{sec:GroupAction}

Let us now proceed to discuss the action of SU(2,1) on the stationary solution (\ref{eqn:TaubNUTsolution}). The $\sigma$-model is $\mathrm G$-invariant by construction and acting with $\su$ on the coset with the natural action from the right, gives an action on the maps $\mathcal{V}$. We thus generate new solutions when lifting the transformed $\mathcal{V'}$ back to four dimensions, using the explicit form (\ref{eqn:CosetElement}) of $\mathcal{V}$ and the dualisation relations (\ref{eqn:DualisingMaxwell}) and (\ref{eqn:DualisingEinstein}). Furthermore, we know that every single centered extremal solution is uniquely fixed by the values of the scalar fields at infinity, in terms of the four charges $m,\, n,\, q$ and $h$. This induces a representation of $\su$ on these four charges. Now, as the coset space $\mathcal{C}^*$ in the case of stationary solutions is not a Riemannian symmetric space, there is not a single coordinate system covering the whole coset \cite{Bossard:2009at}. However, our $\sigma$-model describes maps to a given coordinate patch. If the action of $\mathrm G$ takes us outside of this patch, we have no way of relating the new $\mathcal{V}'$ to the four-dimensional fields. Constructing our coordinate system on $\mathcal{C}^*$ via the Borel gauge (i.e. treating $\mathcal{V}$ as the composition of the exponential map and a map from $\mathcal{M}_3$ to $\mathfrak{a}\oplus \mathfrak{n}_+$), we will only consider the subspace of $\mathrm G$ where the elements are decomposable in Iwasawa form. These elements are exactly the ones that preserve our coordinate patch. Hence we can consider the action of $\su$ in three different cases, one for each of the subgroups $\mathrm N,\mathrm A$ and $\mathrm K^*$ in the local Iwasawa decomposition.\footnote{A similar analysis was recently done for five-dimensional minimal supergravity which gives rise to a $\mathrm G_2$ $\sigma$-model when  this theory is reduced on two commuting Killing vectors \cite{Compere:2009zh}.} Our four-dimensional interpretation will differ in all of these cases. Solution generation in the case of Einstein-Maxwell theory has been considered also in~\cite{Neugebauer:1969wr,Kinnersley,Bossard:2009at}.

\subsection{Action of the nilpotent generators}

Let us begin with the analysis of the nilpotent group $\mathrm N$. As elements in $\mathrm N$ do not take us outside of the Borel gauge, the analysis of how the scalar fields change is simply done by multiplying $\mathcal{V}$ described by (\ref{eqn:CosetElement}) from the right by elements in the group $\mathrm N$ of nilpotent elements, i.e. if $n \in \mathrm N$, $\mathcal{V} \rightarrow \mathcal{V}' = \mathcal{V}n^{-1}$.

As described in Appendix \ref{app:restricted}, the Lie algebra $\mathfrak{n}_+$ of $\mathrm N$ is generated by the three nilpotent generators $e_4,e_5$ and $e_{4,5}$. Under the three corresponding nilpotent 1-parameter subgroups (with real parameters $\alpha,\beta$ and $\gamma$), the scalar fields given by (\ref{eqn:BPSscalars}) transform as follows; Under the group generated by $e_4$: 
\be \begin{split} 
\chi_e &\rightarrow \ \chi_e - \frac{1}{\sqrt{2}} \alpha\, ,\\
\psi &\rightarrow \ \psi - \alpha \chi_m .
\end{split}\ee
Under the group generated by $e_5$,
\be \begin{split} 
\chi_m &\rightarrow \ \chi_m - \frac{1}{\sqrt{2}} \beta \, ,\\
\psi &\rightarrow \  \psi + \beta \chi_e .
\end{split}\ee
Finally, under the group generated by $e_{4,5}$,
\begin{equation}
\psi \rightarrow \ \psi -  \frac{1}{\sqrt{2}}\gamma .
\ee
Looking at the dualisation relations (\ref{eqn:DualisingMaxwell}) and (\ref{eqn:DualisingEinstein}) we see that these transformations simply vanish when lifting the fields back to four dimensions. We can hence interpret the symmetry group $\mathrm N$ as appearing from realizing an inherent redundancy in the formulation of the three-dimensional theory, and is therefore not visible in four dimensions. Equivalently, the action of the nilpotent group $\mathrm N$ corresponds to gauge transformations.

\subsection{Action of the non-compact Cartan generator}

The action of the abelian group $\mathrm A$, with Lie algebra $\mathfrak{a}$, is generated by the non-compact Cartan generator $h_4 \in \mathfrak{p}^*$. The action of $\mathrm A$, parametrized by $d \in \mathbb{R}$ is
\be \begin{split} 
\phi &\rightarrow \  \phi-2d \, , \\
\chi_e& \rightarrow \ e^{d} \chi_e \, , \\
\chi_m& \rightarrow \  e^{d} \chi_m \, , \\
\psi &\rightarrow \  e^{2d} \psi\, .
\end{split}\ee
Lifting this transformation back into the four-dimensional metric and Maxwell potential, we see that it is just a coordinate transformation coming from a rescaling of the time and space coordinates $t \rightarrow e^d t$ and $r \rightarrow e^{-d} r$. The solution is therefore unchanged. 

\subsection{Action of $\mathrm K^*$}
\label{sec:ActionOnCharges}

We have now discussed from a physical perspective why the generators in $\mathrm A \mathrm N \subset \mathrm G$ act trivially on a given solution. By restricting to transformations that stay in our coordinate patch on $\mathrm G/\mathrm K^*$, what is left to consider is now the non-compact group $\mathrm K^*$. It turns out that it is $\mathrm K^*$ that realizes electromagnetic and gravitational duality. From the expression of the Noether charge (\ref{eqn:charges}), we see that $\mathrm K^*$ transforms non-trivially the set of conserved four-dimensional charges, and it is natural to ask precisely how this action is realized. This is done by extracting the transformed charges as the coefficients in front of the generators of $\mathfrak{p}^*$ just as in the expression (\ref{eqn:charges}). The algebra $\mathfrak{k}^*$ of $\mathrm{K}^*$ is generated by the elements $u$, and $t_i$, $i=1,2,3$, where the $t_i$'s generate an $\mathfrak{sl}(2,\mathbb{R})$, commuting with $u$. The definition of $\mathfrak{k}^*$ is described in Appendix \ref{app:k*}. Treating these four Lie algebra generators separately, as we did in the case of $\mathrm N$ above, we find that the 1-parameter subgroup generated by $u$, with parameter $a$ generates the transformation
\begin{equation}
\left(\begin{array}{c}m \\n \\q \\h\end{array}\right) \rightarrow \left(\begin{array}{cccc}\cos(a) & \sin(a) & 0 & 0 \\-\sin(a) & \cos(a) & 0 & 0 \\0 & 0 & \cos(a) & \sin(a) \\0 & 0 & -\sin(a) & \cos(a)\end{array}\right) \left(\begin{array}{c}m \\n \\q \\h\end{array}\right) \, ,
\end{equation}
under finite transformations generated by $t_1$ with parameter $b$,
\begin{equation}
\left(\begin{array}{c}m \\n \\q \\h\end{array}\right) \rightarrow \left(\begin{array}{cccc}\cos(b) & -\sin(b) & 0 & 0 \\\sin(b) & \cos(b) & 0 & 0 \\0 & 0 & \cos(b) & \sin(b) \\0 & 0 & -\sin(b) & \cos(b)\end{array}\right) \left(\begin{array}{c}m \\n \\q \\h\end{array}\right) \, ,
\end{equation}
under $t_2$ with parameter $c$,
\begin{equation}
\left(\begin{array}{c}m \\n \\q \\h\end{array}\right) \rightarrow \left(\begin{array}{cccc}\cosh(c) & 0 & -\sinh(c) & 0 \\0 & \cosh(c) & 0 & -\sinh(c) \\-\sinh(c) & 0 & \cosh(c) & 0 \\0 & -\sinh(c) & 0 & \cosh(c)\end{array}\right) \left(\begin{array}{c}m \\n \\q \\h\end{array}\right) \, ,
\end{equation}
and finally under $t_3$ with parameter $d$,
\begin{equation}
\left(\begin{array}{c}m \\n \\q \\h\end{array}\right) \rightarrow \left(\begin{array}{cccc}\cosh(d) & 0 & 0 &\sinh(d)  \\0 & \cosh(d) & -\sinh(d)  & 0\\ 0 & -\sinh(d)  & \cosh(d) & 0 \\\sinh(d) \ &0 & 0 & \cosh(d)\end{array}\right) \left(\begin{array}{c}m \\n \\q \\h\end{array}\right) .
\end{equation}

We see here that $\mathrm K^*$ realizes a linear representation $R$ on the charges, identified with $R= {\bf 2}_{1} \oplus {\bf 2}_{-1}$ (decomposed with respect to $\mathfrak{sl}(2,\mathbb{R})\oplus \mathfrak{u}(1)$, where the subscript indicates the charge under $\mathrm{U}(1)$), acting as two boosts and two rotations. In particular we see that $u + t_1= \tfrac{2}{3}\, h_5$ acts as a rotation of electric and magnetic charges. This will be in agreement with further discussion in Section \ref{sec:su21+++}, considering the commutation relations (see (\ref{comutenc45}))
\begin{equation}
[h_5,r^a] =3\, \tilde{r}_a \quad, \quad [h_5,\tilde{r}^a] =- 3\, r_a\, ,
\end{equation} 
and the identification in the dictionary (Table \ref{tab:Dictionary}) stating that the generators $r^a$ and $\tilde{r}^a$ correspond to the electric and magnetic parts of the Maxwell field. We can also see that $u-t_1$ acts as gravitational duality rotation (see for instance \cite{Bunster:2006rt,Argurio:2008zt}).

\subsection{Describing $\mathrm K^*$ as a subgroup of $\mathrm{SO}(2,2)$}
\label{sec:SO22Embedding}

From group theoretic considerations one can derive the above conclusions using rather general arguments. Let $\Delta^2 :\mathbb{R}^4 \rightarrow \mathbb{R} $ be the homogeneous quadratic form  defined  by
\begin{equation}
\Delta^2(v) = m^2+n^2-q^2-h^2,
\end{equation}
for $v=(m,n,h,q)\in\mathbb{R}^4$, and let 
\begin{equation}
B = \{ v \in \mathbb{R}^4 \backslash \{0\}; \Delta^2(v) = 0\}
\end{equation}
be the set of zeros of $\Delta^2$. We know from Section \ref{sec:StationarySolutions} that $B$ is isomorphic to the set of single centered BPS-solutions via the maps (\ref{eqn:BPSscalars}). The set $B$ is by definition preserved by the group $\mathrm S = \mathrm{SO}(2,2)$. The question of how $\mathrm K^*$ acts on the set of charges can then be transformed into the question of how $\mathrm K^*$ embeds into $\mathrm S$ as we know from the previous section that $\mathrm K^*$ preserves the BPS-condition. The Lie algebra isomorphism $\mathfrak{so}(2,2) \cong \mathfrak{sl}(2,\mathbb{R}) \oplus \mathfrak{sl}(2,\mathbb{R})$ induces the Lie group isomorphism $ \mathrm{SO}(2,2)_0 \cong \mathrm{SL}(2,\mathbb{R}) \times \mathrm{SL}(2,\mathbb{R})$. Here  $\mathrm{SO}(2,2)_0$ indicates the component connected to the identity. We also know that $ \mathfrak{sl}(2,\mathbb{R})$ contains two non-compact generators and one compact. Comparing with $\mathrm K^*$ whose Lie algebra we know contains two compact generators and two non-compact generators, forming $\mathfrak{k}^* = \mathfrak{sl}(2,\mathbb{R}) \oplus \mathbb{R}u$, the embedding $I : \mathrm K^* \hookrightarrow \mathrm S$ is therefore given by lifting the natural (up to automorphisms) differential $\mathrm{d}I_e : \mathfrak{k}^* \rightarrow \mathfrak{so}(2,2)$ at the identity $e\in\mathrm{K}^*$, mapping compact generators to compact generators. More concretely, if $b_{i=1,...6}$ are the generators of $\mathfrak{so}(2,2)$ (for definition of the algebra $\mathfrak{so}(2,2)$ see Appendix \ref{app:so22}),
\be \begin{split}
\mathrm{d}I_e(t_i)& = b_i \, ,\\
\mathrm{d}I_e(u) &= b_4 \, ,
\end{split}\ee
where $b_1,b_2$ and $b_3$ generate one $\mathfrak{sl}(2)$ summand in $\mathfrak{so}(2,2)$, and $b_4$ is the compact generator in the other. The normalization is not fixed, but is up to redefinition of the generators of the two Lie algebras. By looking at the action of $I(\mathrm K^*)$, we now see that $b_1-b_4$ generates an Ehlers $\mathrm{U}(1)$-group rotating $m,n$ into each other, $b_1+b_4$ generates a $\mathrm{U}(1)$ rotating $q,h$ and the non-compact $b_2$ and $b_3$ act as boosts. This is in complete agreement with the analysis in Section \ref{sec:ActionOnCharges} above.

\subsection{The space of BPS solutions}

Now as we know how $\su$, or more precisely, how $\mathrm K^*$ acts on the set of  single centered BPS-solutions we can ask the question about duality orbits. Namely, starting with one BPS-solution, can we generate all the others by acting with $\mathrm K^*$? If $\su$ is to be considered as a spectrum generating group \cite{Cremmer:1997xj}, this must clearly be the case. Here the fact that $\mathrm K^*$ is non-compact will be of crucial importance. In fact, we have the following result.

\begin{theorem}
\label{thm:modulispace}
The group $\mathrm K^*$ acts transitively on the set of single centered BPS-solutions, so that $B \cong \mathrm K^*/\mathbb{R}$.
\end{theorem}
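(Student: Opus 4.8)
\emph{Strategy.} By the identification \eqref{eqn:BPSscalars} the set $B$ of single-centered BPS solutions is in bijection with the punctured null cone $\{v\in\mathbb{R}^4\setminus\{0\}:\Delta^2(v)=0\}$, and by Section~\ref{sec:ActionOnCharges} the group $\mathrm K^*$ acts on the charge vector $v=(m,n,q,h)$ through the explicit matrices generated by $u,t_1,t_2,t_3$, all of which preserve $\Delta^2$. I would prove transitivity by carrying an arbitrary $v\in B$ to a single fixed reference point, and then read off the stabilizer. A convenient reference is the extremal Reissner--Nordstr\"om charge vector $v_0=(1,0,1,0)$ (that is, $n=h=0$, $m=q=1$), which indeed satisfies $\Delta^2(v_0)=0$.

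\emph{Transitivity.} For $v\in B$ write $s=\sqrt{m^2+n^2}=\sqrt{q^2+h^2}$; the two radii coincide by the BPS condition and $s>0$ since $v\neq0$. First, the compact generators: $u$ rotates the planes $(m,n)$ and $(q,h)$ in the same sense while $t_1$ rotates them in opposite senses, so the subgroup they generate realizes $\mathrm{SO}(2)\times\mathrm{SO}(2)$ acting independently on the two planes. Using it I align both planes with the first axis, bringing $v$ to the form $(s,0,s,0)$. Second, the non-compact boost $t_2$ sends $(s,0,s,0)\mapsto(e^{-c}s,0,e^{-c}s,0)$, and choosing $c=\ln s$ lands exactly on $v_0$. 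This places every $v\in B$ in the $\mathrm K^*$-orbit of $v_0$, proving transitivity. The step where non-compactness of $\mathrm K^*$ is essential is precisely this boost: the compact part alone preserves $s$ and merely foliates $B$ into the two-tori $\{s=\text{const}\}$, so the emphasis in the text on $\mathrm K^*$ being non-compact is exactly what makes a single orbit fill all of $B$.

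\emph{Stabilizer and coset.} To obtain $B\cong\mathrm K^*/\mathbb{R}$ I would compute the stabilizer of $v_0$. Solving $Xv_0=0$ for a general element $X=\alpha u+\beta t_1+\gamma t_2+\delta t_3\in\mathfrak{k}^*$ forces $\alpha=\gamma=0$ and $\beta=\delta$, leaving the one-dimensional stabilizer algebra spanned by $t_1+t_3$. Restricted to the $\mathrm{SL}(2,\mathbb{R})$-invariant two-plane through $v_0$ this generator is nilpotent, so it exponentiates to a one-parameter \emph{unipotent} subgroup isomorphic to $(\mathbb{R},+)$. The orbit--stabilizer theorem for the smooth transitive action then gives $B\cong\mathrm K^*/\mathbb{R}$, consistent with the dimension count $\dim B=\dim\mathrm K^*-1=4-1=3$.

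\emph{Main obstacle and caveat.} The genuine content is transitivity, and the delicate point is that it relies on the boost, i.e.\ on $\mathrm K^*$ being non-compact; I would check that the alignment step covers all of $B$, including NUT-charged solutions with $n\neq0$, which it does since each plane of fixed radius can be rotated to the positive axis. A second point to address is that the action on the charges is not faithful: the element acting as $(-\mathbf 1)$ on the $\mathrm{SL}(2,\mathbb{R})$ factor together with the half-turn in $\mathrm U(1)$ lies in the kernel, so strictly the full stabilizer is $\mathbb{R}\times\mathbb{Z}_2$ and the clean statement $B\cong\mathrm K^*/\mathbb{R}$ should be read with the effectively acting quotient $\mathrm K^*/\mathbb{Z}_2$, i.e.\ the image of $\mathrm K^*$ in $\mathrm{SO}(2,2)$ of Section~\ref{sec:SO22Embedding}. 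As an independent check one may use that embedding: with $\mathbb{R}^{2,2}\cong\mathrm{Mat}_2(\mathbb{R})$ and $\Delta^2=\det$, $B$ becomes the rank-one matrices $uw^{\mathsf T}$, the group acts as $uw^{\mathsf T}\mapsto(g_L u)(g_R w)^{\mathsf T}$ with $g_L\in\mathrm{SL}(2,\mathbb{R})$ and $g_R\in\mathrm{SO}(2)$, and transitivity then follows immediately from $\mathrm{SL}(2,\mathbb{R})$ acting transitively on $\mathbb{R}^2\setminus\{0\}$ once the norm of $w$ is absorbed by the scaling freedom $u\mapsto\lambda u$, $w\mapsto\lambda^{-1}w$.
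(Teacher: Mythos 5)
Your proof is correct, and it runs on exactly the same mechanism as the paper's own argument: the compact torus $\mathrm{U}(1)\times\mathrm{U}(1)\subset\mathrm K^*$ generated by $u$ and $t_1$ takes care of the angular directions of the null cone, a single non-compact boost fixes the common radius $s=\sqrt{m^2+n^2}=\sqrt{q^2+h^2}$, and the stabilizer of the reference point is one-dimensional, giving $B\cong\mathrm K^*/\mathbb{R}$. The difference is the route. The paper first passes to the projectivization $\mathrm{P}B$, identifies it with $\mathrm{P}\mathbb{R}^1\times\mathrm{P}\mathbb{R}^1\cong S^1\times S^1$ through the Segre-type bijection $F$, proves transitivity there, and then restores the scale on the diagonal representative $(k,k,k,k)$ with the boost $\exp(-\lambda b_2)$, quoting $\exp\bigl(c\,(b_1+b_3)\bigr)\cong\mathbb{R}$ as the stabilizer. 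You work directly in $\mathbb{R}^4$: the explicit matrices of Section~\ref{sec:ActionOnCharges} show that $u,t_1$ generate independent rotations of the $(m,n)$ and $(q,h)$ planes, which align any $v\in B$ to $(s,0,s,0)$, and the $t_2$ boost with $c=\ln s$ lands on $(1,0,1,0)$; this is more elementary (no projective geometry is needed) and makes the role of non-compactness equally transparent. Your version also supplies two points the paper leaves implicit: an explicit Lie-algebra computation of the stabilizer, correctly identified as the unipotent one-parameter subgroup $\exp\mathbb{R}(t_1+t_3)$ (which, under the embedding $\mathrm{d}I_e(t_i)=b_i$ of Section~\ref{sec:SO22Embedding}, is conjugate to the paper's $\exp c\,(b_1+b_3)$), and the observation that the charge representation of $\mathrm K^*=\mathrm{SL}(2,\mathbb{R})\times\mathrm{U}(1)$ has a $\mathbb{Z}_2$ kernel, e.g. $\exp(\pi t_1)\exp(\pi u)$ acts as $(-\mathbf 1)(-\mathbf 1)=\mathbf 1$, so that strictly the stabilizer is $\mathbb{R}\times\mathbb{Z}_2$ and the clean coset statement holds for the image of $\mathrm K^*$ in $\mathrm{SO}(2,2)$; this refines rather than contradicts Theorem~\ref{thm:modulispace}. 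Finally, your closing check via $\mathbb{R}^{2,2}\cong\mathrm{Mat}_2(\mathbb{R})$ with $\Delta^2=\det$ is the paper's map $F$ in disguise — the rank-one matrices $uw^{\mathsf T}$ are exactly the image of $F$ — so the two arguments are, at bottom, the same proof written in different coordinates.
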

\begin{proof}
Consider the set $B$. As it is defined by the homogeneous form $\Delta^2$, we can consider the projective descendant of $B$, namely $\mathrm{P}B =  \{ v \in \mathrm{P}\mathbb{R}^3 ; \Delta^2(v) = 0\}$, where $\mathrm{P}\mathbb{R}^3$ is the three-dimensional projective space. In analogy with the isomorphism $\mathrm{SO}(2,2)_0 \cong \mathrm{SL}(2,\mathbb{R}) \times \mathrm{SL}(2,\mathbb{R})$, we get the isomorphism $\mathrm{P}B \cong \mathrm{P}\mathbb{R}^1\times  \mathrm{P}\mathbb{R}^1$ via a bijection $F :  \mathrm{P}\mathbb{R}^1\times  \mathrm{P}\mathbb{R}^1 \rightarrow \mathrm{P}B$ given by the expression
\begin{equation}
F([x_0,x_1], [y_0,y_1]) = [x_0 y_0+x_1 y_1, y_0x_1-y_1x_0,x_0y_0-x_1y_1, x_0y_1+x_1y_0] .
\end{equation}
The action of $I(\mathrm K^*)$ on $B$ descends to an action on $\mathrm{P}B$, and hence to an action on $\mathrm{P}\mathbb{R}^1\times  \mathrm{P}\mathbb{R}^1$ by $F$. Furthermore, $\mathrm{P}\mathbb{R}^1 \cong S^1$ via the map $f([x_0,x_1]) = \arctan(x_0/x_1)$ (schematically), so that we have a diffeomorphism $\mathrm{P}B \cong S^1 \times S^1$. In fact, the Lie subgroup $\mathrm{U}(1) \times \mathrm{U}(1) \subset \mathrm K^*$, generated by the subalgebra $\mathbb{R} t_1 \oplus \mathbb{R} u$, acts transitively on these two circles by complex multiplication. We conclude that $\mathrm K^*$ acts transitively on $\mathrm{P}B$. Furthermore $\mathrm{U}(1) \times \mathrm{U}(1)\subset \mathrm K^*$ contains an element acting as $v \rightarrow -v$ for $v \in B$. Let us now turn to the action of $\mathrm K^*$ on $B$. Due to the above analysis, it is sufficient to consider charge vectors with all charges positive and equal. As $\mathrm K^*$ contains non-compact generators it is now in fact possible to reach all these charge vectors, being given one. The explicit 1-parameter Lie subgroup is $S(\lambda) = \exp\big(-\lambda b_2\big)$, acting so that $(k,k,k,k) \mapsto  \exp{\lambda} (k,k,k,k)$. This proves our assertion, noting that the 1-parameter subgroup stabilizing a diagonal vector  $(k,k,k,k)$ is $\mathrm{Stab}(c) \equiv \exp{c (b_1+b_3)} \cong \mathbb{R}$.
\end{proof}

If we consider this proof from the physical point of view, it may seem surprising that $\mathrm K^*$ is transitive on $\mathrm{P}B$ by only using $b_1$ and $b_4$ as these do not mix gravitational and electromagnetic degrees of freedom. This is in fact true as for four charges $m,n,q,h$ to fulfill the BPS-condition we need both non-zero gravitational and non-zero electromagnetic charges and to generate new solutions we can treat these two sectors separately.  Furthermore, we can compare the result of Theorem \ref{thm:modulispace} with the expression for the $1/2$-BPS strata in \cite{Bossard:2009at}, (equation (5.5)) and see that the two results are in full agreement.

\subsection{The quantum moduli space and string theory}

Our analysis so far has been performed purely at the classical level. In the full quantum theory it is expected that the classical moduli space  is affected by quantum corrections. These can be both of perturbative and of non-perturbative origin and they are not understood generally. The only exceptions are cases where there are additional duality symmetries that constrain them.

In general, electric and magnetic charges are subject to quantization in the sense of Dirac. For example, the $28+28$ electric and magnetic charges in type II string theory on a six-torus break the  classical continuous $\mathrm E_{7(7)}$ symmetry group to the discrete subgroup $\mathrm E_{7(7)}(\mbb{Z})$ \cite{HullTownsend}
\be
\mathrm E_{7(7)}(\mbb{Z})=\mathrm E_{7(7)}(\mbb{R})\cap\mathrm{Sp}(56, \mbb{Z})\,,
\ee
where $\mathrm{Sp}(56, \mbb{Z})$ is the symmetry group of the 56-dimensional symplectic lattice of electric and magnetic charges, associated with the 28 abelian vector fields in $D=4$. 

It has furthermore been speculated that after further reduction of this maximal supergravity theory on a space-like circle $S^1$ to $D=3$, the duality group should be enhanced to some discrete subgroup $\mathrm G(\mbb{Z})$ of the classical hidden symmetry group $\mathrm E_{8(8)}(\mbb{R})$ \cite{HullTownsend}. However, in three dimensions it is by no means clear how to define the group $\mathrm G(\mbb{Z})$, since there are no vector fields whose associated charge lattice provides a natural integral structure. Moreover, in $D=3$ one is forced to take into account gravitational effects since the moduli space includes components of the four-dimensional metric.
It was recently argued that the three-dimensional duality groups that arise in this way do not act nicely on the gravitational part of the moduli space, and there is therefore no natural candidate for a discrete subgroup $\mathrm G(\mbb{Z})$ which should be preserved in the quantum theory in $D=3$~\cite{Bossard:2009at}.

Returning to the ${\cal N}=2$ theory discussed in this paper, the situation is not very different at face value. However, we propose that the $c$-map \cite{CecottiFerraraGirardello,Behrndt,VandorenVroome} improves the situation. The $c$-map can be thought of as a type of $T$-duality in $D=3$, where it exchanges the moduli space obtained from the reduction of the Einstein-Maxwell sector with that obtained by the reduction of a universal hypermultiplet sector that can be added to the ${\cal N}=2$ theory in $D=4$ and that is present in any Calabi-Yau reduction of type IIA superstring theory \cite{CecottiFerraraGirardello}.\footnote{We ignore, i.e. set to zero, the effects of the other hyper- and vectormultiplets that arise in the reduction.} The point here is that the universal hypermultiplet in $D=4$ is also described classically by a coset space $\mathrm{SU}(2,1)/\mathrm{SU}(2)\times\mathrm{U}(1)$. The quantum corrections to this universal hypermultiplet moduli space are not fully understood, but recently~\cite{UHPaper} it has been proposed that a promising candidate for the discrete group $\mathrm{G}(\mathbb{Z})$ in this case is the so-called the Picard modular group $\mathrm{SU}(2,1;\mathbb{Z}[i])$, whose generators can be given an intuitive physical interpretation in terms of Peccei-Quinn symmetries, electric-magnetic duality and S-duality. Assuming this to be the correct quantum duality group of the universal hypermultiplet and the validity of the $c$-map at the quantum level would imply that the correct moduli space and quantum symmetry group of Einstein-Maxwell theory with one Killing vector is also encoded in the Picard group. A further verification of these claims is outside the scope of this paper. See \cite{UHPaper} for more detailed discussions of these issues.

 \setcounter{equation}{0}
 
\section{On $\asuppp$} \label{sec:su21+++}

So far we have analyzed the role of the duality group SU(2,1) for understanding BPS solutions in $\mathcal{N}=2$ supergravity in four dimensions. This was done by performing a dimensional reduction to three dimensions, where the Lagrangian corresponds to Einstein gravity coupled to scalars parametrizing a Riemannian coset space $\mathcal{C}$ in the case of space-like reduction, and a pseudo-Riemannian coset space $\mathcal C^*$ in the case of time-like reduction.\\
 
Motivated by this, it is interesting to assume that the Einstein-Maxwell theory exhibits a hidden nonlinearly realized Kac-Moody symmetry group SU(2,1)$^{+++}$, formally arising in the reduction to zero dimensions~\cite{Julia:1982gx}, but as a conjectured symmetry of the full model~\cite{West:2001as}. The associated Kac-Moody algebra $\mathfrak{su}(2,1)^{+++}$ can be obtained by adding three nodes $\alpha_1, \alpha_2$ and $\alpha_3$ to the Tits-Satake diagram of $\asu$ displayed in Figure \ref{fig:su21}. The Tits-Satake diagram of  $\mathfrak{su}(2,1)^{+++}$ is given in Figure \ref{fig:su21+++}b.\\

In this section, we will give some basic properties of $\asuppp$, and explain the construction of a non-linear $\s$-model on the infinite-dimensional coset space $\suppp/\mathrm K^{*\,+++}$, generalizing the finite-dimensional $\sigma$-model on G/K$^{*}$ considered in Section \ref{SU(2,1)SigmaModel}. Here $\mathrm K^{*\,+++}$ is the subgroup of $\suppp$ consisting of those generators which are pointwise fixed by the temporal involution $\Omega_1$, defined such that we may identify the index $1$ by a time coordinate. To this end we shall slice up the adjoint representation of $\asuppp$ in a level decomposition, suitable to reveal the field content of the bosonic part of pure $\mathcal{N}=2$ supergravity. We will also define the action of  a general temporal involution $\Omega_i$ on the generators of $\asuppp$. 

In the same way as $\mathfrak{su}(2,1)$ is a real form of the complex Lie algebra $A_2=\mathfrak{sl}(3,\mathbb{C})$, the Kac-Moody algebra $\mathfrak{su}(2,1)^{+++}$ is a real form of the complex algebra $A_2^{+++}$. To construct $\mathfrak{su}(2,1)^{+++}$ it is therefore illuminating to first consider some relevant properties of $A_2^{+++}$. 
\subsection{Generalities on $A_2^{+++}$}
\label{A2+++}
The rank 5 Kac-Moody algebra $A_2^{+++}$ can be obtained by adjoining three extra nodes to the Dynkin diagram of the finite-dimensional Lie algebra $A_2$. The resulting Dynkin diagram is displayed in Figure \ref{fig:su21+++}a, where the nodes $\alpha_4$ and $\alpha_5$ correspond to the underlying $A_2$-algebra, while $\alpha_1, \alpha_2$ and $\alpha_3$ provide the extension to $A_2^{+++}$. From the Dynkin diagram we may construct the associated Kac-Moody algebra by introducing five triples of generators $\{E_i, F_i, H_i\}, \, i=1,\dots, 5,$ known as Chevalley generators, such that each triple generate an $\mathfrak{sl}_{(i)}(2,\mathbb{C})$-subalgebra corresponding to the nodes $1, \dots, 5$ in the Dynkin diagram. The Chevalley generators are subject to the commutation relations (no summation on repeated indices)
\begin{eqnarray}
{} [H_i, E_j]= A_{ij} E_j, &  & [H_i, F_j]=-A_{ij} F_j,
 \nn  \\
{} [E_i, F_j]=\delta_{ij} H_j, &  & [H_i, H_j]=0,
\end{eqnarray}
where $A_{ij}$ is the Cartan matrix encoding the structure of the Dynkin diagram in Figure \ref{fig:su21+++}a. The sets $\{E_i\}$  and $\{F_i\}$ correspond, respectively, to positive and negative step-operators which generate the nilpotent subspaces $\mathfrak{\tilde{n}}^{+++}_+$ and $\mathfrak{\tilde{n}}_-^{+++}$, modulo the so-called Serre relations (see \cite{Kac:book}). In addition, the set $\{H_i\}$ generates the Cartan subalgebra $\tilde{\mathfrak{h}}^{+++}$, providing the full Kac-Moody algebra with a triangular structure (direct sums of vector spaces) \footnote{ The different subspaces of  $A_2^{+++}$ are denoted with a $\ \tilde{}\ $  to distinguish them from the different subspace of $\mathfrak{su}(2,1)^{+++}$ to be introduced below.}
\begin{equation}
A_2^{+++}=   \mathfrak{\tilde{n}}_-^{+++}\oplus \tilde{\mathfrak{h}}^{+++}\oplus \mathfrak{\tilde{n}}^{+++}_+.
\end{equation}
In the following subsection, we shall use these properties of $A_2^{+++}$ to define the non-split real form $\mathfrak{su}(2,1)^{+++}$ from the Tits-Satake diagram in Figure \ref{fig:su21+++}b.

\subsection{Definition of $\mathfrak{su}(2,1)^{+++}$}
The Tits-Satake diagram in Figure \ref{fig:su21+++}b differs from the standard Dynkin diagram of $A_2^{+++}$ (see Figure \ref{fig:su21+++}a) by the extra decoration afforded by the double arrow connecting the nodes $\alpha_4$ and $\alpha_5$.\footnote{We note that the theory of real forms of Kac-Moody algebras has one important difference to the theory of real forms for finite-dimensional algebras. Since not any two Borel subalgebras are conjugate in the Kac-Moody case there are different classes of real forms. Indeed, the standard upper triangular and lower triangular Borel subalgebras, $\mathfrak{b}_+$ and $\mathfrak{b}_-$, cannot be conjugated into one another~\cite{PetersonKac} and depending on whether the involution fixing the real form maps $\mathfrak{b}_+\to\mathfrak{b}_+$ or $\mathfrak{b}_+\to\mathfrak{b}_-$ the real form is called almost split or almost compact~\cite{Rousseau1989,Rousseau1995,BenMessaoud}. Almost split algebras are under better control and the fact that here we have an involution, given by the arrow in the diagram in Figure~\ref{fig:su21+++}b, acting only on a finite-dimensional subalgebra ensures that we are constructing an almost split real form.}   This implies that the $A_2$-part of the diagram is transformed into the non-split real form $\asu$ of $A_2$ such that on the simple roots one has
\be \label{eqn:sigmaa4a5}
\sigma (\alpha_4)= \alpha_5, \qquad \sigma (\alpha_5)= \alpha_4\, ,
\ee
where $\sigma$ is the conjugation which fixes the real form. More details on $\asu$ can be found in Appendix \ref{app:su21}.
\begin{figure}[t]
\begin{center}
\begin{pgfpicture}{0cm}{-2cm}{1cm}{2.5cm}

\pgfnodecircle{Node12}[stroke]{\pgfxy(-6,0.5)}{0.25cm}
\pgfnodecircle{Node22}[stroke]
{\pgfrelative{\pgfxy(1.5,0)}{\pgfnodecenter{Node12}}}{0.25cm}
\pgfnodecircle{Node32}[stroke]
{\pgfrelative{\pgfxy(1.5,0)}{\pgfnodecenter{Node22}}}{0.25cm}
\pgfnodecircle{Node42}[stroke]
{\pgfrelative{\pgfxy(1.5,1)}{\pgfnodecenter{Node32}}}{0.25cm}
\pgfnodecircle{Node52}[stroke]
{\pgfrelative{\pgfxy(1.5,-1)}{\pgfnodecenter{Node32}}}{0.25cm}

\pgfnodebox{Node62}[virtual]{\pgfxy(-6,0)}{$\alpha_{1}$}{2pt}{2pt}
\pgfnodebox{Node72}[virtual]{\pgfxy(-4.5,0)}{$\alpha_{2}$}{2pt}{2pt}
\pgfnodebox{Node82}[virtual]{\pgfxy(-3,0)}{$\alpha_{3}$}{2pt}{2pt}
\pgfnodebox{Node92}[virtual]{\pgfxy(-1.5,-1)}{$\alpha_{4}$}{2pt}{2pt}
\pgfnodebox{Node102}[virtual]{\pgfxy(-1.5,2)}{$\alpha_{5}$}{2pt}{2pt}

\pgfnodeconnline{Node12}{Node22} \pgfnodeconnline{Node22}{Node32}
\pgfnodeconnline{Node32}{Node42} \pgfnodeconnline{Node32}{Node52} 
\pgfnodeconnline{Node42}{Node52}

\pgfnodebox{Node63}[virtual]{\pgfxy(-4,-1.5)}{\textbf{a}}{2pt}{2pt}

\pgfnodebox{Node64}[virtual]{\pgfxy(4,-1.5)}{\textbf{b}}{2pt}{2pt}

\pgfnodecircle{Node1}[stroke]{\pgfxy(2,0.5)}{0.25cm}
\pgfnodecircle{Node2}[stroke]
{\pgfrelative{\pgfxy(1.5,0)}{\pgfnodecenter{Node1}}}{0.25cm}
\pgfnodecircle{Node3}[stroke]
{\pgfrelative{\pgfxy(1.5,0)}{\pgfnodecenter{Node2}}}{0.25cm}
\pgfnodecircle{Node4}[stroke]
{\pgfrelative{\pgfxy(1.5,1)}{\pgfnodecenter{Node3}}}{0.25cm}
\pgfnodecircle{Node5}[stroke]
{\pgfrelative{\pgfxy(1.5,-1)}{\pgfnodecenter{Node3}}}{0.25cm}

\pgfnodebox{Node6}[virtual]{\pgfxy(2,0)}{$\alpha_{1}$}{2pt}{2pt}
\pgfnodebox{Node7}[virtual]{\pgfxy(3.5,0)}{$\alpha_{2}$}{2pt}{2pt}
\pgfnodebox{Node8}[virtual]{\pgfxy(5,0)}{$\alpha_{3}$}{2pt}{2pt}
\pgfnodebox{Node9}[virtual]{\pgfxy(6.5,-1)}{$\alpha_{4}$}{2pt}{2pt}
\pgfnodebox{Node10}[virtual]{\pgfxy(6.5,2)}{$\alpha_{5}$}{2pt}{2pt}

\pgfnodeconnline{Node1}{Node2} \pgfnodeconnline{Node2}{Node3}
\pgfnodeconnline{Node3}{Node4} \pgfnodeconnline{Node3}{Node5} 
\pgfnodeconnline{Node4}{Node5}
\pgfsetstartarrow{\pgfarrowtriangle{4pt}}
\pgfsetendarrow{\pgfarrowtriangle{4pt}}
\pgfnodesetsepend{5pt}
\pgfnodesetsepstart{5pt}
\pgfnodeconncurve{Node4}{Node5}{-10}{10}{1cm}{1cm}
\end{pgfpicture}
\caption {  \small \textbf{a}: {\sl Dynkin diagram of $A_2^{+++}$. }\textbf{b}: {\sl Tits-Satake diagram of $\asuppp$.} }
\label{fig:su21+++}
\end{center}
\end{figure}

Analogously to the construction of $A_2^{+++}$ in Section \ref{A2+++}, to construct $\asuppp$ we  extend the Tits-Satake diagram of $\asu$ (see Figure \ref{fig:su21}) with three non-compact simple roots  $\alpha_j$ $(j=1, \ldots, 3)$ such that 
\be \label{eqn:sigmalphanc}
\s (\alpha_j)= \alpha_j \quad(j=1,2,3)\, .
\ee

The action of $\s$
can be extended from the space of roots to the entire algebra. For $\sigma$ this yields
\be\label{eqn:sigmasu21+++} \begin{split}
\begin{aligned}
\s (H_j)&=H_j, \ &\s (E_j)&=E_j, \ &\s (F_j)&=F_j\,, \\
\s (H_4)&=H_5, \ &\s (E_4)&=E_5, \ &\s (F_4)&=F_5\,, \\
\s (H_5)&=H_4, \ &\s (E_5)&=E_4, \ &\s (F_5)&=F_4\,, 
\end{aligned} \end{split}
\ee
where $\{H_i, E_i, F_i\}$ are the Chevalley generators of $A_2^{+++}$ introduced in Section \ref{A2+++}.
The generators of $\asuppp$ then correspond to the subset of $A_2^{+++}$-generators left invariant under $\sigma$. 
They can be written
 in terms of the Chevalley generators of $A_2^{+++}$ as follows
\be \label{eqn:su21+++Chev}
\begin{split}
\begin{aligned}
e_1&= E_1, \qquad&&f_1=F_1, \qquad  &&h_1= H_1\, ,\\
e_2&= E_2, \qquad&&f_2=F_2,  \qquad&&h_2= H_2\, ,\\
e_3&= E_3, \qquad &&f_3=F_3, \qquad&&h_3= H_3\, ,\\
e_4&= E_4 + E_5, \qquad &&f_4=F_4 +F_5,\qquad &&h_4= H_4+ H_5\, ,\\
e_5&=i\,( E_4 - E_5), \qquad &&f_5=i\,(F_4 -F_5), \qquad &&h_5= i \,(H_4- H_5)\, .
\end{aligned}
\end{split}
\ee
We stress that for these generators there is no set of standard Chevalley--Serre relations defining the commutators between these elements.

The involution $\theta$ that fixes the maximal compact subalgebra $\mathfrak{k}^{+++}$ of $\asuppp$ is defined by

\be \begin{split}
\begin{aligned}
\label{eqn:thetasu21+++}
\theta (H_j)&=- H_j \ &\theta (E_j)&=- F_j, \ & \theta (F_j)&=- E_j\,,\\
\theta (H_4)&=- H_5 \ &\theta (E_4)&=- F_5, \ &\theta (F_4)&=- E_5\,,\\
\theta (H_5)&=- H_4 \ &\theta (E_5)&=- F_4, \ & \theta (F_5)&=- E_4\,.
\end{aligned} \end{split}
\ee

The Cartan subalgebra of $\asuppp$ is given by
\be
\mathfrak{h}^{+++}= \bigoplus _{i=1}^{5} \mathbb{R} h_i,
\ee
of which $h_1, \ldots, h_4$ are non-compact, while $h_5$ is compact. The generators $h_1, \ldots, h_4$ are diagonalizable over the real numbers, and generate the non-compact part $\mathfrak{a}^{+++}$ of the full Cartan subalgebra $\mathfrak{h}^{+++}$. 

Recall from Section \ref{SU(2,1)SigmaModel} that the construction of the $\s$-model on the coset space $\suppp/\mathrm K^{*\, +++} $ will only involve the non-compact Cartan generators by virtue of the algebraic Iwasawa decomposition 
\be \label{Iwasuppp}
\asuppp = \mathfrak{k}^{*\, +++} \oplus \mathfrak{a}^{+++} \oplus \mathfrak{n}^{+++},
\ee
where $\mathfrak{k}^{* \, +++}$ is the non-compact subalgebra of $\asuppp$ corresponding to the group $\mathrm K^{*\, +++}$, and  $\mathfrak{n}^{+++}$ is the nilpotent subalgebra generated by the set $\{ e_i \}$.

\subsection{Level decomposition} \label{sec:levdecompo}
In Section \ref{sec:CosmologicalModel}, we will give the correspondence between the field content of the bosonic part of pure $\mathcal{N}=2$ supergravity and the infinite-dimensional algebra $\asupp$. To this end, we will perform a decomposition of the adjoint representation of $\asuppp$ into representations of an $\mathfrak{sl}(4, \mathbb{R})$ subalgebra defined by the nodes $\alpha_1, \alpha_2$ and $\alpha_3$ in Figure \ref{fig:su21+++}b.  All step operators may then be written as irreducible tensors of the $\mathfrak{sl}(4, \mathbb{R})$ subalgebra. Their symmetry properties are fixed by the Young tableaux describing the irreducible representations appearing at a given level.
\subsubsection{Level decomposition of $A_2^{+++}$}

In order to understand the level decomposition of $\asuppp$, we must first consider the level decomposition of the complex algebra $A_2^{+++}$ under a $A_3\cong \mathfrak{sl}(4,\mathbb{R})$ subalgebra. This level decomposition of $A_2^{+++}$  up to level $\ell= (\ell_1,\ell_2)= (2,2)$ can be obtained for example from the SimpLie program \cite{SimpLie} and  it is shown in Table \ref{tab:levdeca2} . The levels $\ell_1$ and $\ell_2$ are respectively associated to the roots $\alpha_4$ and $\alpha_5$ in Figure \ref{fig:su21+++}a. This level decomposition will induce a grading of $A_2^{+++}$ into an infinite set of finite-dimensional subspace $\mathfrak{g}^{+++}_{(\ell_1, \ell_2)}$ such that
\be
A_2^{+++}= \bigoplus_{(\ell_1,\ell_2)} \mathfrak{g}^{+++}_{(\ell_1, \ell_2)}\, ,
\ee 
where the levels $\ell_1$ and $\ell_2$ are either both non-positive or both non-negative.\\

 At level $\ell=(0,0)$, we have a $\mathfrak{gl}(4, \mathbb{R})= \mathfrak{sl}(4,\mathbb{R}) \oplus\, \mathbb{R}$ algebra generated by $K^{a}_{\ b}$ $(a, b = 1, \ldots, 4)$, as well as an extra scalar generator $T$ which enlarges the  $\mathfrak{gl}(4, \mathbb{R})$ algebra by the addition of a $\mbb R$-factor. The commutation relations at this level are
\be \begin{split}\label{eqn:kab}
\left[K^{a}_{\ b}, K^{c}_{\ d}\right] &= \delta^c_b\,  K^{a}_{\ d} - \delta^a_d\,  K^{c}_{\ b}\, ,\\
\left[T, K^{a}_{\ b}\right]&=0 \, ,
\end{split}
\ee
and the bilinear forms  reads
\be \label{eqn:bilinearzero}
(K^{a}_{\ b} |K^{c}_{\ d}) = \delta^a_d \delta^c_b- \delta^a_b\delta^c_d, \ 
(T|T)= \frac{2}{9}  , \ (T | K^{a}_{\ b})=0\, .
\ee
The positive level generators are obtained through multiple commutators between the generators $R^a$ and $\tilde{R}^a$ on levels $(1,0)$ and $(0,1)$ respectively. They transforms as $\mathfrak{gl}(4, \mathbb{R})$ tensors in the obvious way. The level $(1,1)$ generators $R^{ab}$ and $S^{ab}$ are obtained through the commutator
\be
\left[ R^a, \tilde{R}^b\right]=  R^{ab} + S^{ab} \,  ,
\ee
where the individual projections are:
\be\label{eqn:rabsab}
S^{ab}= \left[ R^{ ( a}, \tilde{R}^{b ) }\right] , \  R^{ab}= \left[ R^{ [ a}, \tilde{R}^{b ] }\right] \  .
\ee
The Chevalley generators of $A_2^{+++}$ and its relevant commutators and bilinear forms up to level $(1,1)$ are given in Appendix \ref{a2+++}. Negative step operators are defined with lower indices such that the bilinear form evaluated on a positive step operator and its corresponding negative step operator is positive, e.g. $(R^a|R_b)=\delta^a_b$.

\begin{table}[h]
\begin{center}
\begin{tabular}{|c|c|c|}
\hline
$(\ell_1,\ell_2)$ &$ \mathfrak{sl}(4,\mathbb{R})$ Dynkin labels& Generator of $A_2^{+++}$\\
\hline \hline
$(0,0)$ &$[ 1,0,1] \oplus [ 0,0,0] $ & $K^a_{\ b}$\\
$(0,0)$ &$[ 0,0,0]  $ & $T$\\
$(1,0)$ &$[ 0,0,1]  $ & $R^{\, a}$\\
$(0,1)$ &$[ 0,0,1]  $ & $\tilde{R}^{\, a}$\\
$(1,1)$ &$[ 0,0,2]  $ & $S^{\, s_{1} s_{2}}$\\
$(1,1)$ &$[ 0,1,0]  $ & $R^{\,a_{1} a_{2}}$\\
$(2,1)$ &$[ 0,1,1]  $ & $R^{\, a_{0}|a_{1} a_{2}}$\\
$(1,2)$ &$[ 0,1,1]  $ & $\tilde{R}^{\, a_{0}|a_{1} a_{2}}$\\
$(2,2)$ &$[ 1,0,1]  $ & $R^{\, a_{0}|a_{1} a_{2} a_{3}}$\\
$(2,2)$ &$[ 0,2,0]  $ & $R^{\, a_{1} a_{2} | a_{3} a_{4}}$\\
$(2,2)$ &$[ 0,1,2]  $ & $R^{\,s_{1} s_{2} |a_{3} a_{4} }$\\
$\vdots$& $\vdots$ &$\vdots$\\
\hline
\end{tabular}
\caption{\sl \small Level decomposition of $A_2^{+++}$ under $\mathfrak{sl}(4,\mathbb{R})$ up to level $(2,2)$. The  indices $a_{i}$ are antisymmetric  while the indices $s_{i}$ are symmetric. Note that the generators from the level $(2,1)$ with mixed Young symmetries are subject to constraints.  }
\label{tab:levdeca2}
\end{center}
\end{table}

\subsubsection{Level decomposition of $\asuppp$}
We shall now apply the construction of $\asuppp$ to the level decomposition of $A_2^{+++}$. In this context, we define the level $L$ such that $L=\ell_1 +\ell_2$ and such that the grading of the $\asuppp$ algebra is written as
\be
\asuppp= \bigoplus_L \mathfrak{g}^{+++}_{L}.
\ee

 At level zero, we have the $\mathfrak{gl}(4, \mathbb{R})$-subalgebra associated to the nodes $\alpha_1, \alpha_2$ and $\alpha_3$. These nodes are non-compact and hence are, as we have seen in (\ref{eqn:sigmalphanc}), invariant under $\sigma$. Thus, the $\mathfrak{gl}(4,\mathbb{R})$ part at $L=0$ is the same as for $A_2^{+++}$. The extra Cartan generators associated to $\alpha_4$ and $\alpha_5$ are however affected by the conjugation $\sigma$. Using (\ref{eqn:su21+++Chev}) and (\ref{eqn:chevbis}), the invariant combinations are 
\begin{eqnarray}\label{eqn:h4}
h_4&=& H_4+ H_5 = - K + 2 K^4_{\ 4}  ,\\
h_5&=& i (H_4-H_5) = i 6\, T \label{eqn:h5},
\end{eqnarray}
where $K= \sum_{a=1}^{4} K^a_{\ a}$.
We have already seen that the first one is non-compact, while the second one is compact, i.e.
\be
\theta(h_4)= -h_4\,, \quad \theta(h_5)= h_5.
\ee
The effect of the algebraic Iwasawa decomposition (\ref{Iwasuppp}) will therefore be to project out the compact Cartan $h_5$. This was anticipated since the generator $T$ is associated with a dilaton which does not exist in four-dimensional Maxwell-Einstein gravity. We further define the invariant generators at level $L=1$
\be \label{eqn:rarta} \begin{split}
r^a &:= R^a+ \tilde{R}^a,\\
\tilde{r}^a &:= i (R^a - \tilde{R}^a).
\end{split} \ee
The corresponding negative step operators at level $L=-1$ are defined by
\be \label{eqn:rartaneg} \begin{split}
r_a &:= R_a+ \tilde{R}_a,\\
\tilde{r}_a &:= i (R_a - \tilde{R}_a).
\end{split} \ee
More generally, the negative step operators are obtained from the positive ones by lowering the indices as in \eqref{eqn:rartaneg}. The bilinear forms at this level reads
\be \label{eqn:bililevel1}
( r^a | r_b)= 2\, \delta^a_b, \quad ( \tilde{r}^a | \tilde{r}_b) = - 2\, \delta^a_b\, .
\ee
Using (\ref{eqn:su21+++Chev}) and (\ref{eqn:chevbis}), we get that the invariant combinations of the Chevalley generators at level $L=1$ are
\be \label{eqn:e4ande5}
e_4= r^4, \quad e_5= \tilde{r}^4.
\ee
That all other components of $r^a$ and $\tilde{r}^a$ are also invariant follows from the fact that they may be written as commutators between $\mathfrak{gl}(4, \mathbb{R})$ and $r^4$ and $\tilde{r}^4$ which are all invariant. The two Chevalley generators $e_4$ and $e_5$ have identical eigenvalues with respect to the four noncompact Cartan
 \be \label{comute4} \begin{split} \begin{aligned}
 \left[h_1, e_4\right]&=0,  & [h_2, e_4]&=0,& [h_3, e_4]&=- e_4, &[h_4, e_4]&=e_4, \\
 [h_1, e_5]&=0, & [h_2, e_5]&=0,& [h_3, e_5]&=- e_5, &[h_4, e_5]&=e_5, 
\end{aligned} \end{split} \ee
implying that these generators project into the same root $\vec{\lambda}$ in the restricted root system (see Appendix \ref{app:restricted} for more details),
\be
\vec{\lambda}= \vec{\alpha}_{e_4}= \vec{\alpha}_{e_5}= (0,0,-1,1).
\ee
The generator $h_5$, being compact, is not diagonalizable over $\mathbb{R}$. Indeed, we have the following commutation relations with $e_4$ and $e_5$
\be \label{comutenc45}
[h_5,e_4] =3\, e_5 \quad, \quad [h_5,e_5] =- 3\, e_4.
\ee
The generators at level $L=2$ are obtained as
\be \label{eqn:sabrab} \begin{split}
s^{ab}&:= [r^a, \tilde{r}^b]\, ,\\
r^{ab}&:=  [r^a,r^b]=  [\tilde{r}^a, \tilde{r}^b].
\end{split} \ee
These generators are separately invariant under $\sigma$. In terms of $A_2^{+++}$ generators, using (\ref{eqn:sabrab}), \eqref{eqn:rarta}, and \eqref{eqn:rabsab} we get
\begin{eqnarray}
s^{ab}&=& - 2i\,  S^{ab},\\
r^{ab}&=& 2\, R^{ab}\, .
\end{eqnarray}
These generators are normalized as
\be
(s^{ab} | s_{cd})= -4\, \bar{\delta}^{ab}_{cd}\,, \quad (r^{ab} | r_{cd})= 12 \, \delta^{ab}_{cd}\, ,
\ee
where  $\delta^{ab}_{cd} :=  \tfrac{1}{2} (\delta^a_c\, \delta^b_d- \delta^b_c\, \delta^a_d)$ and
$\bar{  \delta}^{ab}_{cd} :=  \tfrac{1}{2} (\delta^a_c\, \delta^b_d+ \delta^b_c\, \delta^a_d)$.\\

The level decomposition of $\asuppp$ under the $A_3\cong \mathfrak{sl}(4,\mathbb{R})$ subalgebra up to level $L=4$ is shown in Table \ref{tab:levdecsu}. Note that this level decomposition presents the same Young tableaux as in the $A_2^{+++}$ case. We will see in Section \ref{sec:CosmologicalModel} that this representation content up to level $L=2$ where the generator $r^{a_1a_2} $ is projected out,  can be associated with the bosonic field content of pure $\mathcal N=2$ supergravity in $D=4$ . The relevant commutators and bilinear forms of $\asuppp$ up to level $L=2$ are given in Appendix \ref{app:comsu21+++}.

\begin{table}[h]
\begin{center}
\begin{tabular}{|c|c|}
\hline
$L= \ell_1 +\ell_2$ &  Generator of $\asuppp$ \\
\hline \hline
$0$  & $K^a_{\ b}$\\
$0$  & $i \, T$\\
$1$  & $r^{\, a} = R^a + \tilde{R}^a$\\
$1$  & $\tilde{r}^{\, a} = i (R^a - \tilde{R}^a) $\\
$2$  & $s^{\,s_{1} s_{2}} = - 2\, i\, S^{\, s_{1} s_{2}}$\\
$2$  & $r^{\,a_{1} a_{2}}=  2\, R^{\, a_{1} a_{2}}$\\
$3$  & $r^{\, a_{0}|a_{1} a_{2}}$\\
$3$   & $\tilde{r}^{\, a_{0}|a_{1} a_{2}}$\\
$4$  & $r^{\, a_{0}|a_{1} a_{2} a_{3}}$\\
$4$  & $r^{\, a_{1} a_{2} | a_{3} a_{4}}$\\
$4$  & $r^{\,s_{1} s_{2} |a_{3} a_{4} }$\\
$\vdots$ & $\vdots$\\
\hline
\end{tabular}
\caption{\sl \small Level decomposition of $\asuppp$ under $\mathfrak{sl}(4,\mathbb{R})$ up to level $4$. The indices  $a_{i}$ are antisymmetric  while the indices  $s_{i}$ are symmetric. Note that the generators from the level $L=3$ with mixed Young symmetries are subject to constraints.  }
\label{tab:levdecsu}
\end{center}
\end{table}

\subsection{ Cartan and  temporal involutions} \label{sec:carttemporalin}

For the $\sigma$-models to be constructed in the next section we also need to fix a (local) subgroup of $\mathrm{SU}(2,1)^{+++}$. We require two different choices, denoted $\mathrm{K}^{+++}$ and $\mathrm{K}^{*+++}$, leading to different coset spaces and that are defined by appropriate involutions at the level of the $\mathfrak{su}(2,1)^{+++}$ Lie algebra. The level decomposition discussed above does not depend on the choice of this subalgebra but the $\sigma$-model to be studied below does.

The first choice of subalgebra, $\mathfrak{k}^{+++}$, is defined by the Cartan involution $\theta$. Its action on $\asuppp$ may be read off from the Tits-Satake diagram of $\asuppp$ (see (\ref{eqn:thetasu21+++})). It has the following action on the generators of $\asuppp$, 
\be \label{eqn:thetainv} \begin{split} \begin{aligned}
\theta (r^a)&= - r_a, &\quad \theta(r_a)&= -r^a,\\
\theta (\tilde{r}^a)&= \tilde{r}_a, &\quad \theta(\tilde{r}_a)&= \tilde{r}^a,\\
\theta (s^{ab})&= s_{ab}, &\quad \theta(s_{ab})&= s^{ab},\\
\theta (r^{ab})&=- r_{ab}, &\quad \theta(r_{ab})&=- r^{ab},
\end{aligned}   \end{split} \ee
while on level $L=0$ it has the familiar action
\be \label{eqn:thinkab}
\theta (K^a_{\  b}) = -\, K^b_{\  a}\quad \theta(iT) = iT\,.
\ee
The Cartan decomposition therefore reads
\be
\asuppp = \mathfrak{k}^{+++} \oplus \mathfrak{p}^{+++},
\ee
where the subalgebra $\mathfrak{k}^{+++}$ is defined as the fixed point set under the Cartan involution, while $\mathfrak{p}^{+++}$ contains the generators which anti-invariant under $\theta$.
The generators of $\mathfrak{k}^{+++}$ reads
\be \label{eqn:k3p} \begin{split} 
\mathfrak{k}^{+++}&= \{ x \in \asuppp\,  :\,  \theta(x)= x \}
\\
&= \{ iT,\,j^{ab},\, (r^a- r_a),\,  (\tilde{r}^a+ \tilde{r}_a),\, (s^{ab}+ s_{ab}),\, (r^{ab}- r_{ab}), \ldots  \}\, ,
\end{split}
\ee
where $j^{ab}= K^a_{\ b }- K^{b}_{\ a }$, and those of $\mathfrak{p}^{+++}$ are
\be \label{eqn:p3p} \begin{split} 
\mathfrak{p}^{+++}&= \{ x \in \asuppp \, :\,  \theta(x)= - x \}
\\
&= \{ k^{ab},\,  (r^a+ r_a),\,  (\tilde{r}^a-\tilde{r}_a),\,  (s^{ab}- s_{ab}),\, (r^{ab}+ r_{ab}), \ldots  \}\, ,
\end{split}
\ee
where $k^{ab}= K^a_{\ b }+ K^{b}_{\ a }$.\\

The second choice of subalgebra, $\mathfrak{k}^{*+++}$, is introduced via the so-called temporal involution \cite{Englert:2003py}.
The possible existence of a Kac-Moody symmetry $\mathrm G^{+++}$ motivated the construction of a Lagrangian formulation explicitly invariant under $\mathrm G^{+++}$ . This Lagrangian $\mathcal{S}_{\suppp}$ is defined in a reparametrisation invariant way on a world-line parameter $\xi$, apriori unrelated to space-time, in terms of fields living in a coset $\suppp/ \mathrm K^{*\,+++}$. As the metric $g_{\mu \nu}$ at a fixed space-time parametrises the coset $\mathrm{GL}(D)/ \mathrm{SO}(1,D-1)$, the subgroup $ \mathrm K^{*\,+++}$ must contain the Lorentz group. As $\mathrm{SO}(1, D-1)$ is non-compact, we cannot use the Cartan involution $\theta$ to construct $\mathrm K^{*\,+++}$ that is now non-compact. Rather we will use the temporal involution $\Omega_i$ from which the required non-compact generators of $\mathrm K^{*\,+++}$ can be selected. The temporal involution $\Omega_i$ generalises the Cartan involution $\theta$ described in (\ref{eqn:thetainv}) and \eqref{eqn:thinkab}  to allow the identification of the index $i$ as a time coordinate. It is defined by
\be \label{eqn:temporal} \begin{split}
\Omega_i (iT) &= iT,\\
\Omega_i(K^a_{\ b })&= - \epsilon_a  \epsilon_b\,  K^b_{\ a}, \\
\Omega_i (r^a)&= - \epsilon_a\,  r_a, \\
\Omega_i (\tilde{r}^a)&= \epsilon_a\,  \tilde{r}_a,\\
\Omega_i  (s^{ab})&= \epsilon_a \epsilon_b\,  s_{ab}, \\
\ \Omega_i  (r^{ab})&=-\epsilon_a \epsilon_b \,  r_{ab}, 
\end{split}
\ee
with $\epsilon_a= -1$ if $a=i$ and $\epsilon_a= 1$ otherwise.

 \setcounter{equation}{0}
\section{On $\asupp \subset \asuppp$ and $\sigma$-models } \label{sec:su21++}
We now turn our attention to one-dimensional $\sigma$-models based on the group $\supp$. The $\mathfrak{g}^{++}$ content of the $\mathfrak{g}^{+++}$-invariant actions $\mathcal{S}_{\mathrm G_{+++}}$ has been analysed in reference \cite{Englert:2004ph} where it was shown that two distinct actions invariant under the overextended Kac-Moody algebra $\mathfrak{g}^{++}$ exist.  We will apply this analysis to $\mathfrak g=\asu$ and study the two actions invariant under $\supp$.

The first one $\mathcal{S}_{\supp_C}$ is called the cosmological $\sigma$-model and constructed from $\mathcal{S}_{\suppp}$ by performing a truncation putting consistently to zero some fields.  The corresponding $\asupp$  algebra is obtained from $\asuppp$ by deleting the node $\alpha_1$ from the Tits-Satake diagram of $\asuppp$ depicted in Figure \ref{fig:su21+++}b. The involution used to construct the action $\mathcal{S}_{\suppp}$  is the temporal involution $\Omega_1$ (defined in (\ref{eqn:temporal})) such that  coordinate $1$ is time-like. This implies that the truncated theory $\mathcal{S}_{\supp_C}$ carries a Euclidean signature in space-time. The $\mathcal{S}_{\supp_C}$ is the generalisation of the $\mathrm E_8^{++}=\mathrm E_{10}$ invariant action of reference \cite{Damour:2002cu} proposed in the context of M-theory and cosmological billiards. The parameter $\xi$ along the world-line will then be identified with the time coordinate and we will see in Section \ref{sec:CosmologicalModel} that this action restricted to a defined number of  levels is equal to the corresponding $\mathcal{N}=2$ supergravity in $D=4$ in which the fields depend only  on this time coordinate.

A second $\supp$-invariant action $\mathcal{S}_{\supp_B}$, called the brane model, is obtained from $\mathcal{S}_{\suppp}$ by performing the same consistent truncation {\it after} conjugation by the Weyl reflection $s_{\alpha_1}$ in $\asuppp$. Here, $s_{\alpha_1}$ is the Weyl reflection in the hyperplane perpendicular to  the simple root $\alpha_1$ corresponding to the node 1 of Figure \ref{fig:su21+++}. The non-commutativity of the temporal involution $\Omega_1$ with the Weyl reflection \cite{Keurentjes:2004bv, deBuyl:2005it, Keurentjes:2004xx} implies that this second action is inequivalent to the first one (see Section \ref{sec:signa} where it is recalled the consequence of  $s_{\alpha_1}$ on the time identification).  In $\mathcal{S}_{\supp_B}$, $\xi$ is identified with a space-like direction.  For a generic $\mathrm G$, the $\mathrm G^{++}$-brane model describes intersecting extremal brane configurations smeared in all directions but one \cite{Englert:2003py, Englert:2004it}.

\subsection{Infinite-dimensional cosmological $\sigma$-model }
\label{sec:CosmologicalModel}

In this section we will analyze how well the suggestions in \cite{Damour:2002cu} apply to the pure $\mathcal{N}=2$ theory. More concretely, we will investigate what features of this theory can be described using a non-linear $\sigma$-models over an infinite-dimensional coset space, as a generalization of what we have seen in the case of the scalar Lagrangian (\ref{eqn:3dStationaryLagrangian}). In fact, we will find a correspondence between the supergravity fields and the parameters in a one-dimensional $\sigma$-model. For example, as we will see, the dynamics of some solutions to the supergravity equations of motion can be modelled by motion on a coset space $\supp/\mathrm K^{++}$, where $\mathrm K^{++}$ is the compact subgroup with Lie algebra $\mathfrak{k}^{++} \subset \mathfrak{su}(2,1)^{++}$. The results of this section will hence be a map between parts of the cosmological $\sigma$-model and parts of the supergravity. This confirms that the general conjecture describing supergravities with overextended Kac-Moody groups holds, to the same extent, also in the present case of pure $\mathcal{N}=2$ supergravity where the symmetry group is in a non-split form. In analogy with the discussion in Section \ref{sec:SpacelikeKillingVector} and \ref{sec:StationarySolutions}, the dynamics will be modelled by a non-linear $\sigma$-model of maps from $\mathcal{M}_1 \cong \mathbb{R}$ to $\supp/\mathrm K^{++}$. We will now formally construct this $\sigma$-model, and perform a check (as for example done in \cite{Damour:2004zy} in the case eleven-dimensional supergravity), to see if the corresponding equations of motion match with the dynamics given by the supergravity equations of motion (\ref{eqn:Einstein}) and (\ref{eqn:Maxwell}), when restricting to spatially constant solutions (in a sense to made more clear below). The action for the $\sigma$-model (given generally by (\ref{eqn:SigmaModelAction})) is
\begin{equation}
\mathcal{S}_{\supp_C} = \int_{\mathcal{M}_1} \frac{1}{n(t)} (\mathcal{P}(t) | \mathcal{P}(t)) \, \dd t\, ,
\end{equation}
where $n(t) = \sqrt{h}$ is the lapse function and necessary for reparametrization invariance on the world-line. The function $h(t)$ is the metric on the one-dimensional manifold $\mathcal{M}_1$ and $( \cdot | \cdot )$ is an invariant bilinear form of $\mathfrak{su}(2,1)^{++}$, formed for example by restriction from $\mathfrak{su}(2,1)^{+++}$. As described previously, $\mathcal{P}(t)$ is the component along the coset of the Maurer-Cartan form defined by maps into the coset. In the case of a one-dimensional base-manifold the $\sigma$-model equations of motion are (see e.g. (\ref{eqn:SigmaMotion}))
\begin{equation}
\label{eqn:cosmsigmamotion}
n\,  \partial_t (n^{-1} \, \mathcal{P}) + [\mathcal{Q}, \mathcal{P}] = 0 ,
\end{equation}
where $\mathcal P$ and $\mathcal Q$ are defined in \eqref{eqn:pandq}.
Now, as we are dealing with an infinite-dimensional coset space we cannot directly realize this model. What we will do is to use the level decomposition as described previously in Section \ref{sec:su21+++}, and perform a truncation of the Kac-Moody algebra by throwing away all the generators above a certain level. This truncation can be shown to be a consistent truncation of the $\sigma$-model equations of motion \cite{Damour:2004zy}. Before performing this truncation however, we have to describe the level decomposition of $\asupp$ in terms of the decomposition of $\asuppp$, given in Section \ref{sec:su21+++}. By defining $\mathfrak{su}(2,1)^{++}$ as a regular subalgebra of $\asuppp$, the level decomposition given in Table \ref{tab:levdecsu} descends to $\asupp$ by restricting the indices to not run over $1$, or equivalently by generating the representations at every level by acting on the highest weight with the regular $\mathfrak{sl}(3, \mathbb{R})$ subalgebra of $\mathfrak{sl}(4, \mathbb{R})$. In this section, the $\mathfrak{sl}(4, \mathbb{R})$ indices $a,b...$ will therefore only take the values $2,\,3$ and $4$. 

We can hence realize a suitable truncation of $\asupp$ by setting all $\mathfrak{g}_L = 0$ for $|L| > 2$, and furthermore set the antisymmetric generator $r^{ab}$ at level $L=2$ to zero, as this generator has no clear interpretation in terms of supergravity quantities. We can therefore write general $\mathcal{P}$ and $\mathcal{Q}$ as
\begin{equation}
\mathcal{P} = \frac{1}{2} p_{ab} k^{ab} + \frac{1}{2} P_a (r^a + r_a) + \frac{1}{2} \tilde{P}_a (\tilde{r}^a - \tilde{r}_a) + \frac{1}{2} P_{ab}(s^{ab} - s_{ab})\,,
\end{equation}
and
\begin{equation}
\mathcal{Q} = \frac{1}{2} q_{ab} j^{ab} + \frac{1}{2} P_a (r^a - r_a) + \frac{1}{2} \tilde{P}_a (\tilde{r}^a + \tilde{r}_a) + \frac{1}{2} P_{ab}(s^{ab} + s_{ab}) ,
\end{equation}
where we have expanded in the basis given in (\ref{eqn:k3p}) and (\ref{eqn:p3p}), in parameters $p_{ab}, P_a$ and so on, depending only on the time coordinate $t$. We have chosen to put different parameters in front of the generators at level zero in the expressions for $\mathcal{P}$ and $\mathcal{Q}$ , considering that $k^{ab}$ and $j^{ab}$ are symmetric and anti-symmetric respectively. Using the commutation relations in Appendix \ref{app:su21+++}, the equations of motion are now (given by inserting the expressions for $\mathcal{P}$ and $\mathcal{Q}$ in (\ref{eqn:levelzeromotion}) and (\ref{eqn:higherlevelmotion}))
\begin{eqnarray}
\label{eqn:pab}
n\, \partial_t (n^{-1} p_{ab} ) - q_{ca} {p^c}_b -  q_{cb}{p^c}_a + P_a P_b -\frac{1}{2} \delta_{ab} P_c P^c + \tilde{P}_a \tilde{P}_b \nonumber \\ 
- \frac{1}{2} \delta_{ab} \tilde{P}_c \tilde{P}^c - 2 \delta_{ab} P_{cd}P^{cd} + 4 P_{ac} {P_b}^c = 0\, ,
\end{eqnarray}
for the field $p_{ab}$, 
\begin{equation}
\label{eqn:Pa}
n\,  \partial_t (n^{-1} P_a) - p_{ac} P^c + q_{ac} P^c + 2 P_{ac} \tilde{P}^c = 0\, ,
\end{equation}
for the field $P_a$, 
\begin{equation}
\label{eqn:tildePa}
n \,\partial_t (n^{-1} \tilde{P}_a) - p_{ac} \tilde{P}^c + q_{ac} \tilde{P}^c - 2 P_{ac} P^c = 0\, ,
\end{equation}
for $\tilde{P}_a$ and finally
\begin{equation}
\label{eqn:Pab}
n\, \partial_t (n^{-1} P_{ab}) - 2 p_{ac} {P_b}^c +2 q_{ac} {P_b}^c = 0\, ,
\end{equation}
for $P_{ab}$. Regarding notation, we will in the following assume that indices are symmetrized or anti-symmetrized according to the tensor appearing linearly in expressions as these ones. For example, in (\ref{eqn:Pab}) the term $2 p_{ac} {P_b}^c$ is then short for $\frac{1}{2}(2 p_{ac} {P_b}^c + 2 p_{bc} {P_a}^c)$, the parameter $P_{ab}$ being a symmetric $\mathfrak{sl}(3, \mathbb{R})$ tensor.

\subsubsection{Dictionary} 

Let us now begin to compare the above $\sigma$-model dynamics with the dynamics given by our supergravity theory (\ref{eqn:SugraAction4d}). We will do this by choosing the parameters in $\mathcal{P}$ such that the $\sigma$-model equations of motion (\ref{eqn:pab})-(\ref{eqn:Pab}) match with the equations of motion on the supergravity side. Due to the construction of the $\sigma$-model, the natural framework for doing this is in the ADM-formalism where we will split the Einstein-Maxwell equations into dynamical equations and constraints/initial conditions. Concretely, we will only consider the dynamical supergravity equations. For the comparison it will be convenient to treat the spin connection $\omega_{ABC}$ and the field strength $F_{AB}$ as the fundamental fields of the Einstein-Maxwell theory and also considered as constant by letting them be space-independent. This is suitable because no spatial derivatives exist on the $\sigma$-model side. Locally we will also split the flat space coordinates $x^{a = 2,3,4}$ from the flat time coordinate $x^1$. The spin connection and the field strength transform under the Lorentz group $\mathrm{SO}(3,1)$ and we can use use this freedom, and a coordinate transformation, to put $\omega_{ABC}$ in a pseudo-Gaussian gauge by setting the metric shift functions to zero. This leads to $\omega_{ab1}$ being symmetric, and we also assume $\omega_{11c} = 0$, corresponding to space-independent gravity lapse $N$ . This gauge corresponds to a vielbein of the form
\begin{equation}
\label{eqn:vielbein}
{e_{\alpha}}^A = \left(\begin{array}{cc}N & 0 \\0 & {e_{\mu}}^a\end{array}\right)\,.
\end{equation}
In fact, the spin connection can be defined in terms of a tensor $\Omega_{ABC}$, called the anholonomy, by the relation
\begin{equation}
\label{eqn:anholonomy}
\omega_{ABC} = \frac{1}{2}(\Omega_{ABC} - \Omega_{BCA} +\Omega_{CAB}),
\end{equation}
and such that the anholonomy is given in terms of the vielbein by
\begin{equation}
{\Omega_{AB}}^C = 2\, {e_A}^{\alpha}{e_{B}}^{\beta}\partial_{[\alpha}{e_{\beta]}}^C.
\end{equation}
This pseudo-Gaussian gauge breaks the Lorentz group down to $\mathrm{SO}(3)$, acting on the spatial vielbein $ {e_{\mu}}^a$. Note also that we can rewrite the covariant derivative with respect to the spin connection using the vielbein, i.e.
\begin{eqnarray}
\label{eqn:SpinCovDerivative}
e^{-1} \partial_t (e \omega_{ab1})&  = & e^{-1}\,   \partial_t e\,  \omega_{ab1} +  \partial_t  \omega_{ab1} \nonumber \\
& = &  {e_{m}}^c \, \partial_t {e^{m}}_c\,  \omega_{ab1} +   \partial_t  \omega_{ab1}  \\
& = &  N {\omega^c}_{c1} \omega_{ab1} +  N \partial_1  \omega_{ab1} \nonumber .
\end{eqnarray}
Here we have used that $\partial_1 = N^{-1} \partial_t$. We will now consider the different parts of the supergravity equations of motion (\ref{eqn:Einstein}), (\ref{eqn:Maxwell}) and the Bianchi identity (\ref{eqn:MaxwellBianchi}), one at the time.

In addition to the gauges in the gravity sector, we also have to adopt a temporal gauge for the Maxwell field, corresponding with our choice of time coordinate to 
\be
A_1 = 0\,.
\ee

\begin{itemize}
\item{{\bf Ricci-tensor}}

First, let us consider the Ricci-tensor. Our Riemann-tensor written with flat indices is given in terms of the spin connection and the anholonomy by
\begin{equation} \begin{split}
\label{eqn:Riemann}
R_{ABCD} =\  & \partial_{A} \omega_{BCD} - \partial_{B} \omega_{ACD} + {\Omega_{AB}}^E \omega_{ECD} \\ &+ {\omega_{AC}}^E \omega_{BED} - {\omega_{BC}}^E \omega_{AED}\, .
\end{split} \end{equation}
From this expression we can derive the purely spatial Ricci-tensor appearing in (\ref{eqn:Einstein}) with our gauge choice,
\begin{equation}
\label{eqn:spatialRicci} 
R_{ab} = \partial_1 \omega_{ab1} +  \omega_{ab1} {\omega^c}_{c1} +  {\omega_{ab}}^d {\omega^c}_{dc} - \omega_{1ca} {\omega^c}_{b1} +  \omega_{ca1} {\omega_{1b}}^c +  {\omega^c}_{da} {\omega^d}_{bc}. 
\end{equation}
Now, to match with the $\sigma$-model equation (\ref{eqn:pab}), we make the ansatz $\omega_{ab1} = c_1\, p_{ab}$ and $\omega_{1ab} = c_2 \,q_{ab}$. Using this ansatz, one rewrites (\ref{eqn:spatialRicci}) as
\begin{equation}
\label{eqn:RewrittenSpatialRicci} \begin{split}
R_{ab} = \ &(Ne)^{-1} \partial_t ( e\, c_1 \,p_{ab}) - c_1\, c_2\, q_{ca}\,{p^c}_b -  c_1\, c_2 \,q_{cb}\,{p^c}_a \\ &+  {\omega_{ab}}^d {\omega^c}_{dc} + {\omega^c}_{da} {\omega^d}_{bc}\, . \end{split}
\end{equation}
Comparing with (\ref{eqn:pab}) we conclude that $c_1 = c_2 = N^{-1}$, and $n = e^{-1} N$ (multiply  (\ref{eqn:Einstein}) with $N^2$ to make the identification easier). Consider now the last term in  (\ref{eqn:RewrittenSpatialRicci}). Somehow we need to match $\omega_{abc}$ with the parameter $P_{ab}$. We do this by the ansatz
\begin{equation}
\label{eqn:omegaansatz}
\Omega_{abc} = c_3 \epsilon_{abd}{P^d}_c .
\end{equation}
There is here a mismatch in the number of degrees of freedom between these two tensors. From the symmetry of $P_{ab}$ we see that the anholonomy must obey a trace condition,
\begin{equation}
{\Omega_{ab}}^b = c_3 \epsilon_{abc} P^{cb} = 0 .
\end{equation}
This removes three of the nine degrees of freedom in the purely spatial anholonomy and with this condition its degrees of freedom equals the number of components of $P_{ab}$. It is generally assumed that this trace condition always can be imposed \cite{Damour:2002cu}. Observe that the tracelessness $\Omega_{abb} = 0$ is equivalent to $\omega_{bba} = 0$. Hence the second to last term in (\ref{eqn:RewrittenSpatialRicci}) vanishes. From the expression (\ref{eqn:anholonomy}) of the spin connection in terms of the anholonomy, the last term in (\ref{eqn:RewrittenSpatialRicci}) can be rewritten as
\begin{equation}
\label{eqn:SpinAndHolonomy}
{\omega^c}_{da} {\omega^d}_{bc} = \frac{1}{4} \Omega_{cda} {\Omega^{cd}}_b - \frac{1}{2} {\Omega_{da}}^c {\Omega^{d}}_{bc}- \frac{1}{2} \Omega_{adc} {\Omega_b}^{cd}.
\end{equation}
Let us take a closer look at the last term $\Omega_{adc} {\Omega_b}^{cd}$. The first two indices $a$ and $d$ in the first anholonomy has no matching index with the first two indices $b$ and $c$ of the second anholonomy. With our ansatz (\ref{eqn:omegaansatz}) this kind of index structure is impossible to match with  anything in the $\sigma$-model at low levels, as there is no such term in (\ref{eqn:pab}). This is a general phenomena when matching infinite-dimensional coset space $\sigma$-models with supergravity theories. In particular it is true also for the well studied case of eleven-dimensional supergravity. For example in \cite{Damour:2004zy} it is suggested that this term comes from terms in the $\sigma$-model that we in the current truncation have thrown away but the confirmation of this claim is still an open problem. Inserting our ansatz (\ref{eqn:omegaansatz}) in (\ref{eqn:SpinAndHolonomy}) we get (leaving the last term as it is)
\begin{equation}
{\omega^c}_{da} {\omega^d}_{bc} = - \frac{c^2}{2} \delta_{ab} P_{cd}P^{cd} + c^2 P_{ac} {P_b}^c - \frac{1}{2} \Omega_{adc} {\Omega_b}^{cd}. 
\end{equation}
Looking at (\ref{eqn:pab}) we get precise matching if ${c_3}^2 = 4N^{-2}$ and if we ignore the anomalous monomial in the anholonomy. The sign of $c_3$ remains unfixed, so we define $c_3 = 2N^{-1}c_3'$, where $|c_3'| = 1$. Let us now turn to the rest of the terms in the equation of motion (\ref{eqn:Einstein}) for the metric.

\item{$\mathbf{Maxwell\ field}$}

From the Einstein-Maxwell equation (\ref{eqn:Einstein}) we get when looking at the spatial part of the two monomials in the field strength (remembering that we multiplied with $N^2$),
\be \begin{split}
\frac{N^2}{2}\, \delta_{ab} F_{CD} F^{CD} - 2\, N^2 F_{aC}{F_b}^{C} = & \ \frac{N^2}{2} \delta_{ab}\, (-2 F_{1c}{F_1}^{c}  + F_{cd}F^{cd}) \\ 
&+ 2\, N^2 (F_{a1} F_{b1} -F_{ac} {F_b}^c). 
\end{split} \ee
A reasonable here ansatz is
\be
\label{eqn:fieldstrengthansatz} \begin{split}
F_{1c} &= c_4 P_c\, , \\
F_{ab} &= c_5 \epsilon_{abc} \tilde{P}^c\, ,
\end{split} \ee
giving
\be \begin{split} 
\frac{N^2}{2} \delta_{ab} F_{CD} F^{CD} - 2 N^2F_{aC}{F_b}^{C}  = &\quad   N^2 {c_4}^2(- \delta_{ab} P_c P^c + 2 P_a P_b)  \\
  & + N^2{c_5}^2(- \delta_{ab} \tilde{P}_c \tilde{P}^c + 2\tilde{P}_a \tilde{P}_b) .
\end{split} \ee
Comparing with (\ref{eqn:pab}) we see that we must put ${c_4}^2 = {c_5}^2 = \frac{1}{2N^2}$. As in the case for $c_3$, the signs of $c_4$ and $c_5$ is still unfixed, so we define $c_4 = \frac{1}{\sqrt{2}N} c_4'$ and $c_5 = \frac{1}{\sqrt{2}N} c_5'$ where again $|c_4'| = |c_5'| = 1$. Note that whether $F_{1c}$ should be proportional to $P_a$ or $\tilde{P}_a$ is up till now not fixed as they have appeared symmetrically so far. In fact, when we now turn to consider the equation of motion for $F_{AB}$ \eqref{eqn:Maxwell} and its Bianchi identity \eqref{eqn:MaxwellBianchi} it turns out that neither of these equations will fix this arbitrariness or the signs of the functions $c_i$. This is due to the symmetry between the roots $\alpha_4$ and $\alpha_5$ and can be interpreted physically as electromagnetic duality.

\item{{\bf Equations of motion and Bianchi identities for $F_{AB}$ }}

Finally we consider the equation of motion for the field strength $F_{AB}$ (\ref{eqn:Maxwell}). Explicitly the covariant derivative becomes
\begin{equation}
D^A F_{AB} = \partial^A F_{AB} - {\omega_{C}}^{AC}F_{AB} -  \omega_{ACB}F^{AC} = 0.
\end{equation}
Looking at the spatial dynamics, putting $B = b$ and splitting the sums over space and time we get
\begin{equation}
D^A F_{Ab} = -\partial_1 F_{1b} -  {\omega^e}_{e1} F_{1b} + \omega_{1cb}{F_1}^c -  \omega_{ab1}{F^a}_1 -  \omega_{acb}F^{ac} = 0 .
\end{equation}
Again we recognize the ``time'' covariant derivative from (\ref{eqn:SpinCovDerivative}). Hence we have
\begin{equation}
e^{-1} N^{-1} \partial_t (e  F_{1b})  = {\omega^e}_{e1} F_{1b} + \partial_1 F_{1b} . 
\end{equation}
Note also that $\omega_{acb}F^{ac} = \frac{1}{2} \Omega_{acb}F^{ac}$. With the expressions for the anholonomy (\ref{eqn:omegaansatz}), we derived above, we rewrite the equation for the field strength as
\begin{equation}
e^{-1} N^{-1} \partial_t (e N^{-1} {c_4}' P_b) + {c_4}' N^{-2} (q_{bc} P^c - p_{cb} P^c )+ 2{c_3}'{c_5}' N^{-2} P_{cb} \tilde P^c = 0. 
\end{equation}
This agrees with the $\sigma$-model equation (\ref{eqn:Pa}) if ${c_3}' {c_5}' = {c_4}'$.  Consider now the Bianchi-identity (\ref{eqn:MaxwellBianchi}). Letting $A=a$ we get
\be \begin{split}
\epsilon^{aBCD}D_B F_{CD} &=  \epsilon^{a1bc}D_1 F_{bc} + 2 \epsilon^{abc1} D_b F_{c1}\, , \\
&=  \epsilon^{abc} \partial_1 F_{bc} - 2 \epsilon^{abc}  \omega_{1db} {F^d}_c + 2\epsilon^{abc} \omega_{bd1} {F^d}_c - 2 \epsilon^{abc} \omega _{bdc} {F^d}_0    \\ 
& = 0.
\end{split}\ee
As $\epsilon^{abc} \omega_{bdc}  = \frac{1}{2} \epsilon^{abc} \Omega_{cbd}$,
\begin{equation} \nonumber
 \partial_1 ({c_5}' N^{-1} \tilde{P}_a) + {c_5}' N^{-2} ( q_{ab}\tilde{P}_b - p_{ab}\tilde{P}_b + N {w^b}_{b1}\tilde{P}_a )- 2 {c_3}' {c_4}' N^{-2} P_{ac} P_c =0 . 
\end{equation}
Again using (\ref{eqn:SpinCovDerivative}) and multiplying everything with $N^2$ we find
\begin{equation}
n\,  \partial_t(n^{-1} \tilde{P}_a) - p_{ab}\tilde{P}^b + q_{ab}\tilde{P}^b - 2\frac{{c_3}'{c_4}'}{{c_5}'} P_{ac} P^c = 0 .
\end{equation}
This is precisely the corresponding equation (\ref{eqn:tildePa}). 

\item{{\bf Riemann Bianchi}}

What remains to analyze is the last equation of the $\sigma$-model (\ref{eqn:Pab}), which is to be matched with the algebraic Bianchi identity (\ref{eqn:RiemannBianchi1}) for the Riemann tensor on the supergravity side. The component of (\ref{eqn:RiemannBianchi1}) to be considered is the symmetric purely spatial part. These turn out to be exactly equivalent in the current truncation, automatically by the above mapping of fields. This is a consistency check of our analysis.

\item{{\bf Summary}}

Hence, our analysis has given us an almost complete correspondence between the parameters of the truncated $\supp_C$-model and the dynamics of certain spatially constant solutions of pure $\mathcal{N} = 2$ supergravity. This result summarized in the Table \ref{tab:Dictionary} is what was expected from the structure of the low-lying $\mathfrak{sl}(3,\mathbb{R})$ representations. The map is similar to those already constructed for other supergravity theories, and succeeds and fails at the same points. We point out that in addition to the dynamical equations, there are in general constraint equations to be verified, for example the (spatial) diffeomorphism constraint and Gauss constraints. We expect that they are satisfied in the same way as for the maximally supersymmetric case~\cite{Damour:2007dt}.

\begin{table}[h]
\begin{center}
\begin{tabular}{|c|c|c|c|}
\hline
Level $L$ & Supergravity field & $\supp$ field & $ \asupp$ generator\\
\hline \hline
$0 $ &$ \omega_{abt} $ & $ p_{ab}$ & $k^{a b}$\\
$0 $ &$\omega_{tab}$ & $ q_{ab}$ & $j^{a b}$\\
$1$ &$F_{tc} $ & $ \frac{1}{\sqrt{2}} {c_4}' P_c $ & ${r^a}$\\
$1$ &$N F_{ab} $ & $\frac{1}{\sqrt{2}} {c_5}' \epsilon_{abc} \tilde{P}^c$ & $\tilde{r}^a$\\
$2$ &$N \Omega_{abc}$ & $2{c_4}' {c_5}' \epsilon_{abd}{P^d}_c $& $s^{a b}$\\
$-$ & $Ne^{-1}$& $n$ & -\\
\hline
\end{tabular}
\caption{\sl \small Correspondence between the bosonic fields in the supergravity theory and the Kac-Moody $\sigma$-model. The parameters ${c_4}'$ and ${c_5}'$ are unfixed and are $\pm 1$. All the supergravity quantities are assumed to be evaluated at a fixed spatial point.}
\label{tab:Dictionary}
\end{center}
\end{table}

\end{itemize}

 \subsection{The $\asuppp$ algebraic structure of BPS branes}

In this subsection, we show that the BPS solutions ({\ref{Isom}) which are upon dimensional reduction on time described in the $\asu$ $\sigma$-model by equation ({\ref{eqn:BPSscalars}), are in fact  completely algebraically described in $\asuppp$. In order to do so we now choose the time coordinate to be the direction $x_4$. More precisely, we show that the full space-time solution (\ref{Isom}) can be reconstructed (i.e. not only the part which correspond to scalars upon dimensional reduction) by demanding that the $\asu$ is regularly embedded\footnote{We recall that a subalgebra $\bar{\mathfrak{g}}\subset\mathfrak{g}$ is \emph{regularly embedded} in $\mathfrak{g}$ if the root vectors of $\bar{\mathfrak{g}}$ are root vectors of
$\mathfrak{g}$, and the simple roots of $\bar{\mathfrak{g}}$ are real roots of
$\mathfrak{g}$. Of particular relevance for our analysis is that, as a consequence, the Weyl group $\mathcal{W}(\bar{\mathfrak{g}})$ of $\bar{\mathfrak{g}}$ is a subgroup of $\mathcal{W}({\mathfrak{g}})$. For finite-dimensional Lie algebras the concept of a regular embedding was introduced by Dynkin in \cite{Dynkin:1957um}, and was subsequently extended to the infinite-dimensional case by Feingold and Nicolai \cite{Feingold}.} in $\asuppp$. The regular embedding is defined by erasing the nodes $\alpha_1, \alpha_2$ and $\alpha_3$  in Figure~\ref{fig:su21+++}b.

We first recall that we can describe the non-compact Cartan fields of $\asuppp$ in two bases, the $\mathfrak{gl}(4, \mathbb R)$ one described by the generators  $K^a_{\ a}$,   (see (\ref{eqn:kab})) and the Chevalley base given by the $h_m, \ m=1 \dots 4$ (see (\ref{eqn:su21+++Chev})). The fields corresponding to the former are denoted $p_a$ and the ones corresponding to the latter denoted $q_a$ ($a=1 \dots4)$. The relation between these two bases is:
\begin{equation}
\sum_{a=1}^{4} \, p_a K^a_{\ a}=\sum_{a=1}^{4} \, q_a\,  h_a,
\label{basecartsu21}
\end{equation}
where the $p_a$'s encode the diagonal metric in $\asuppp$. We have indeed $p_a= \frac{1}{2} \ln g_{aa}$ where $g_{aa}$ is the four-dimensional metric. This follows for instance from \cite{Englert:2003zs} or also from the results of the preceding section, summarized in Table~\ref{tab:Dictionary}.

We are now in position to impose the regular embedding which amounts at the level of the Cartan to enforce
\begin{equation}
q_1=q_2=q_3 =0.
\label{embeddsu21}
\end{equation}
Using (\ref{basecartsu21}) the conditions (\ref{embeddsu21}) translate for the $p_a$'s into
\be
p_1=p_2=p_3=-p_4.
\ee
Consequently the regular embedding of $\asu$  in $\asuppp$ imply on the physical four-dimensional metric the following conditions:
\begin{equation}
\label{finalembsu21}
g_{1\, 1}=g_{2\, 2}=g_{3\, 3}=g_{4\, 4}^{-1},
\end{equation}
which is satisfied by the BPS metrics (\ref{Isom}). This completes the proof that the 
 four-dimensional BPS solutions are described  by the regular embedding of $\asu$  in $\asuppp$. It is worth noticing that this description is not valid for non-BPS solutions and it indicates again the special role played by BPS solutions in the $\mathfrak{g}^{+++}$ approach (see \cite{Englert:2003py}, \cite{Englert:2007qb}).

 \subsection{Weyl reflection in $\asuppp$ }
 In this subsection, we first discuss a definition of the Weyl group of $\asu$ and its action on BPS solutions of the $\mathcal{N}=2$ supergravity. Then, we will study  the Weyl group of $\asuppp$ and its possible consequence on the space-time signature.
 \subsubsection{Weyl reflection in $\asu$ }
 
First, we briefly recall how to construct the Weyl group of the complex  algebras $A_2$. The Weyl group $W$ of $A_2$ is generated by the two simple Weyl reflections $s_{\alpha_4}$, $s_{\alpha_5}$ associated respectively to   the simple roots $\alpha_4$ and $\alpha_5$ (see Figure \ref{fig:su21+++}a). The group contains six elements
\be
W_{A_2}=\{1,s_{\alpha_4},s_{\alpha_5},s_{\alpha_4}s_{\alpha_5},s_{\alpha_5}s_{\alpha_4},s_{\alpha_4}s_{\alpha_5}s_{\alpha_4}\}\, ,
\ee
and is isomorphic to the symmetric group $\mathrm{S}_3$ on three letters. We first note that among the six elements, three correspond to reflections: $s_{\alpha_4},s_{\alpha_5}$ and $s_{\alpha_4}s_{\alpha_5}s_{\alpha_4}$. The third transformation correspond to the Weyl reflection associated to the non-simple roots $\alpha_4+\alpha_5$ namely  $s_{\alpha_4}s_{\alpha_5}s_{\alpha_4}=s_{\alpha_4+\alpha_5}$. The action of $s_{\alpha_4+\alpha_5}$ on the simple roots of $A_2$ is:
\begin{equation} \label{eqn:weyl5} \begin{split}
s_{\alpha_4+\alpha_5}(\alpha_4)\, &= -\alpha_5\, , \\
s_{\alpha_4+\alpha_5}(\alpha_5)\, & = -\alpha_4\,  .
\end{split}
\end{equation}
The strategy used here to define the Weyl group of $\asu$ is to retain only the reflections of the Weyl group of $A_2$  associated to the roots which are invariant under the conjugation $\sigma$ fixing the real form $\asu$. Using (\ref{eqn:sigmaa4a5}) we deduce that the only invariant Weyl reflection is  $s_{\alpha_4+\alpha_5}$. Consequently,  we define the Weyl group of $\asu$ as being
$W_{ \asu } =\{ 1, s_{\alpha_4+\alpha_5} \}$. This is in agreement with the restricted root system describing $\asu$ given in Appendix \ref{app:restricted}. The restricted root system of $\asu$ is $(BC)_1$ \cite{Helgason:1978} and the Weyl group of $(BC)_1$ is generated by one restricted root $\lambda_2$  (see (\ref{eqn:restricted})) which precisely correspond to the root $\alpha_4+\alpha_5$ in $A_2$.\footnote{The fact that the restricted root system of $\mathfrak{su}(2,1)$ is of non-reduced type has interesting consequences for the behaviour of $D=4$ Einstein-Maxwell gravity in the vicinity of a space-like singularity (``BKL-limit''). For information on these aspects of Maxwell-Einstein gravity, we refer to \cite{HenneauxJulia,Henneaux:2007ej}.  }

We now determine the element $\mathcal{W}$ of  SU(2,1) corresponding to the Weyl transformation  $s_{\alpha_4+\alpha_5}$ and acting by conjugation on the coset element $\mathcal{V}$ namely: $\mathcal{V}^\prime=\mathcal{W} \, \mathcal{V} \, \mathcal{W}^{-1} $.  The conjugate action on  $\mathcal{V}$ implies a conjugate action on $\mathcal{P}=\tfrac{1}{2}\big( d\mathcal{V} \mathcal{V}^{-1}\,  - \,  \Omega_4(d\mathcal{V} \mathcal{V}^{-1}) \big)$, if  $\mathcal{W}$ pertains to the invariant subgroup under the temporal involution $\Omega_4$ namely $\mathrm{K}^{*} =\mathrm{SL}(2, \mathbb R) \times \mathrm{U(1)}$ (see (\ref{eqn:temporal}) and Appendix \ref{app:k*}). We will check below that it is indeed the case.  In order to find $\mathcal{W}$ we use  (\ref{eqn:weyl5}) which translate at the level of the $A_2$ algebra into 
\begin{equation} \label{eqn:weylal} \begin{split}
&\mathcal{W} \,  E_4 \, \mathcal{W}^{-1} = \epsilon\,  F_5,\\
&\mathcal{W} \, E_5 \, \mathcal{W}^{-1} = \epsilon\,  F_4,
\end{split}
\end{equation}
 while on the generators of $\asu$ we get (see \eqref{eqn:su21+++Chev})
\begin{equation} \label{eqn:weylalsu} \begin{split}
&\mathcal{W} \,  e_4 \, \mathcal{W}^{-1} = \epsilon\,  f_4\, ,\\
&\mathcal{W} \, e_5 \, \mathcal{W}^{-1} =-  \epsilon\,  f_5\, ,
\end{split}
\end{equation}
where $\epsilon$ is a plus or minus sign\footnote{\label{foot:sign}This arises since step operators are representations of the Weyl group up to signs.}.

 Demanding the equations (\ref{eqn:weylal}) to be satisfied and imposing  $\mathcal{W}^2=1$ determine  $\mathcal{W}$ univocally, we get:
\begin{equation}
\label{eqn:weylgf}
\mathcal{W}= \exp{[-\tfrac{\pi}{2} \, h_5]} \, \exp{[\tfrac{\pi}{2} \, (e_{4,5}+f_{4,5})]},
\end{equation}
which fixes $\epsilon=-1$.
The generators $h_5$ and $(e_{4,5}+f_{4,5})$ pertaining both to $\mathfrak{k^*}= \mathfrak{sl}(2,\mathbb{R})\oplus \mathfrak{u}(1)$  and  $\mathfrak{k}= \mathfrak{su}(2)\oplus \mathfrak{u}(1)$ (see Appendices \ref{app:k} and \ref{app:k*}), the element $\mathcal{W}$ belongs to both $\mathrm{K}^*$ and $\mathrm{K}$, ensuring the validity of the procedure to derive it.

We are know in the position to derive the effect of the Weyl transformation on the BPS solutions given by  (\ref{Isom}). Since the element $\mathcal{W} \in \mathrm{K}^*$, to see how the four charges transforms we can just conjugate by $\mathcal{W}$  the charge matrix  (\ref{eqn:charges}). We find that under $\mathcal{W}$  the charges transform as:
\begin{equation}
\label{eqn:transfochw}
(m,n,q,h) \ \stackrel{\mathcal W}{\longrightarrow}\ (-m,-n,q,h) .
\end{equation}
This Weyl transformation maps physical solutions with positive charges to unphysical solutions with negative charges.

  \subsubsection{Effect of Weyl reflections on space-time signature} \label{sec:signa}
  In this section we will focus on the Weyl group of $\asuppp$ and we will study the effect of Weyl reflections on the space-time signature $(1,3)$ of the $\mathcal N=2$ supergravity theory in $D=4$. First, recall that a Weyl transformation of a generator $T$ of a Lorentzian algebra $\mathfrak{g}^{+++}$ can be expressed as a conjugation by a group element $U_W$ of $\mathrm{G}^{+++}$: $T \longrightarrow  U_W\,  T \, U^{-1}_W $. Because of the non-commutativity of Weyl reflections with the temporal involution $\Omega_i$ (defined in (\ref{eqn:temporal}))
  \be \label{eqn:nocomu}
  U_W\, (\Omega_i T)\,  U_W^{-1}= \Omega '\,  (U_W T U^{-1}_W)\, , 
   \ee
 different Lorentz signatures $(t,s)$ (where $t(s)$ is the number of time (space) coordinates) can be obtained \cite{Keurentjes:2004bv, Englert:2004ph}. The analysis of signature changing has been done for all $\mathfrak{g}^{+++}$ that are very-extensions of a simple split Lie algebra $\mathfrak{g}$ \cite{deBuyl:2005it, Keurentjes:2005jw}. In these cases, Weyl reflections with respect to a root of gravity line\footnote{The gravity line is the set of the simple roots of the $\mathfrak {sl}(n, \mathbb R)$-part of $\mathfrak{g}^{+++}$.  It corresponds in the case of  $\asuppp$ to the roots $\alpha_1, \,\alpha_2$ and $\alpha_3$. } do not change the global Lorentz signature $(t,s)$ but it changes only the identification of the time coordinate. In fact, only Weyl reflections with respect to roots not belonging to the gravity line can change the global signature of the theory. We will now study the possible signature changing induced by Weyl reflections of the non-split real form $\asuppp$.\\
 
 The Weyl group of $\asuppp$ namely $W_{\asuppp}$ is generated by the Weyl reflection $s_{\alpha_4+\alpha_5}$ belonging to $W_{\asu}$ and by the simple Weyl reflections with respect to the roots of gravity line $s_{\alpha_1}, \, s_{\alpha_2},\, s_{\alpha_3}$. Because of the presence of the affine Weyl reflection $s_{\alpha_3}$,  the Weyl group $W_{\asuppp}$ becomes infinite-dimensional
 \be
 W_{\asuppp}=\{1, s_{\alpha_1}, \, s_{\alpha_2}, \, s_{\alpha_3},\,  s_{\alpha_4+ \alpha_5}, \ldots       \}\, .
 \ee
 
 \vspace{.3cm}
\noindent $ \bullet \ $  \textbf{The effect of the Weyl reflection } $\mathbf{s_{\alpha_1} }$

\vspace{.2cm}
 As is the case for split forms,  Weyl reflections with respect to the gravity line of $\asuppp$ will not change the global signature $(1,3)$ but it will only change the identification of time index. The roots of the gravity line are indeed not affected by arrows and  they are all non-compact roots as for split real form. Let us recall a simple example of the consequence of the Weyl reflection $s_{\alpha_1}$ on the space-time signature \cite{Englert:2004ph}. We start with the temporal involution $\Omega_1$ allowing the index $1$ to be the time index.  Applying (\ref{eqn:nocomu}) to
the Weyl reflexion $s_{\alpha_1}$  generates from 
$\Omega_i \equiv\Omega_1$ a new involution $\Omega' \equiv \Omega_2$ such that
\be \begin{split}\begin{aligned}
\label{permute}
U_1\, \Omega_1 K^2_{\ 1} \, U^{-1}_1&= \rho \, K^2_{\ 1} = \rho\, \Omega_2\, 
 K^1_{\ 2}, \\
U_1\, \Omega_1 K^1_{\ 3} \, U^{-1}_1&= \sigma \, K^3_{\ 2} =
\sigma \, \Omega_2\, 
 K^2_{\ 3} \, ,\\
U_1\, \Omega_1 K^i_{\ i +1} \, U^{-1}_1&= -\tau\,  K^{i+1}_{\ \, i} =
\tau\, \Omega_2 \,  K^i_{\ i +1}\quad i >2\, ,
\end{aligned} \end{split} \ee

\noindent where $\rho,\sigma,\tau$ are plus or minus signs (see footnote \ref{foot:sign}). The equations (\ref{permute}) illustrate the general result
that such signs always cancel in the determination of
$\Omega^\prime$ because they are identical in the Weyl transform of
corresponding positive and negative roots, as
their commutator is in the Cartan subalgebra which  forms a true
representation of the Weyl group. The content of (\ref{permute}) is
represented in Table \ref{tablin}. 
The signs below the generators of the gravity
line indicate the sign in front of the
 negative step operator obtained by the involutions $\Omega_1$ and $\Omega_2$ (see (\ref{eqn:temporal})): a
minus sign indicates that the indices in
$K^m_{\ m +1}$ are both either space or time indices while a plus sign
indicates that one index must be time and the other  space.
\begin{table}[t]
\begin{center}
\begin{tabular}{|c|ccc|c|}
\hline
&$K^1_{\ 2}$&$K^2_{\ 3}$&$K^3_{\ 4}$&time coordinate\\
\hline
$\Omega_1$&$+$&$-$&$-$&1\\
\hline$\,\Omega_2$&$+$&$+$&$-$&2\\
\hline
\end{tabular}
\caption{\sl \small Involution switches from $\Omega_1$  to
$\Omega_2$ in $\asuppp$ due to the Weyl reflection $s_{\alpha_1}$.}
\label{tablin}
\end{center}
\end{table}

The Table \ref{tablin} shows that
the  time coordinates in
$\asuppp$ must now be identified either with 2, or with all indices
$\neq 2$. We choose the first description, which leaves
unaffected coordinates attached to planes invariant under the Weyl
transformation. More generally, by Weyl reflections with respect to a root of the gravity line, it is possible to identify the time index to any $\mathfrak{sl}(4, \mathbb R)$ tensor index.

 \vspace{.3cm}
\noindent $ \bullet \ $  \textbf{The effect of the Weyl reflection } $\mathbf{s_{\alpha_4+\alpha_5} }$

\vspace{.2cm}

We will now study the effect of the particular Weyl reflection $s_{\alpha_4+\alpha_5}$ on the space-time signature $(1,3)$. We will first act with $s_{\alpha_4 +\alpha_5}$ on the generators of $A_2^{+++}$ to find then the transformation of the generators of $\asuppp$. Only the simple roots $\alpha_3, \alpha_4$ and $ \alpha_5$  are modified by this reflection. Its action  on the roots $\alpha_4$ and $\alpha_5$  is done in (\ref{eqn:weyl5}) while  on the root $\alpha_3$, it acts as
\be
s_{\alpha_4+\alpha_5} (\alpha_3)= \alpha_3 + 2 (\alpha_4+ \alpha_5) \,.
\ee
Note that the root $\alpha_3$ is transformed in a root of level $\ell=(2,2)$, the root $\alpha_4$ to a negative root of level $\ell=(0,-1)$ and the root $\alpha_5$ to a negative root of level $\ell=(-1,0)$ (see Table \ref{tab:levdeca2}).
The generators associated to roots $\alpha_3, \alpha_4$ and  $\alpha_5$ are modified respectively as 
\be \begin{split}
\mathcal{W}\, K^3_{\ 4} \mathcal{W}^{-1}&= \gamma\, \big[\overbrace{[K^3_{\ 4}, S^{44}   ]}^{2 S^{34}} ,S^{44} \big] \\\
&= \gamma\, 2\,  R^{44|34}\, ,\\
\mathcal{W}\, R^{4} \mathcal{W}^{-1} &= \epsilon\, \tilde{R}_{4}\, ,\\
 \mathcal{W}\, \tilde{R}^{4} \mathcal{W}^{-1} &= \epsilon\, R_{4}\, .
\end{split} \ee
Using the Table \ref{tab:levdecsu}, we find how the generators of $\asuppp$ transform under this Weyl reflection
\be \begin{split}
\mathcal{W}\, K^3_{\ 4} \mathcal{W}^{-1}&= \gamma\, [\overbrace{[K^3_{\ 4}, \tfrac{i}{2} s^{44}   ]}^{i s^{34}} , \tfrac{i}{2}s^{44}] \\
&= - \tfrac{1}{2} \gamma\, r^{44|34}\, ,\\
\mathcal{W}\, r^{4} \mathcal{W}^{-1} &= \epsilon\, r_{4}\, ,\\
 \mathcal{W}\, \tilde{r}^{4} \mathcal{W}^{-1} &= -  \epsilon\, \tilde{r}_{4}\, .
\end{split}\ee
If we apply (\ref{eqn:nocomu}) and (\ref{eqn:temporal}), we find the action of $\Omega'$ on these generators:
\be \label{eqn: detailsinv}\begin{split}
\gamma\,  \Omega' K^3_{\ 4} &= - \tfrac{1}{2}\, \Omega' \, (\mathcal{W} r^{44|34}   \mathcal{W}^{-1})= - \tfrac{1}{2}\mathcal{W} \underbrace{\Omega_i\,  r^{44|34}}_{- \epsilon_3 \epsilon_4 r_{44|34}}  \mathcal{W}^{-1}\\
&= \gamma (-) \epsilon_3 \epsilon_4 K^4_{\ 3}\, ,\\
\epsilon\,  \Omega' r^4 &= \Omega' ( \mathcal{W} r_4         \mathcal{W}^{-1})=   \mathcal{W} \underbrace{\Omega_i r_4 }_{- \epsilon_4 r^4}       \mathcal{W}^{-1}\, ,\\
&=  \epsilon  (-\epsilon_4) r_4\\
- \epsilon \Omega' \tilde{r}^4 &= \Omega' (\mathcal{W} \tilde{r}_4         \mathcal{W}^{-1})=   \mathcal{W} \underbrace{\Omega_i\,  \tilde{ r}_4 }_{ \epsilon_4 \tilde{r}^4}       \mathcal{W}^{-1}\\
&= - \epsilon  (\epsilon_4) \tilde{r}_4\, .
\end{split} \ee  
From (\ref{eqn: detailsinv}), one gets 
\be \label{eqn:omegaf} \begin{split} \begin{aligned}
\Omega' K^{3}_{\ 4}&=\  - \epsilon_{3}\,  \epsilon_{4} \, K ^{4}_{\ 3} & &= \Omega_i \,  K^{4}_{3}\, ,\\
\Omega' r^{4}&=\ - \epsilon_4\,  r_4& & = \Omega_i \, r^4\, ,\\
\Omega' \tilde{r}^{4}&= \ \epsilon_4\, \tilde{ r}_4 &&= \Omega_i\,  \tilde{r}^4\, .
\end{aligned} \end{split} \ee
We find in (\ref{eqn:omegaf}) that the involution $\Omega' $ acts exactly in the same way that the involution $\Omega_i$ defined by (\ref{eqn:temporal}). We can then conclude that the Weyl reflection $s_{\alpha_4+ \alpha_5}$ does not affect the signature $(1,3)$ of $\mathcal{N}=2$ supergravity theory in $D=4$.

\setcounter{equation}{0}
\section{Embedding of $\asuppp$ in $\mathfrak{e}_{11}$}
\label{sec:Embedding}

In this final section, we find a regular embedding of $\asuppp$ in the split real form of $\mathfrak{e}_{11}$ \footnote{An embedding of $\mathfrak{su}(2,1)$ in $\mathfrak{e}_{8(8)}$ has been discussed in \cite{Gunaydin:2001bt}.} . This embedding will be derived using
elegant arguments from brane physics. We will relate between themselves different extremal brane configurations of eleven-dimensional supergravity and pure $\mathcal{N}=2$ supergravity in $D=4$. We first describe the brane setting we use.

\subsection{The brane setting}

We build an extremal brane configuration  allowed by the intersection rules \cite{Argurio:1997gt,Argurio:1998cp} leading upon dimensional reduction down to four to an extremal Reissner-Nordstr\"om  electrically charged black hole solution \cite{Maldacena:1996ky}  of $\mathcal{N}=2$ supergravity in $D=4$.

The configuration, that we denote by configuration {\bf\sf A}, built out of two extremal M5 branes and two extremal M2 branes  is the following (again we choose the direction 4 to be time-like):

\begin{table}[h]
\begin{center}
\begin{tabular}{|c|cccc|ccccccc|}
\hline
Branes &1& 2 & 3 & 4 & 5 & 6 & 7 & 8 & 9 & 10 & 11\\
\hline\hline
$A_1$=M5 &\, & \, & \, &$\bullet$ & \, & \, &$\bullet$ &$\bullet$  &$\bullet$ &$\bullet$& $\bullet$ \\
\hline
$A_2$=M5 &\, & \, & \, &$\bullet$ & $\bullet$ &$\bullet$ &\, &\,   &$\bullet$ &$\bullet$& $\bullet$ \\
\hline
$A_3$=M2 &\, & \, & \, &$\bullet$ & \, &$\bullet$ &\, &$\bullet$  &\, &\, & \, \\
\hline
$A_4$=M2 &\, & \, & \, &$\bullet$ &$\bullet$ & \, &$\bullet$ &\, &\, &\,& \, \\
\hline
\end{tabular}
\end{center}
\caption{\sl \small  Configuration \textbf{\sf{A}}: the extremal brane configuration leading to a four-dimensional extremal Reissner-Nordstr\"om  electrically charged black hole. The directions 1 to $4$ are non-compact (where 4 is time) and the directions 5 to 11 are compact.}
\label{tab:braneconf1}
\end{table}

This extremal configuration is generically characterised by four different harmonic functions in three dimensions, one for each brane. Here we choose the harmonic function to be the same for all the branes: $H= 1+\frac{q}{r}$ where $r$ is the radial coordinate in the four-dimensional non-compact space-time (we denote also $\phi \in [0, 2\pi] $ and $\theta \in [0, \pi]$ the usual angles, considering spherical coordinates). The metric of this intersecting branes configuration, depending only on the $q$ parameter  is:
\begin{equation}
ds^2_{11}=-H^{-2} dx_4^2 +H^2 (dx_1^2+dx_2^2+dx_3^2) + \sum_{i=5}^{11} dx_i^2.
\label{conf11}
\end{equation}
Upon dimensional reduction down to four dimensions the metric (\ref{conf11}) is the four-dimensional extremal  
Reissner-Nordstr\"om  electrically charged black hole solution of $\mathcal{N}=2$ supergravity in $D=4$
given by (\ref{Isom}) with $m=q$ and $n=h=0$ and with $t=x^4$.

The eleven-dimensional solution is characterised by four non-zero components $A^{(i)}, \  i=1 \dots 4$, of the three form potential, one for each brane. These are given by (see for instance \cite{Argurio:1998cp}):
\begin{equation}
\label{Acompo}
A^{(1)}= A_{\phi 5 6}, \qquad A^{(2)}= A_{\phi 78}, \qquad A^{(3)}=A_{468}, \qquad A^{(4)}=A_{457}.
\end{equation}
The corresponding non-vanishing components of the  field strengths are such that $\star F^{(1)}=\star F^{(2)}=F^{(3)}= F^{(4)}=\partial_r(H^{-1})$ where $\star$ denotes the Hodge dual in eleven dimensions.  As a consequence, if we want to interpret the configuration after dimensional reduction as an electric Reissner-Nordstr\"om black hole we have to identify the four-dimensional Maxwell field strength ${}^{(4)} F$ of (\ref{eqn:SugraAction4d}) as being the dimensional reduction of the diagonal eleven-dimensional field strength  ${}^{(11)}F^{diag} \equiv \star F^{(1)}+\star F^{(2)}+  F^{(3)}+ F^{(4)}$. This gives indeed back (\ref{Isom}) with $m=q$ and $n=h=0$.

Having the eleven-dimensional origin of the electrically charged extremal Reissner-Nordstr\"om black hole, we can now easily deduce the eleven-dimensional configuration corresponding to the magnetically charged extremal Reissner-Nordstr\"om by Hodge dualising {\it in four dimensions}  (i.e. the internal coordinates $x_i,  \ i=5 \dots  11$, playing now a passive role) and uplifting back to eleven dimensions. One immediately deduces that the non-zero components  ${\tilde A}^{(i)}$ of the dual configuration are:
\begin{equation}
\label{Acompo2}
{\tilde A}^{(1)}= A_{4 5 6}, \qquad {\tilde A}{(2)}= A_{478}, \qquad {\tilde A}^{(3)}=A_{\phi 68}, \qquad {\tilde A}^{(4)}=A_{\phi 57}.
\end{equation}
From (\ref{Acompo}), we deduce that the dual configuration, denoted with the letter {\bf\sf B}, is the one given in Table \ref{tab:braneconf2}.
\begin{table}[h]
\begin{center}
\begin{tabular}{|c|cccc|ccccccc|}
\hline
Branes &1& 2 & 3 & 4 & 5 & 6 & 7 & 8 & 9 & 10 & 11\\
\hline\hline
$B_1$=M2 &\, & \, & \, &$\bullet$ &$\bullet$ & $\bullet$ & \, &\, &\, &\,& \,\\
\hline
$B_2$=M2 &\, & \, & \, &$\bullet$ & \, &\, &$\bullet$ &$\bullet$  &\, &\,& \, \\
\hline
$B_3$=M5 &\, & \, & \, &$\bullet$ &$\bullet$ &\, &$\bullet$ &\, &$\bullet$ &$\bullet$ &$\bullet$ \\
\hline
$B_4$=M5 &\, & \, & \, &$\bullet$ &\, &$\bullet$ &\, &$\bullet$ &$\bullet$ &$\bullet$&$\bullet$ \\
\hline
\end{tabular}
\end{center}
\caption{\sl \small Configuration \textbf{\sf B}: the extremal brane configuration leading to a four-dimensional extremal Reissner-Nordstr\"om  magnetically charged black hole.}
\label{tab:braneconf2}

\end{table}

The knowledge of the two dual configurations in eleven dimensions will permit us to find an embedding of 
$\asuppp$ in $\mathfrak{e}_{11}$. In order to do that we first recall how branes are encoded in the algebraic structure of $\mathfrak{e}_{11}$.

\subsection{Description of the brane configuration in $\mathfrak{e}_{11}$}

We first briefly recall the algebraic structure of $\mathfrak{e}_{11}$. The Dynkin diagram is depicted in Figure
\ref{fig:e11}.

\begin{figure}[h] 
\begin{center}
\scalebox{.8}{
\begin{pgfpicture}{15cm}{-0.5cm}{1cm}{2.5cm}
\pgfnodecircle{Node1}[stroke]{\pgfxy(1,0.5)}{0.25cm}
\pgfnodecircle{Node2}[stroke]
{\pgfrelative{\pgfxy(1.5,0)}{\pgfnodecenter{Node1}}}{0.25cm}
\pgfnodecircle{Node3}[stroke]
{\pgfrelative{\pgfxy(1.5,0)}{\pgfnodecenter{Node2}}}{0.25cm}
\pgfnodecircle{Node4}[stroke]
{\pgfrelative{\pgfxy(1.5,0)}{\pgfnodecenter{Node3}}}{0.25cm}
\pgfnodecircle{Node5}[stroke]
{\pgfrelative{\pgfxy(1.5,0)}{\pgfnodecenter{Node4}}}{0.25cm}
\pgfnodecircle{Node6}[stroke]
{\pgfrelative{\pgfxy(1.5,0)}{\pgfnodecenter{Node5}}}{0.25cm}
\pgfnodecircle{Node7}[stroke]
{\pgfrelative{\pgfxy(1.5,0)}{\pgfnodecenter{Node6}}}{0.25cm}
\pgfnodecircle{Node8}[stroke]
{\pgfrelative{\pgfxy(1.5,0)}{\pgfnodecenter{Node7}}}{0.25cm}
\pgfnodecircle{Node9}[stroke]
{\pgfrelative{\pgfxy(1.5,0)}{\pgfnodecenter{Node8}}}{0.25cm}
\pgfnodecircle{Node10}[stroke]
{\pgfrelative{\pgfxy(1.5,0)}{\pgfnodecenter{Node9}}}{0.25cm}
\pgfnodecircle{Node11}[stroke]
{\pgfrelative{\pgfxy(0,1.5)}{\pgfnodecenter{Node8}}}{0.25cm}

\pgfnodebox{Node12}[virtual]{\pgfxy(1,0)}{$\alpha_{1}$}{2pt}{2pt}
\pgfnodebox{Node13}[virtual]{\pgfxy(2.5,0)}{$\alpha_{2}$}{2pt}{2pt}
\pgfnodebox{Node14}[virtual]{\pgfxy(4,0)}{$\alpha_{3}$}{2pt}{2pt}
\pgfnodebox{Node15}[virtual]{\pgfxy(5.5,0)}{$\alpha_{4}$}{2pt}{2pt}
\pgfnodebox{Node16}[virtual]{\pgfxy(7,0)}{$\alpha_{5}$}{2pt}{2pt}
\pgfnodebox{Node17}[virtual]{\pgfxy(8.5,0)}{$\alpha_{6}$}{2pt}{2pt}
\pgfnodebox{Node18}[virtual]{\pgfxy(10,0)}{$\alpha_{7}$}{2pt}{2pt}
\pgfnodebox{Node19}[virtual]{\pgfxy(11.5,0)}{$\alpha_{8}$}{2pt}{2pt}
\pgfnodebox{Node20}[virtual]{\pgfxy(13,0)}{$\alpha_{9}$}{2pt}{2pt}
\pgfnodebox{Node21}[virtual]{\pgfxy(14.5,0)}{$\alpha_{10}$}{2pt}{2pt}
\pgfnodebox{Node22}[virtual]{\pgfxy(11.5,2.5)}{$\alpha_{11}$}{2pt}{2pt}

\pgfnodeconnline{Node1}{Node2} \pgfnodeconnline{Node2}{Node3}
\pgfnodeconnline{Node3}{Node4}\pgfnodeconnline{Node4}{Node5}
\pgfnodeconnline{Node5}{Node6} \pgfnodeconnline{Node6}{Node7}
\pgfnodeconnline{Node7}{Node8}\pgfnodeconnline{Node8}{Node9}
\pgfnodeconnline{Node9}{Node10} \pgfnodeconnline{Node8}{Node11}

\end{pgfpicture}  }
\caption { \sl \small Dynkin diagram of  $\mathfrak{e}_{11}$.} 
\label{fig:e11}
\end{center}
\end{figure}

The Lorentzian Kac-Moody algebra $\mathfrak e_{11}$ contains a  subalgebra $\mathfrak{gl}(11, \mathbb R)$ such that $\mathfrak{sl}(11, \mathbb R) \cong A_{10}
\subset \mathfrak{gl}(11, \mathbb R) \subset
\mathfrak e_{11}$. We can again perform a level decomposition of $\mathfrak e_{11}$. The level $l$ here
is defined by the number of times the root $\alpha_{11}$ appears in the decomposition of the adjoint representation of $\mathfrak{e}_{11}$ into irreducible representation of $A_{10}$. The first levels up to $l=3$ are listed in Table \ref{tab:leve11} \cite{West:2002jj,Nicolai:2003fw}. Here, the indices are vector indices of $\mathfrak{sl}(11,\mathbb{R})$ and hence take values $a=1,\ldots,11$.

\begin{table}[h]
\begin{center}
\begin{tabular}{|c|c|c|}
\hline
$l$ &$ \mathfrak{sl}(11,\mathbb{R})$ Dynkin labels& Generator of $\mathfrak{e}_{11}$\\
\hline \hline
$0$ &$[ 1,0,0,0,0,0,0,0,0,1] $ & $K^a_{\ b}$\\
$1$ &$[ 0,0,0,0,0,0,0,1,0,0]  $ & $R^{\,abc}$\\
$2$ &$[ 0,0,0,0,1,0,0,0,0,0]  $ & $R^{\, abcdef}$\\
$3$ &$[ 0,0,1,0,0,0,0,0,0,1]  $ & $\tilde{R}^{\, abcdefgh|i}$\\

\hline
\end{tabular}
\caption{\sl \small Level decomposition of $\mathfrak{e}_{11}$ under $\mathfrak{sl}(11,\mathbb{R})$ up to level $l=3$. }
\label{tab:leve11}
\end{center}
\end{table}
The positive Chevalley generators of
$\mathfrak{e}_{11}$ are
${\tilde e}_m=\delta_m^{a} K^a{}_{a+1},\ m=1,\ldots ,10$, and  ${\tilde e}_{11}= R^{\, 9\,  
10\,11}$
where
$R^{abc}$ is the level 1 generators in $\mathfrak{e}_{11}$. One gets for the Cartan generators

\be
\begin{split}
\begin{aligned}
\label{aa} {\tilde h}_m&=\ \delta_m^{a}(K^a{}_a-K^{a+1}{}_{a+1}) \qquad \mathrm{for}  \
m=1,\dots,10\, ,\\
 {\tilde h}_{11}&=\ -\frac{1}{3}(K^1{}_1+\ldots +K^8{}_8) +\frac{2} 
{3}(K^9{}_9+
K^{10}{}_{10}+K^{11}{}_{11})\,.
\end{aligned}
\end{split}
\ee

We now recall how the extremal branes of eleven-dimensional supergravity are encoded in
the algebraic structure of $\mathfrak{e}_{11}$ (see \cite{Englert:2003py,Englert:2004it,Englert:2004ph,West:2004st}).

Each extremal brane $B_i$ corresponds  to one real root $\alpha_{B_i}$ (or one positive step operator) of $\mathfrak{e}_{11}$ and the description is always electric namely each M2 brane  is described by a definite component of the three form potential at level one and each M5 is described by a component of the six-form potential of level two. The non-zero component is the one with the indices corresponding to longitudinal directions of the extremal brane $B_i$.  The only other non-zero fields are the Cartan ones which encode the form of the metric \cite{Englert:2003py}. The intersection rules \cite{Argurio:1997gt} are neatly encoded through a pairwise orthogonality condition between the roots corresponding to each brane \cite{Englert:2004it}.

It is worthwhile to recall that such an algebraic description of extremal brane configurations extends to all space-time theories characterized by a $\mathfrak{g}^{+++}$ with simple $\mathfrak{g}$. Here, we will see that it also applies to  pure $\mathcal{N}=2$ supergravity in $D=4$ where $\mathfrak{g}^{+++}=\mathfrak{su}(2,1)^{+++}$, this will be crucial in the next subsection to uncover the embedding.

In Table \ref{tab:step}, we list the positive step operators corresponding to each brane entering in configuration {\bf\sf A} and {\bf\sf B}.
\begin{table}[h]

\begin{center}
\begin{tabular}{|c|c||c|c|}
\hline
Brane of conf. {\bf\sf A} &  step operator  &Brane of conf. {\bf\sf B} &   step operator  \\
\hline\hline
$A_1$ & $R^{4\, 7\, 8\, 9\, 10\, 11}$&  $B_1$ & $R^{4\,5\, 6}$  \\

$A_2$ & $R^{\, 4\, 5\, 6\, 9\, 10 \, 11}$ &  $B_2$ & $R^{4\, 7\, 8}$ \\

$A_3$ & $R^{4\, 6\, 8}$ &  $B_3$ &  $R^{4\, 5\, 8\, 9\, 10\, 11}$\\

$A_4$ & $R^{4\, 5\, 7}$ &  $B_4$ & $R^{4\, 6\,8\, 9\, 10\,11}$\\

\hline
\end{tabular}
\end{center}
\caption{\small The positive step operator corresponding to each brane of configurations {\bf\sf A} and {\bf \sf B}.}
\label{tab:step}
\end{table}
Since all the harmonic functions are the same in   configuration {\bf\sf A} and the dual one {\bf\sf B},  each one is characterized by an unique element of $\mathfrak{e}_{11}$.  We have
\begin{eqnarray}
\label{stepconA}
\rm{conf. \, \bf \sf{A}}& \Leftrightarrow & c\,  ( \epsilon_1 R^{4\, 7\, 8\, 9\, 10\, 11}+\epsilon_2 R^{4\, 5\, 6\, 9\, 10\, 11}+\epsilon_3 R^{4\, 6\, 8}+\epsilon_4 R^{4\, 5\, 7})\, , \\
\label{stepconB}
\rm{conf. \, \bf\sf{B}} &\Leftrightarrow& c\, (\epsilon_1 R^{4\, 5\, 6}+\epsilon_2 R^{4\, 7\, 8}+\epsilon_3 R^{4\, 5\, 7\, 9\, 10\, 11}+\epsilon_4 R^{4\, 6\, 8 \, 9 \, 10\, 11}) \, ,
\end{eqnarray}
where $c$ is a real constant and $\epsilon_i, i=1 \dots 4$ are signs. We will fix them in the next section.

\subsection{The regular embedding}

We are now in the position to find a regular embedding of $\asuppp$ in $\mathfrak{e}_{11}$.\footnote{An embedding of the split $\mathfrak{g}_2^{+++}$ in $\mathfrak{e}_{11}$ was found in \cite{Kleinschmidt:2008jj}. In this reference additional generators were added to take into account the higher rank forms that can be added consistently to the supersymmetry algebra in $D=5$ and to the tensor hierarchy~\cite{Gomis:2007gb,deWit:2008ta}.} We first discuss the non-compact Cartan generators of $\asuppp$: $h_i$ with $i=1, \ldots, 4$.

\subsubsection{The non-compact Cartan generators of $\asuppp$  }

We first recall that we can describe the Cartan fields of $\mathfrak{e}_{11}$ in two bases, the $\mathfrak{gl}(11, \mathbb R)$ one and the Chevalley base given by the ${\tilde h}_m$ (see (\ref{aa})). The relation between these two bases  (see \eqref{basecartsu21}) is:
\begin{equation}
\sum_{a=1}^{11} \, p_a K^a_{\ a}=\sum_{a=1}^{11} \, q_a\,  \tilde{h}_a\, .
\label{basecart}
\end{equation}
To find the non-compact Cartan generators of $\asuppp$ out of the eleven Cartan generators of $\mathfrak{e}_{11}$, we have simply to enforce 
\begin{equation}
p_a =0\, , \qquad a=5, \dots, 11.
\label{embc1}
\end{equation}
One can easily understand  this embedding condition in several different ways. In the brane context by noticing that the metric (\ref{conf11}) is characterized by $g_{aa}=1$ for all the longitudinal coordinates $(a=5 \dots 11)$. In a more general way this amounts to demanding that all the scalars coming from the dimensional reduction from eleven down to four should be zero. It is  a necessary condition to have a consistent truncation of eleven-dimensional supergravity to pure $\mathcal{N}=2$ supergravity in $D=4$.

Using (\ref{basecart}) we can translate the embedding condition (\ref{embc1}) in terms of the $q_a$'s using (\ref{aa}), we find
\be \begin{split} \begin{aligned} \label{embc2}
q_a &= \tfrac{a-2}{3}\,  q_{11}, \qquad a=4, \dots, 8\, , 
\\
q_9&=\tfrac{4}{3}\,  q_{11},\\
q_{10}&=\tfrac{2}{3}\,  q_{11}.
\end{aligned} \end{split} \ee
Plugging back  (\ref{embc2}) into the Cartan fields of $\mathfrak{e}_{11}$ in the Chevalley basis, we find
\begin{equation}
\sum_{a=1}^{11} q_a\, {\tilde h}_a = q_1 h_1+q_2 h_2+q_3 h_3 +\frac{q_{11}}{3} h_4,
\label{embc4}
\end{equation}
where the $h_i$ are the four non-compact Cartan generators of $\asuppp$ (see (\ref{eqn:su21+++Chev})).
This completes the discussion of the embedding for the non-compact Cartan generators.
\subsubsection{The other generators  of $\asuppp$  }

We now find  the embedding of the simple step operators and of the compact Cartan generator $h_5$.
The simple step operators corresponding to the first three nodes of Figure \ref{fig:su21+++}b are of course trivially identified with the step operators of the first three nodes of Figure \ref{fig:e11}.
We turn to the generators corresponding to the nodes 4 and 5 of Figure {\ref{fig:su21+++}b, respectively $r^4$ and $\tilde{r}^4$. An extremal Reissner-Nordstr\"om  electrically (resp. magnetically) charged black hole is a zero brane (the only longitudinal direction 4 being time-like). We recall that it is described in $\mathfrak{su}(2,1)^{+++}$ by the step operator $r^4$ (resp. $\tilde{r}^4$) \cite{Englert:2003py}. Consequently, using the brane picture expressions (\ref{stepconA}) and (\ref{stepconB}),  we have the identification 
\be \begin{split} \begin{aligned} \label{embr4}
r^4&=\tfrac{1}{\sqrt 2}\, ( \epsilon_1\,  R^{4\, 7\, 8\, 9\, 10\, 11}+\epsilon_2\,  R^{4\, 5\, 6\, 9\, 10\, 11}+\epsilon_3\,  R^{4\, 6\, 8}+\epsilon_4\, R^{4\, 5\, 7})\, ,  \\
\tilde{r}^4&=\tfrac{1}{\sqrt 2} \,  (\epsilon_1\,  R^{4\, 5\, 6}+\epsilon_2\,  R^{4\, 7\, 8}+\epsilon_3\,  R^{4\, 5\, 7\, 9\, 10\, 11}+\epsilon_4\,  R^{4\, 6\, 8 \, 9 \, 10\, 11}) \, ,
\end{aligned} \end{split} \ee
where the constant $c$ in (\ref{stepconA}) and (\ref{stepconB})  has been fixed to fulfill the normalization of    $r^4$ and $\tilde{r}^4$ in $\asuppp$ (see (\ref{eqn:bililevel1})). We still have to determine the signs $\epsilon_i$. We will fix them in the process of  determining  the compact Cartan $h_5$ of  $\asuppp$.  The  commutation relations ({\ref{comutenc45}) imply that basically $h_5$ interchanges the electric and magnetic configuration. The operator $h_5$ embedded  in   $\mathfrak{e}_{11}$ should thus correspond, in the brane picture, to the operator interchanging configuration {\bf\sf A} and {\bf\sf B} (see Tables \ref{tab:braneconf1} and \ref{tab:braneconf2}). In order to map configuration {\bf\sf A} onto configuration {\bf\sf B}, brane by brane (i.e $B_i\rightarrow {\tilde B}_i,  \quad i=1 \dots 4$), we have to perform three operations: a double T-duality in the directions 9 and 10, an exchange of the direction 6 and 7 and an exchange of the direction 5 and 8. A double T-duality in the directions 9 and 10 (followed by the exchange of the directions 9 and 10) is described in $\mathfrak{e}_{11}$ by the Weyl reflection corresponding to the simple root $\alpha_{11}$ (see Figure \ref{fig:e11}) \cite{Englert:2003zs,Elitzur:1997zn,Obers:1998rn}. The associated compact generator is: $R^{9\, 10\, 11}-R_{9\, 10\, 11}$.
The exchange of coordinates 6 and 7 (resp. 5 and 8) is generated by the compact generator 
$K^{6}_{\ 7}-K^{7}_{\ 6}$ (resp. $K^{5}_{\ 8}-K^{8}_{\ 5}$). We thus deduce that
\begin{equation}
\label{embedcartcom}
h_5=K^{6}_{\ 7}-K^{7}_{\ 6}+K^{5}_{\ 8}-K^{8}_{\ 5}+R^{9\, 10\, 11}-R_{9\, 10\, 11},
\end{equation}
we have $(h_5 | h_5)= -6$ as it should (see (\ref{eqn:h5}), (\ref{eqn:bilinearzero})).

To fix the signs in (\ref{embr4}) we use the relation $\left[ r^4, \tilde{r}_4 \right]=h_5$, we find
\be \begin{split} \begin{aligned}
r^4&=\ \tfrac{1}{\sqrt 2}\, (  R^{4\, 7\, 8\, 9\, 10\, 11}+ R^{4\, 5\, 6\, 9\, 10\, 11} - R^{4\, 6\, 8}+\ R^{4\, 5\, 7})
\label{embs1} \, ,\\
\tilde{r}^4&=\ \tfrac{1}{\sqrt 2} \,  ( R^{4\, 5\, 6}+ R^{4\, 7\, 8}- R^{4\, 5\, 7\, 9\, 10\, 11}+ R^{4\, 6\, 8 \, 9 \, 10\, 11})\, , \\
r_4&=\ \tfrac{1}{\sqrt 2}\, (  R_{4\, 7\, 8\, 9\, 10\, 11}+ R_{4\, 5\, 6\, 9\, 10\, 11} - R_{4\, 6\, 8}+\ R_{4\, 5\, 7})\, , \\
\tilde{r}_4&=\ \tfrac{-1}{\sqrt 2} \,  ( R_{4\, 5\, 6}+ R_{4\, 7\, 8}-R_{4\, 5\, 7\, 9\, 10\, 11}+ R_{4\, 6\, 8 \, 9 \, 10\, 11}) .
\end{aligned} \end{split} \ee
We can the check that the definitions (\ref{embedcartcom})-(\ref{embs1}) together with the $h_i, \, i=1\dots 4$ (see (\ref{embc4})) satisfy all the relations of $\asuppp$. 

The expressions (\ref{embedcartcom})-(\ref{embs1}) and (\ref{embc2})-(\ref{embc4}) define thus  a regular embedding of  the non-split $\asuppp$ in the split form of $\mathfrak{e}_{11}$, proving the algebraic counterpart of the truncation of maximal supergravity to the $\mathcal{N}=2$ theory.


\section*{Acknowledgments}
We thank Marc Henneaux and Ella Jamsin for collaboration in the early stages of this project.
We are grateful to  Guillaume Bossard, Hermann Nicolai and Kellogg Stelle for informing us about their work \cite{Bossard:2009at} and for clarifying exchanges.
We are greatly indebted to Riccardo Argurio for interesting discussions on brane physics and have benefited from stimulating conversations with Ling Bao, Fran\c{c}ois Dehouck, Bengt E.W. Nilsson, Jakob Palmkvist, Philippe Spindel, Amitabh Virmani and Vincent Wens.
L.H. is a Senior Research Associate of the Fonds de la Recherche
Scientifique--FNRS, Belgium, A.K. is a Research Associate of the Fonds de la Recherche
Scientifique--FNRS, and N.T. is a FRIA bursar of Fonds de la Recherche Scientifique--FNRS. Work supported in part by IISN-Belgium
(conventions 4.4511.06 and 4.4514.08), by the European Commission FP6 RTN
programme MRTN-CT-2004-005104 and by the Belgian Federal Science Policy Office
through the Interuniversity Attraction Pole P6/11.

\newpage

\appendix
\setcounter{equation}{0}

\setcounter{equation}{0}
\section{Non-linear $\sigma$-models over coset spaces}
 \label{app:SigmaModel}

Here we recall how to define a non-linear $\sigma$-model over a coset space $\mathrm G/\mathrm H$. 
Let $\mathrm G$ be a connected Lie group. Consider its real Lie algebra $\mathfrak{g}$ and an involution $\imath: \mathfrak{g} \rightarrow \mathfrak{g}$. Using the two eigenspaces of $\imath$ we can write the Lie algebra as the direct sum $\mathfrak{g} = \mathfrak{h} \oplus \mathfrak{p}$ where here $\mathfrak{h}$ is the span of the generators fixed under $\imath$ and not the Cartan subalgebra. It  does hence constitute a subalgebra. Let $\mathrm H$ be the closed Lie group corresponding the subalgebra $\mathfrak{h}$. We can now consider $\mathrm H$ as a topological subspace of $\mathrm G$ and define the coset space $\mathrm G/\mathrm H$ as the set of left cosets $\mathrm Hg$. As $\mathrm H$ is closed $\mathrm G/\mathrm H$ can be endowed with a smooth manifold structure. It is now natural to consider $\mathrm G/\mathrm H$ as the base manifold of a $\mathrm H$-principal fiber bundle $\mathrm H \hookrightarrow \mathrm  G \overset{\pi} \rightarrow \mathrm G/\mathrm H$. In fact, in the case when $\imath= \theta$ defines a Cartan decomposition of $\mathfrak{g}$, this bundle is trivial due to the global Iwasawa decomposition.

Consider a $k$-dimensional manifold $W$ with metric $h$. For simplicity we will take $W$ to have a vanishing affine connection. We can now define a $\sigma$-model for smooth maps $\mathcal{V} : W \rightarrow \mathrm G/\mathrm H$, such that $\mathcal{V} : p \mapsto \mathrm H\mathcal{V}(p)$. Locally $\mathcal{V}$ can be described, using the exponential map, by a map $v: W \rightarrow \mathfrak{p}$. Let $k : W \rightarrow \mathrm H$ be a smooth map, $\mathcal{V}$ and $k\mathcal{V}$ define hence the same map into the coset. We will call such maps $k$ gauge transformations for reasons to be clear below. Our main three criteria on the Lagrangian $\mathcal{L}_{\sigma}$ of the $\sigma$ model is that it should be invariant under $\mathrm G$ acting globally on $\mathcal{V}$ from the right, under gauge transformations and such that its equations of motions should be second order in derivatives of $\mathcal{V}$. It is hence natural to define the action in terms of the $\mathfrak{g}$-valued Maurer-Cartan form  on $\mathrm G$ restricted to $\mathrm G/\mathrm H$:
\be 
\mathrm{d}\mathcal{V} \mathcal{V}^{-1}= \mathcal{P}+\mathcal{Q}\,,
\ee
where 
\be \label{eqn:pandq} \mathcal{P} = P_{\imath} (\mathrm{d}\mathcal{V}\mathcal{V}^{-1}), \qquad
\mathcal{Q} = (1-P_{\imath}) (\mathrm{d}\mathcal{V}\mathcal{V}^{-1})\,, 
\ee
with $P_{\imath} : \mathfrak{g} \rightarrow \mathfrak{p}$ is the projection onto the coset algebra.  We therefore write our Lagrangian as
\begin{equation}
\label{eqn:SigmaModelAction}
\mathcal{L}_{\sigma} = \sqrt{|h|} h^{\mu \nu} (\mathcal{P}_{\mu} | \mathcal{P}_{\nu}) ,
\end{equation}
where $( \cdot | \cdot )$ is the Killing form of $\mathfrak{g}$ and $h = \mathrm{det}\ h_{\mu \nu}$. This Lagrangian is manifestly invariant under the right action of $\mathrm G$ on $\mathcal{V}$, and also under gauge transformations acting from the left on $\mathcal{V}$. \\

To derive the equations of motion, consider a variation of $\mathcal{V}$ from the left, i.e. $\mathcal{V}(x) \rightarrow \mathcal{V}'(x)=  e^{\epsilon(x)}\mathcal{V}(x)$, with $\epsilon(x) \in \mathfrak{p}$ infinitesimal. As the action is invariant under the action of local $\mathrm K$-transformations from the left, this is a non-trivial deformation only for $\mathfrak{p}$-valued $\epsilon(x)$. This transformation gives
\begin{equation}
\delta \mathcal{P} =  \mathrm{d}\epsilon + [\mathcal{Q}, \epsilon]\, ,
\end{equation}
 and the equations of motion thus become
\begin{equation}
\label{eqn:SigmaMotion}
\partial_{\mu} (\sqrt{|h|}  h^{\mu \nu} \mathcal{P}_{\nu}) - \sqrt{|h|} h^{\mu \nu}[ \mathcal{Q}_{\mu}, \mathcal{P}_{\nu} ] = 0.
\end{equation}
The $\mathfrak{h}$-valued field $\mathcal{Q}$ transforms under gauge transformations as a connection, i.e. if $k$ is a gauge transformation the connection $\mathcal{Q}'$ derived from $k\mathcal{V}$ is given by $\mathcal{Q}' = k\mathcal{Q}k^{-1} + \mathrm{d}k k^{-1}$. Its appearance in (\ref{eqn:SigmaMotion}) supports this point of view, so will hereafter refer to $\mathcal{Q}$ as the connection. Note that $\mathcal{Q}$ and $\mathcal{P}$ are not independent, but both derived from the map $\mathcal{V}$. The equations of motion (\ref{eqn:SigmaMotion}) is hence invariant under both gauge transformations and global $\mathrm G$-transformations. We can furthermore use Noethers theorem to derive a gauge invariant Lie algebra valued Noether ($k-1$)-form
\begin{equation}\label{eqn:currapp}
\mathcal J^{\mu} =  \sqrt{|h|} h^{\mu \nu} \mathcal{V}^{-1} \mathcal P_{\nu} \mathcal{V} ,
\end{equation}
which is conserved by virtue of the equations of motion. The Noether-form transform in the adjoint representation of $\mathrm G$, so that when $\mathcal{V} \rightarrow \mathcal{V}'=  \mathcal{V} g^{-1}$,
\begin{equation}
\mathcal J^{\mu}  \rightarrow \mathcal{J}'^{\mu}=  g\mathcal{J}^{\mu} g^{-1}.
\end{equation}
Note also that $\mathcal J$ being conserved implies the equations of motion, so (\ref{eqn:SigmaMotion}) and (\ref{eqn:currapp}) are equivalent, which is the natural consequence of the arbitrariness in defining the action of $\mathrm G$ from the right or from the left when deriving the equations of motion.

Let us consider the dynamics of this model. Choose a grading of $\mathfrak{g}$ such that
\begin{equation}
\mathfrak{g} = \underset{\ell}\bigoplus \ \mathfrak{g}_\ell, 
\end{equation}
respected by the involution $\imath$, in the sense that $\imath(\mathfrak{g}_\ell) \subset \mathfrak{g}_{-\ell}$. The restricted root space decomposition \cite{Helgason:1978} provide for example such a grading, another is given by the level decomposition under a regularly embedded subalgebra. We can then choose a base  of every level $\ell$ and $-\ell$ in terms of generators $E^{(\ell)}$ and $F^{(\ell)}$ such that $\imath(E^{(\ell)}) = -F^{(\ell)}$. Note that for a finite Lie algebra, the spaces $\mathfrak{g}_\ell$ are zero for $|\ell|$ bigger than some given $\mathrm N$, and if $\mathrm{dim}\ \mathfrak{g}_\ell > 1$, $E^{(\ell)}$ (and $F^{(\ell)}$) has some additional index, enumerating these generators.  The algebra-valued function $\mathcal{P}$ can now be expanded, with respect to this grading \cite{Damour:2004zy}, as
\begin{equation}
\label{eqn:generalP}
\mathcal{P} = \frac{1}{2}P_{(0)} K^{(0)} + \frac{1}{2} \sum_{\ell \geq 1} P_{(\ell)} (E^{(\ell)} + F^{(\ell)})\, ,
\end{equation}
and we write $K^{(0)}$ for the elements in $\mathfrak{p}$ at level zero, i.e. $K^{(0)} \in P_{\imath}(\mathfrak{g}_0)$, defining $J^{(0)}$ to span their complement in $\mathfrak{g}_0$. The connection $\mathcal{Q}$ is similarly written
\begin{equation}
\label{eqn:generalQ}
\mathcal{Q} = \frac{1}{2}Q_{(0)} J^{(0)} + \frac{1}{2} \sum_{\ell \geq 1} P_{(\ell)} (E^{(\ell)} - F^{(\ell)})\, .
\end{equation}
Inserting these expressions into (\ref{eqn:SigmaMotion}) we see that, as all the generators are linearly independent, (\ref{eqn:SigmaMotion}) split into one equation for every generator. These equations are to be interpreted as equations of motion, but also as generalized Bianchi identities and constraints, as the parameters $P_{(l)}$ are not all simultaneously physical fields. More explicitly, inserting the expansions \eqref{eqn:generalP} and \eqref{eqn:generalQ} into (\ref{eqn:SigmaMotion}), we get the equation of motion for $P_{(0)}$,

\begin{equation} \begin{split}
\label{eqn:levelzeromotion}
\frac{1}{\sqrt{|h|}} \partial_{\mu} (\sqrt{|h|} h^{\mu \nu} {P_{(0)}}_{\nu}) K^{(0)} + \frac{1}{2} {P_{(0)}}^{\mu}{Q_{(0)}}_{\mu}[J^{(0)}, K^{(0)}] &\\
 + \sum_{\ell \geq 1} {P_{(\ell)}}^{\mu} {P_{(\ell)}}_{\mu} [ E^{(\ell)}, F^{(\ell)}] = 0 ,&
\end{split} \end{equation}
and for the $P_{(\ell)}$'s we get
\be
\label{eqn:higherlevelmotion} \begin{split}
&\frac{1}{\sqrt{|h|}} \partial_{\mu} (\sqrt{|h|} h^{\mu \nu} {P_{(\ell)}}_{\nu})(E^{(\ell)}+F^{(\ell)}) + \frac{1}{2}  {Q_{(0)}}^{\mu} {P_{(\ell)}}_{\mu}  [J^{(0)}, E^{(\ell)}+F^{(\ell)}] \\
&- \frac{1}{2}  {P_{(0)}}^{\mu} {P_{(\ell)}}_{\mu}  [K^{(0)}, E^{(\ell)}-F^{(\ell)}]\\
&+ \frac12 \sum_{\substack{k,m \geq 1\\ k-m = \ell}} {P_{(k)}}^{\mu} {P_{(m)}}_{\mu} [ E^{(k)}-F^{(k)}, E^{(m)}+F^{(m)}] = 0 .
\end{split}\ee

\setcounter{equation}{0}
 \section{Generalities on $\asu$  } \label{app:su21}
 In this section, we will see how to fix $\asu$ from the complex algebra $A_2= \mathfrak{sl}(3, \mathbb{C})$. We will also give a complete list of its generators.
 
 \subsection{$\asu$: definitions}
The real form $\asu$ is the Lie algebra of $3 \times 3$ complex traceless matrices X, subject to the constraint
\be
\eta \, X+ X^{\dag} \, \eta=0 \, ,
\ee
with
\be
\eta= \left(\begin{array}{ccc}0 & 0 & -1 \\0 & 1 & 0 \\-1 & 0 & 0\end{array}\right)\, .
\ee
 This algebra is a non-split real form of the complex Lie algebra $A_2= \mathfrak{sl}(3, \mathbb{C})$ which can be written as
 \be
 \mathfrak{sl} (3,\mathbb{C})= \sum _{k=4}^{5} \mathbb{C} F_k \oplus \mathbb{C} F_{4,5} \oplus \sum _{k=4}^{5} \mathbb{C} H_k \oplus \sum _{k=4}^{5} \mathbb{C} E_k \oplus \mathbb{C} E_{4,5} , 
 \ee
 where the generators ${H_i, \, E_i,\,  F_i}$ of $\mathfrak{sl}(3,\mathbb{C})$ have the following matrix realization in the fundamental representation
 
 \begin{align} \label{eqn:chevgenA2}
 &H_4 = \left(\begin{array}{ccc}1 & 0 & 0 \\0 & -1 & 0 \\0 & 0 & 0\end{array}\right),  &&H_5= \left(\begin{array}{ccc}0 & 0 & 0 \\0 & 1 & 0 \\0 & 0 & -1\end{array}\right),  \\
 &E_4= \left(\begin{array}{ccc}0 & 1 & 0 \\0 & 0 & 0 \\0 & 0 & 0\end{array}\right), &&E_5=\left(\begin{array}{ccc}0 & 0 & 0 \\0 & 0 & 1 \\0 & 0 & 0\end{array}\right), \quad E_{4,5}= [E_4, E_5]= \left(\begin{array}{ccc}0 & 0 & 1 \\0 & 0 & 0 \\0 & 0 & 0\end{array}\right) \nn,
 \end{align}
 as well as 
 \be
 F_i=(E_i)^{T}.
\ee 

The conjugation $\sigma$, \footnote{ If $\mathfrak{g}$ is a real form of the complex Lie algebra $\mathfrak{g}_{\mathbb C}$, it defines a conjugation on $\mathfrak{g}_{\mathbb C}$. Conversely, if $\s$ is a conjugation on $\mathfrak{g}_{\mathbb C}$, the set $\mathfrak{g}_{\s}$ of elements of $\mathfrak{g}_{\mathbb C}$ fixed by $\sigma$ provides a real form of $\mathfrak{g}_{\mathbb C}$. Thus, on $\mathfrak{g}_{\mathbb C}$, real forms and conjugations are in one-to-one correspondence~\cite{Helgason:1978}. }  that fixes $\asu$ may be read off from its Tits-Satake diagram \cite{Helgason:1978,  Henneaux:2007ej} displayed in Figure \ref{figa:su21} with the following action on the simple roots of $A_2$:
 \be
 \alpha_4 + \sigma (\alpha_5) = \alpha_5+ \sigma (\alpha_4).
  \ee
Since there are no black nodes, this implies 
\be \label{eqn:sigmaroots}
 \sigma (\alpha_4) = \alpha_5, \   \sigma (\alpha_5) = \alpha_4\, .
  \ee
\begin{figure}[t]
\begin{center}
\begin{pgfpicture}{1cm}{0cm}{1cm}{3cm}

\pgfnodecircle{Node1}[stroke]{\pgfxy(1,0.5)}{0.25cm}
\pgfnodecircle{Node2}[stroke]
{\pgfrelative{\pgfxy(0,2)}{\pgfnodecenter{Node1}}}{0.25cm}

\pgfnodebox{Node6}[virtual]{\pgfxy(1,0)}{$\alpha_{4}$}{2pt}{2pt}
\pgfnodebox{Node7}[virtual]{\pgfxy(1,3)}{$\alpha_{5}$}{2pt}{2pt}
\pgfnodeconnline{Node1}{Node2} 

\pgfsetstartarrow{\pgfarrowtriangle{4pt}}
\pgfsetendarrow{\pgfarrowtriangle{4pt}}
\pgfnodesetsepend{5pt}
\pgfnodesetsepstart{5pt}
\pgfnodeconncurve{Node2}{Node1}{-10}{10}{1cm}{1cm}

\end{pgfpicture}
\caption {\label{figa:su21} \sl \small The Tits-Satake diagram of $\asu$.} 
\end{center}
\end{figure}
\noindent Thus on the generators of $\mathfrak{sl}(3, \mathbb{C})$ we have
  \be \begin{split}
  \begin{aligned}\label{eqn:sigmagens}
\s (H_4)&=H_5, & \s (H_5)&= H_4,  \\
 \s (E_4)&=E_5, &  \s (E_5)&= E_4, & \s (E_{4,5})&=-  E_{4,5}\,, \\
 \s (F_4)&=F_5, & \s (F_5)&= F_4, & \s (F_{4,5})&=-  F_{4,5}\,. 
\end{aligned} \end{split} \ee
\noindent The generators of $\asu$ correspond to the ones which are fixed by $\sigma$ and they can be written in terms of the generators of $\mathfrak{sl}(3, \mathbb{C})$ (\ref{eqn:chevgenA2}) as

\be \label{eqn:su21gens}\begin{split} 
\begin{aligned}
 h_4&=H_4+ H_5 &&=  \left(\begin{array}{ccc}1 & 0 & 0 \\0 & 0 & 0 \\0 & 0 & -1\end{array}\right), & h_5&= i (H_4-H_5)&&= \left(\begin{array}{ccc}i & 0 & 0 \\0 & -2 i & 0 \\0 & 0 & i\end{array}\right)\, , \\
e_4&= E_4 +E_5&&= \left(\begin{array}{ccc}0 & 1 & 0 \\0 & 0 & 1 \\0 & 0 & 0\end{array}\right), & f_4&= F_4 +F_5&&=\left(\begin{array}{ccc}0 & 0 & 0\\1 & 0 & 0 \\0 & 1 & 0\end{array}\right),  \\
 e_5&= i (E_4-E_5)&&= \left(\begin{array}{ccc}0 & i & 0 \\0 & 0 & -i \\0 & 0 & 0\end{array}\right) ,& f_5&= i (F_4-F_5)&&= \left(\begin{array}{ccc}0 & 0 & 0 \\ i & 0 & 0 \\0 & -i & 0\end{array}\right) , \\
 e_{4,5}&=i E_{4,5}&&= \left(\begin{array}{ccc}0 & 0 & i \\0 & 0 & 0 \\0 & 0 & 0\end{array}\right)\, , & f_{4,5}&=i F_{4,5}  &&=\left(\begin{array}{ccc}0 & 0 & 0 \\0 & 0 & 0 \\i & 0 & 0\end{array}\right), 
\end{aligned}
\end{split}
\ee
where $h_4$, $h_5$ are the generators of the Cartan subalgebra $\mathfrak{h}$,  $e_4,\, e_5$ and $ e_{4,5}$ are positive generators while $f_4,\,f_5$ and $ f_{4,5}$ are negative ones.

\subsection{$\mathfrak{k}= \mathfrak{su}(2)\oplus \mathfrak{u}(1)$}
\label{app:k}
The subalgebra $\mathfrak{su}(2)\oplus \mathfrak{u}(1)$ is defined as the maximal compact subalgebra of $\asu$. It is given as the fixed point set under the Cartan involution $\theta$,
\be
\mathfrak{k} = \big\{ x \in \asu : \, \theta(x)= x\big\}\, .
\ee
From the Tits-Satake diagram of $\asu$ (see Figure \ref{figa:su21}) we infer the following action of the Cartan involution on the simple roots
\be
\theta (\alpha_4) =-  \alpha_5, \quad \theta (\alpha_5) = - \alpha_4\, .
  \ee
On the Borel generators of $\asu$, this corresponds to
\be \begin{split} \begin{aligned} \label{eqn: invsu21}
\theta(h_4)&= -\, h_4,\ &  \theta(h_5)&= h_5, \\ 
\theta(e_4)&= -\, f_4, \ & \theta(e_5)&= \, f_5, \ \theta(e_{4,5})= \,f_{4,5} .
\end{aligned} \end{split}
\ee

We find that the subalgebra $\mathfrak{k}$ is generated by:
\be \begin{split}
\tilde{u}&= \frac{1}{2} \, (e_{4,5}+f_{4,5})+ \frac{1}{6} h_5, \\
\tilde{t}_1&= \frac{1}{2} \, \big(h_5 - (e_{4,5}+f_{4,5})\big), \\
 \tilde{t}_2&= \frac{1}{\sqrt{2} } \, (e_4-f_4), \\
   \tilde{t}_3&= \frac{1}{\sqrt{2} }\,  (e_5+ f_5)\, ,
\end{split}
\ee
where $\tilde{u}$ is the $\mathfrak{u}(1)$ generator and the $\tilde{t}_i$ generate a $\mathfrak{su}(2)$ subalgebra.
We thus have the following Cartan decomposition of $\asu$,
\be
\asu = \mathfrak{k} \oplus \mathfrak {p}= \mathfrak{su}(2) \oplus \mathfrak{u}(1) \oplus \mathfrak{p}\,,  
\ee
where $\mathfrak{p}$ are the subset of $\asu$ which are anti-invariant under $\theta$.

\subsection{The restricted root system of $\asu$}
\label{app:restricted}

Let $\mathfrak{a}$ be the maximal abelian subalgebra of $\mathfrak{p}$ which can be diagonalized over $\mathbb R$. Then
\be
\mathfrak{a}= \mathfrak{p} \cap \mathfrak{h} = \mathbb{R}\, h_4 .
\ee

\noindent The eigenvalues under the adjoint action of $h_4$ which is the only diagonalizable generator, are the following
\begin{align} \label{eqn:comute}
[h_4, e_4]&= e_4,\  && [h_4, e_5]= e_5 ,  &&[h_4, e_{4,5}]=2\,  e_{4,5},\\  \label{eqn:comutf}
[h_4, f_4]&= - f_4\ , \  &&[h_4, f_5]=-  f_5\ , \ &&[h_4, f_{4,5}]=- 2\,  f_{4,5}.
\end{align}
The generator $h_5$ is not diagonalizable over $\mathbb{R}$. Indeed, we have the following commutations relations
\begin{align}
[h_5, e_4]&=3 \, e_5, \ && [h_5, e_5]=-3 \, e_4, \ &&[h_5, e_{4,5}]=0, \label{eqn:comuth5e}\\
[h_5, f_4]&= -3 f_5, \  &&[h_5, f_5]=3  f_4, \ &&[h_5, f_{4,5}]=0. \label{eqn:comuth5f}
\end{align}
According to (\ref{eqn:comute}) and (\ref{eqn:comutf}), we may decompose $\asu$ into a direct sum of eigenspaces labelled by elements of the dual space $\mathfrak{a}^*$
\be
\asu = \bigoplus_{\lambda} \mathfrak{g}_{\lambda} \ , \ \mathfrak{g}_{\lambda}=\{ x \in \asu :\forall h \in \mathfrak{a}, \mathrm{ad}h (x)= \lambda(H)\, x \}. 
\ee
The non-compact Cartan $h_4$ then generates a $5$-grading
 of $\asu$ which is given by
 \be
 \asu= \mathfrak{g}_{(-2)} \oplus  \mathfrak{g}_{(-1)} \oplus \mathfrak{h} \oplus \mathfrak{g}_{(+1)} \oplus \mathfrak{g}_{(+2)}.
 \ee
\noindent One trivial subspace is $\mathfrak{h}$. The other nontrivial subspaces define the restricted root spaces of $\asu$ with respect to $\mathfrak{a}$ and the restricted roots are the $\lambda \in \mathfrak{a}^*$. It is now easy to determine the positive restricted root system $\sigma$ of $\asu$
\be \label{eqn:restricted}
\lambda_1(h_4)= 1 \quad, \quad \lambda_2(h_4)= 2 =   2 \lambda_1.
\ee 
Hence, the restricted root system displayed in Figure \ref{fig:restricted}, consists of the restricted root $\lambda_1$ which has multiplicity $2$ and the highest reduced root $2 \lambda_1$, with multiplicity $1$. This can be identified with the non-reduced root system $(BC)_1$ \cite{Helgason:1978, HenneauxJulia,Henneaux:2007ej}. See also \cite{Gunaydin:2007qq} for a recent analysis of $\mathfrak{su}(2,1)$ from a more representation-theoretic perspective. 

\begin{figure}[h]
\begin{center}
\begin{pgfpicture}{3.5cm}{- 0.5cm}{1cm}{1.5cm}

\pgfnodebox{Node1}[virtual]{\pgfxy(1,0.5)}{$\bullet$}{0pt}{0pt}
\pgfnodebox{Node2}[virtual]
{\pgfrelative{\pgfxy(1,0)}{\pgfnodecenter{Node1}}}{$\bullet$}{0pt}{0pt}
\pgfnodebox{Node3}[virtual]
{\pgfrelative{\pgfxy(1,0)}{\pgfnodecenter{Node2}}}{$\bullet$}{0pt}{0pt}
\pgfnodebox{Node4}[virtual]
{\pgfrelative{\pgfxy(1,0)}{\pgfnodecenter{Node3}}}{$\bullet$}{0pt}{0pt}
\pgfnodebox{Node5}[virtual]
{\pgfrelative{\pgfxy(1,0)}{\pgfnodecenter{Node4}}}{$\bullet$}{0pt}{0pt}
\pgfnodebox{Node55}[virtual]
{\pgfrelative{\pgfxy(1,0)}{\pgfnodecenter{Node5}}}{$>$}{0pt}{0pt}
\pgfnodebox{Node0}[virtual]
{\pgfrelative{\pgfxy(-1,0)}{\pgfnodecenter{Node1}}}{}{0pt}{0pt}

\pgfnodebox{Node6}[virtual]{\pgfxy(0.9,0)}{$-2$}{2pt}{2pt}
\pgfnodebox{Node7}[virtual]{\pgfxy(1.9,0)}{$-1$}{2pt}{2pt}
\pgfnodebox{Node8}[virtual]{\pgfxy(3,0)}{$0$}{2pt}{2pt}
\pgfnodebox{Node9}[virtual]{\pgfxy(4,0)}{$1$}{2pt}{2pt}
\pgfnodebox{Node10}[virtual]{\pgfxy(5,0)}{$2$}{2pt}{2pt}
\pgfnodebox{Node60}[virtual]{\pgfxy(6.5,0.5)}{$h_4^*$}{2pt}{2pt}

\pgfnodeconnline{Node1}{Node2} \pgfnodeconnline{Node2}{Node3}
\pgfnodeconnline{Node3}{Node4} \pgfnodeconnline{Node4}{Node5} 
\pgfnodeconnline{Node5}{Node55} \pgfnodeconnline{Node0}{Node1} 

\end{pgfpicture}
\caption { \sl \small The restricted root system of $\asu$ labeled by the eigenvalues of $h_4$.} 
\label{fig:restricted}
\end{center}
\end{figure}

Let $\Sigma $ be the subset of nonzero restricted roots and $\Sigma ^{+} $ the set of positive roots, we define a nilpotent subalgebra of $\asu$ as 
\be
\mathfrak{n}_+= \bigoplus _{\lambda \in \Sigma^+} \mathfrak{g}_{\lambda}. 
\ee
Then, the algebraic Iwasawa decomposition of the Lie algebra $\asu$ reads
 \be \label{eqn:Iwasawasu21} \begin{split}
 \begin{aligned}
\asu& =  \ \mathfrak{k} \oplus \mathfrak{a}  \oplus \mathfrak{n}_+,\\
 &= \big( \mathfrak{su}(2) \oplus \mathfrak{u}(1)\big) \oplus \mathbb{R} h_4 \oplus \big(\mathbb{R} e_4 \oplus \mathbb{R} e_5 \oplus \mathbb{R} e_{4,5}\big) .
\end{aligned} \end{split} \ee
It is only $\mathfrak{a}$ that appears in the Iwasawa decomposition of $\asu$ (\ref{eqn:Iwasawasu21}) and not the full Cartan subalgebra $\mathfrak{h}$ since its compact part $h_5$ belong to $\mathfrak{k}$. This implies that when constructing the coset Langragians (\ref{eqn:lag3ds}) and (\ref{eqn:3dStationaryLagrangian}) respectively on the cosets $\mathcal C$ and  $\mathcal C^*$, the only part that will show up in the Borel gauge is the Borel subalgebra
\be
\mathfrak{b}_+= \mathfrak{a} \oplus \mathfrak{n}_+.
\ee

\subsection{$\mathfrak{k^*}= \mathfrak{sl}(2,\mathbb{R})\oplus \mathfrak{u}(1)$ } \label{app:k*}

The scalar part of the reduced  Lagrangian (\ref{eqn:3dStationaryLagrangian}) was identified with a non-linear $\s$-model constructed on the coset  $\mathcal{C}^*= \su/ \mathrm{SL}(2,\mbb R) \times \mathrm{U}(1)$. The generators of the algebra $ \mathfrak{sl}(2,\mathbb{R})\oplus \mathfrak{u}(1)$ associated to the quotient group of this coset are invariant 
 under the $\Omega_4$-involution, defined in (\ref{eqn:temporal}): 
\be
\mathfrak{k^*} = \big\{ x \in \asu :\, \Omega_4(x)= x \big\}\, ,
\ee
where this involution $\Omega_4$ acts on the generators
of the Borel subalgebra of $\asu$ as:
 \be \label{eqn:omega4} \begin{split} \begin{aligned}
\Omega_4(h_4)&= -\, h_4,\  & \Omega_4(h_5)&= h_5,\\
  \Omega_4(e_4)&= \, f_4, \ &\Omega_4(e_5)&=- \, f_5, \ \Omega_4(e_{4,5})= \,f_{4,5}\, .
\end{aligned} \end{split}
\ee
The subalgebra $\mathfrak{k}^*$ is generated by:
\be \label{eqn:nocompactsub} \begin{split}
u&= - \frac{1}{2} \, (e_{4,5} +f_{4,5})+ \frac{1}{6} h_5,\\
  t_1&= \frac{1}{2} \, \big(h_5 + (e_{4,5}+f_{4,5})\big),\\
   t_2&= \frac{1}{\sqrt{2} } \, (e_4+f_4), \\
    t_3&= \frac{1}{\sqrt{2} }\,  (e_5 - f_5)\, ,
\end{split}\ee
where $u$ is the $\mathfrak{u}(1)$ generator and the $t_i$ generate a $\mathfrak{sl}(2, \mathbb{R})$ subalgebra.

\setcounter{equation}{0}
\section{The generators of $\mathfrak{so}(2,2)$}
\label{app:so22}

The real form $\mathfrak{so}(2,2)$ of the complex algebra $D_2= A_1 \oplus A_1$, is defined as the set of matrices
\begin{equation}
X  = \left(\begin{array}{cc}X_1 & X_2 \\{X_2}^T & X_3\end{array}\right),
\end{equation}
where all $X_i$ are real $2\times2$ matrices, and $X_1$ and $X_3$ are skew symmetric \cite{Helgason:1978}. It is therefore spanned by the six generators
\begin{eqnarray}
b_1 = &\left(\begin{array}{cccc}0 & -1 & 0 & 0 \\1 & 0 & 0 & 0 \\0 & 0 & 0 & 1 \\0 & 0 & -1 & 0\end{array}\right)  & b_2 = \left(\begin{array}{cccc}0 & 0 & -1 & 0 \\0 & 0 & 0 & -1 \\-1 & 0 & 0 & 0 \\0 & -1 & 0 & 0\end{array}\right) \nonumber\\
b_3 = & \left(\begin{array}{cccc}0 & 0 & 0 & 1 \\0 & 0 & -1 & 0 \\0 & -1 & 0 & 0 \\1 & 0 & 0 & 0\end{array}\right) & b_4 = \left(\begin{array}{cccc}0 & 1 & 0 & 0 \\-1 & 0 & 0 & 0 \\0 & 0 & 0 & 1 \\0 & 0 & -1 & 0\end{array}\right) \\
b_5 =&\left(\begin{array}{cccc}0 & 0 & 0 & -1 \\0 & 0 & -1 & 0 \\0 & -1 & 0 & 0 \\-1 & 0 & 0 & 0\end{array}\right) & b_6 = \left(\begin{array}{cccc}0 & 0 & 1 & 0 \\0 & 0 & 0 & -1 \\1 & 0 & 0 & 0 \\0 & -1 & 0 & 0\end{array}\right) \nonumber \\
\nonumber
\end{eqnarray}
and the choice of base here is to streamline the analysis in Section \ref{sec:ActionOnCharges}. In fact, $\mathfrak{so}(2,2) \cong \mathfrak{sl}(2,\mathbb{R}) \oplus  \mathfrak{sl}(2,\mathbb{R})$, which is easily seen in this basis as $b_1,b_2$ and $b_3$ generate one $\mathfrak{sl}(2,\mathbb{R})$ summand, and $b_4,b_5$ and $b_6$ the other. The two compact generators are $b_1$ and $b_4$.

 \setcounter{equation}{0}
\section{Details on $\asuppp$ level decomposition} \label{app:su21+++}
\subsection{Commutators and bilinear forms of $A_2^{+++}$}\label{a2+++}
The level decomposition of the complex algebra $A_2^{+++}$ under its $A_3= \mathfrak{sl}(4, \mathbb R)$ subalgebra is performed in Section \ref{sec:levdecompo} and displayed in Table \ref{tab:levdeca2}. At level $\ell=(0,0)$, we have a $\mathfrak{gl}(4, \mathbb{R})= \mathfrak{sl}(4,\mathbb{R}) \oplus \mathbb{R}$ algebra generated by $K^{a}_{\ b}$ $(a, b = 1, \ldots, 4)$, as well as an extra scalar generator $T$. Their relations are
\be \begin{split} \begin{aligned}
\left[K^{a}_{\ b}, K^{c}_{\ d}\right] = \delta^c_b\,  K^{a}_{\ d} - \delta^a_d \, K^{c}_{\ b}\, ,\ \ \  \left[T, K^{a}_{\ b}\right]=0 \, ,\\
(K^{a}_{\ b} |K^{c}_{\ d}) = \delta^a_d \delta^c_b- \delta^a_b\delta^c_d, \  \ \ (T|T)= \frac{2}{9}  ,\ \  \ (T | K^{a}_{\ b})=0\, .
\end{aligned} \end{split}
\ee
All objects transform as $\mathfrak{gl}(4,\mathbb{R})$ tensors in the obvious way. The $T$ commutator relations are
\be
\left[T, R^a \right]= \frac{1}{2} \, R^a, \ \ \left[T, \tilde{R}^a \right]=-  \frac{1}{2} \, \tilde{R}^a .
\ee 
The negative step operators are obtained from the positive ones by lowering the indices. The commutations relations between a positive generator and the negative one are given by
\be \begin{split} \begin{aligned}
\left[ R^a,R_{b}\right] &= \delta^{a}_{b} (- \tfrac{1}{2} K  + 3\, T) + K^{a}_{\ b}\, ,\\
\left[ \tilde{R}^a,\tilde{R}_{b}\right] &= \delta^{a}_{b} (- \tfrac{1}{2} K  - 3\, T) + K^{a}_{\ b}\, ,\\
\left[ R^{ab},R_{cd}\right] &=- 3\,  \delta^{ab}_{cd} K+ 6 \, \delta^{[ a}_{[ c}\, K^{b]}_{\ d]}\, ,\\
\left[ S^{ab},S_{cd}\right] &=-  \bar{\delta}^{ab}_{cd} K+ 2 \, \delta^{(a}_{( c}\, K^{b)}_{\ d)}\, , 
\end{aligned} \end{split} \ee
with
\be
K= K^1_{\ 1}+  K^2_{\ 2}+ K^3_{\ 3}+ K^4_{\ 4}\, ,
\ee
and the bilinear forms are given  by
\be \begin{split}\begin{aligned}
(R^a | R_{b}) &= \delta^{a}_{b},  & (\tilde{R}^a | \tilde{R}_{b})  &= \delta^{a}_{b},\\
(R^{ab} | R_{cd})&= 3\,   \delta^{ab}_{cd},  & (S^{ab} | S_{cd}) &= \bar{ \delta}^{ab}_{cd}, 
\end{aligned} \end{split} \ee
where
\be \begin{split} \begin{aligned}
\delta^{ab}_{cd} &:=  \tfrac{1}{2} (\delta^a_c\, \delta^b_d- \delta^b_c\, \delta^a_d)\, ,\\
\bar{  \delta}^{ab}_{cd} &:=  \tfrac{1}{2} (\delta^a_c\, \delta^b_d+ \delta^b_c\, \delta^a_d)\, .
\end{aligned} \end{split} \ee
The generators of different rank commute in the following non-trivial way:
\be \begin{split} \begin{aligned}
\left[ S^{ab}, R_c\right] &= - \delta^{(a}_{c}\,  \tilde{R}^{b)}, && \left[ S^{ab}, \tilde{R}_c\right] =  \delta^{(a}_{c}\, R^{b)}\, ,\\
\left[ R^{ab}, R_c\right] &= - 3 \,  \delta^{[a}_{c}\, \tilde{R}^{b]}, && \left[ R^{ab}, \tilde{R}_c\right] = - 3 \,  \delta^{[a}_{c}\, R^{b]}, \\
\left[ S^{ab}, R_{cd} \right]&=0 \, .
\end{aligned} \end{split} \ee
We identify the Chevalley generators of $A_2^{+++}$ as
\be \label{eqn:chevbis}
\begin{split}
\begin{aligned}
&H_1 = K^1_{\ 1} - K^2_{\ 2}, &\qquad&E_1= K^1_{\ 2}\, ,\\
&H_2 = K^2_{\ 2} - K^3_{\ 3}, &\qquad&E_2= K^2_{\ 3}\, ,\\
&H_3 = K^3_{\ 3} - K^4_{\ 4}, &\qquad&E_3= K^3_{\ 4}\, ,\\
&H_4 = - \tfrac{1}{2}\, K + K^4_{\ 4} +3\, T, &\qquad&E_4= R^{4}\, ,\\
&H_5 = - \tfrac{1}{2}\, K + K^4_{\ 4} -3\, T, &\qquad&E_5= \tilde{R}^{4}\, .
\end{aligned} \end{split}
\ee

\subsection{Commutators and bilinear forms of $\asuppp$} \label{app:comsu21+++}
The level decomposition of $\asuppp$ under its $\mathfrak{sl}(4, \mathbb R)$ subalgebra is performed in Section \ref{sec:levdecompo} and displayed in Table \ref{tab:levdecsu}.
The generators at opposite levels commute as follows
\be \begin{split} \begin{aligned}
\left[ r^a, r_b \right] &= - \delta^a_b\, K + 2\, K^a_{\ b}\, ,\\
\left[ \tilde{r}^a, \tilde{r}_b \right] &= \delta^a_b\, K - 2\, K^a_{\ b}\, ,\\
\left[ r^a, \tilde{r}_b \right] &= 6\, i \,\delta^a_b\, T \, ,\\
\left[ s^{a b}, s_{c d} \right] &= 4\,  \bar{\delta}^{ab}_{cd}\, K  - 8\, \delta^{(a}_{( c}\, K^{b)}_{\ d)}   \, ,\\
\left[ r^{a b}, r_{c d} \right] &= - 12\,  \delta^{ab}_{cd}\, K  + 24\, \delta^{[a}_{[c}\, K^{b]}_{\ d]}   \, ,\\
\left[ r^{a b}, s_{c d} \right] &=  0 \,.
\end{aligned} \end{split} \ee
The generators of different rank commute in the following non-trivial way:
\be \begin{split} \begin{aligned}
\left[ s^{ab}, r_c\right] &= - 2\,  \delta^{(a}_{c}\,  \tilde{r}^{b)}, &&  \left[ s^{ab}, \tilde{r}_c\right] = -2\,  \delta^{(a}_{c}\, r^{b)}\, ,\\
\left[ r^{ab}, r_c\right] &= - 6 \,  \delta^{[a}_{c}\, r^{b]}, &&  \left[ r^{ab}, \tilde{r}_c\right] = 6 \,  \delta^{[a}_{c}\, \tilde{r}^{b]}\, .
\end{aligned} \end{split} \ee
The generators are normalized as
\be \begin{split} \begin{aligned}
(r^a | r_{b}) &= 2\, \delta^{a}_{b},  & (\tilde{r}^a | \tilde{r}_{b}) &= - 2\,  \delta^{a}_{b},\\
(r^{ab} | r_{cd})&= \,12 \,  \delta^{ab}_{cd} , & (s^{ab} | s_{cd}) &= -4 \, \bar{ \delta}^{ab}_{cd}\, .
\end{aligned} \end{split} \ee

\newpage

 \addcontentsline{toc}{section}{References}
\bibliographystyle{jhep}
\bibliography{Refdata}

\providecommand{\href}[2]{#2}\begingroup\raggedright\begin{thebibliography}{10}

\bibitem{Ehlers:1957zz}
J.~Ehlers, {\it {Konstruktionen und Charakterisierung von Losungen der
  Einsteinschen Gravitationsfeldgleichungen}}, .

\bibitem{Cremmer:1978ds}
E.~Cremmer and B.~Julia, {\it The {N}=8 supergravity theory. 1. {T}he
  lagrangian},  {\em Phys. Lett.} {\bf B80} (1978) 48.

\bibitem{Marcus:1983hb}
N.~Marcus and J.~Schwarz, {\it Three-dimensional supergravity theories},  {\em
  Nucl. Phys.} {\bf B228} (1983) 145.

\bibitem{Julia:1982gx}
B.~Julia, {\it {Kac-{M}oody symmetry of gravitation and supergravity
  theories}}, . Invited talk given at AMS-SIAM Summer Seminar on Applications
  of Group Theory in Physics and Mathematical Physics, Chicago, Ill., Jul 6-16,
  1982. Lecture Notes in Applied Mathematics, Vol. 21 (1985).

\bibitem{Nicolai:1987kz}
H.~Nicolai, {\it {The integrability of N=16 Supergravity}},  {\em Phys. Lett.}
  {\bf B194} (1987) 402.

\bibitem{Mizoguchi:1997si}
S.~Mizoguchi, {\it {$E_{10}$ symmetry in one-dimensional supergravity}},  {\em
  Nucl. Phys.} {\bf B528} (1998) 238--264,
  [\href{http://xxx.lanl.gov/abs/hep-th/9703160}{{\tt hep-th/9703160}}].

\bibitem{Damour:2002cu}
T.~Damour, M.~Henneaux, and H.~Nicolai, {\it {$E_{10}$} and a 'small tension
  expansion' of {M} theory},  {\em Phys. Rev. Lett.} {\bf 89} (2002) 221601,
  [\href{http://xxx.lanl.gov/abs/hep-th/0207267}{{\tt hep-th/0207267}}].

\bibitem{West:2001as}
P.~West, {\it {$E_{11}$} and {M} theory},  {\em Class. Quant. Grav.} {\bf 18}
  (2001) 4443--4460, [\href{http://xxx.lanl.gov/abs/hep-th/0104081}{{\tt
  hep-th/0104081}}].

\bibitem{West:2003fc}
P.~West, {\it {$E_{11}$, SL(32) and central charges}},  {\em Phys. Lett.} {\bf
  B575} (2003) 333--342, [\href{http://xxx.lanl.gov/abs/hep-th/0307098}{{\tt
  hep-th/0307098}}].

\bibitem{Kleinschmidt:2003jf}
A.~Kleinschmidt and P.~West, {\it {Representations of $G^{+++}$ and the role of
  space-time}},  {\em JHEP} {\bf 02} (2004) 033,
  [\href{http://xxx.lanl.gov/abs/hep-th/0312247}{{\tt hep-th/0312247}}].

\bibitem{Englert:2003py}
F.~Englert and L.~Houart, {\it {$\mathcal{G}^{+++}$} invariant formulation of
  gravity and {M}-theories: {E}xact {BPS} solutions},  {\em JHEP} {\bf 01}
  (2004) 002, [\href{http://xxx.lanl.gov/abs/hep-th/0311255}{{\tt
  hep-th/0311255}}].

\bibitem{Englert:2004ph}
F.~Englert, M.~Henneaux, and L.~Houart, {\it From very-extended to overextended
  gravity and {M}-theories},  {\em JHEP} {\bf 02} (2005) 070,
  [\href{http://xxx.lanl.gov/abs/hep-th/0412184}{{\tt hep-th/0412184}}].

\bibitem{Englert:2003zs}
F.~Englert, L.~Houart, A.~Taormina, and P.~West, {\it The symmetry of
  {M}-theories},  {\em JHEP} {\bf 09} (2003) 020,
  [\href{http://xxx.lanl.gov/abs/hep-th/0304206}{{\tt hep-th/0304206}}].

\bibitem{Gaberdiel:2002db}
M.~R. Gaberdiel, D.~I. Olive, and P.~C. West, {\it A class of {L}orentzian
  {K}ac-{M}oody algebras},  {\em Nucl. Phys.} {\bf B645} (2002) 403--437,
  [\href{http://xxx.lanl.gov/abs/hep-th/0205068}{{\tt hep-th/0205068}}].

\bibitem{Kleinschmidt:2003mf}
A.~Kleinschmidt, I.~Schnakenburg, and P.~West, {\it Very-extended {K}ac-{M}oody
  algebras and their interpretation at low levels},  {\em Class. Quant. Grav.}
  {\bf 21} (2004) 2493--2525,
  [\href{http://xxx.lanl.gov/abs/hep-th/0309198}{{\tt hep-th/0309198}}].

\bibitem{Englert:2007qb}
F.~Englert, L.~Houart, A.~Kleinschmidt, H.~Nicolai, and N.~Tabti, {\it {An
  $E_{9}$ multiplet of BPS states}},  {\em JHEP} {\bf 05} (2007) 065,
  [\href{http://xxx.lanl.gov/abs/hep-th/0703285}{{\tt hep-th/0703285}}].

\bibitem{Damour:2007dt}
T.~Damour, A.~Kleinschmidt, and H.~Nicolai, {\it {Constraints and the $E_{10}$
  Coset Model}},  {\em Class. Quant. Grav.} {\bf 24} (2007) 6097--6120,
  [\href{http://xxx.lanl.gov/abs/0709.2691}{{\tt arXiv:0709.2691}}].

\bibitem{Riccioni:2009hi}
F.~Riccioni and P.~West, {\it {Local $E_{11}$}},  {\em JHEP} {\bf 04} (2009)
  051, [\href{http://xxx.lanl.gov/abs/0902.4678}{{\tt arXiv:0902.4678}}].

\bibitem{Bergshoeff:2009zq}
E.~Bergshoeff, M.~de~Roo, and O.~Hohm, {\it {Can dual gravity be reconciled
  with $E_{11}$?}},  \href{http://xxx.lanl.gov/abs/0903.4384}{{\tt
  arXiv:0903.4384}}.

\bibitem{Kallosh:1994ba}
R.~Kallosh, D.~Kastor, T.~Ortin, and T.~Torma, {\it {Supersymmetry and
  stationary solutions in dilaton axion gravity}},  {\em Phys. Rev.} {\bf D50}
  (1994) 6374--6384, [\href{http://xxx.lanl.gov/abs/hep-th/9406059}{{\tt
  hep-th/9406059}}].

\bibitem{Argurio:2008zt}
R.~Argurio, F.~Dehouck, and L.~Houart, {\it {Supersymmetry and Gravitational
  Duality}},  {\em Phys. Rev.} {\bf D79} (2009) 125001,
  [\href{http://xxx.lanl.gov/abs/0810.4999}{{\tt arXiv:0810.4999}}].

\bibitem{Kinnersley}
W.~Kinnersley, {\it {Generation of stationary Einstein-Maxwell fields}},  {\em
  J. Math. Phys.} {\bf 14} (1973), no.~5 651.

\bibitem{HenneauxJulia}
M.~Henneaux and B.~Julia, {\it {Hyperbolic billiards of pure D = 4
  supergravities}},  {\em JHEP} {\bf 05} (2003) 047,
  [\href{http://xxx.lanl.gov/abs/hep-th/0304233}{{\tt hep-th/0304233}}].

\bibitem{deBuyl:2003ub}
S.~de~Buyl, M.~Henneaux, B.~Julia, and L.~Paulot, {\it {Cosmological billiards
  and oxidation}},  {\em Fortsch. Phys.} {\bf 52} (2004) 548--554,
  [\href{http://xxx.lanl.gov/abs/hep-th/0312251}{{\tt hep-th/0312251}}].

\bibitem{Henneaux:2007ej}
M.~Henneaux, D.~Persson, and P.~Spindel, {\it {Spacelike Singularities and
  Hidden Symmetries of Gravity}},  {\em Living Rev. Rel.} {\bf 11} (2008) 1,
  [\href{http://xxx.lanl.gov/abs/0710.1818}{{\tt arXiv:0710.1818}}].

\bibitem{Riccioni:2008jz}
F.~Riccioni, A.~Van~Proeyen, and P.~West, {\it {Real forms of very extended
  Kac-Moody algebras and theories with eight supersymmetries}},  {\em JHEP}
  {\bf 05} (2008) 079, [\href{http://xxx.lanl.gov/abs/0801.2763}{{\tt
  arXiv:0801.2763}}].

\bibitem{Rousseau1989}
G.~Rousseau, {\it {Almost split K-forms of Kac-Moody algebras. Infinite
  Dimensional Lie Algebras and Groups}},  {\em Adv. Ser. in Math. Physics} {\bf
  7} (1989) 70.

\bibitem{Rousseau1995}
V.~Back-Valente, N.~Bardy-Panse, H.~Ben~Messaoud, and G.~Rousseau, {\it {Formes
  presque d\'eploy\'ees d'alg\`ebres de Kac-Moody: Classification et racines
  relatives}},  {\em J. Algebra} {\bf 171} (1995) 43.

\bibitem{BenMessaoud}
H.~Ben~Messaoud, {\it {Almost split real forms for hyperbolic Kac-Moody Lie
  algebras}},  {\em J. Phys. A: Math. Gen.} {\bf 39} (2006) 13659.

\bibitem{Hull:1997kt}
C.~M. Hull, {\it Gravitational duality, branes and charges},  {\em Nucl. Phys.}
  {\bf B509} (1998) 216--251,
  [\href{http://xxx.lanl.gov/abs/hep-th/9705162}{{\tt hep-th/9705162}}].

\bibitem{Hull:2000zn}
C.~M. Hull, {\it {Strongly coupled gravity and duality}},  {\em Nucl. Phys.}
  {\bf B583} (2000) 237--259,
  [\href{http://xxx.lanl.gov/abs/hep-th/0004195}{{\tt hep-th/0004195}}].

\bibitem{Henneaux:2004jw}
M.~Henneaux and C.~Teitelboim, {\it {Duality in linearized gravity}},  {\em
  Phys. Rev.} {\bf D71} (2005) 024018,
  [\href{http://xxx.lanl.gov/abs/gr-qc/0408101}{{\tt gr-qc/0408101}}].

\bibitem{Bunster:2006rt}
C.~W. Bunster, S.~Cnockaert, M.~Henneaux, and R.~Portugues, {\it {Monopoles for
  gravitation and for higher spin fields}},  {\em Phys. Rev.} {\bf D73} (2006)
  105014, [\href{http://xxx.lanl.gov/abs/hep-th/0601222}{{\tt
  hep-th/0601222}}].

\bibitem{Boulanger:2008nd}
N.~Boulanger and O.~Hohm, {\it {Non-linear parent action and dual gravity}},
  {\em Phys. Rev.} {\bf D78} (2008) 064027,
  [\href{http://xxx.lanl.gov/abs/0806.2775}{{\tt arXiv:0806.2775}}].

\bibitem{Nieto:1999pn}
J.~A. Nieto, {\it {S-duality for linearized gravity}},  {\em Phys. Lett.} {\bf
  A262} (1999) 274--281, [\href{http://xxx.lanl.gov/abs/hep-th/9910049}{{\tt
  hep-th/9910049}}].

\bibitem{Bergshoeff:2007qi}
E.~A. Bergshoeff, I.~De~Baetselier, and T.~A. Nutma, {\it {$E_{11}$ and the
  embedding tensor}},  {\em JHEP} {\bf 09} (2007) 047,
  [\href{http://xxx.lanl.gov/abs/0705.1304}{{\tt arXiv:0705.1304}}].

\bibitem{Riccioni:2007au}
F.~Riccioni and P.~C. West, {\it {The $E_{11}$ origin of all maximal
  supergravities}},  {\em JHEP} {\bf 07} (2007) 063,
  [\href{http://xxx.lanl.gov/abs/0705.0752}{{\tt arXiv:0705.0752}}].

\bibitem{Gomis:2007gb}
J.~Gomis and D.~Roest, {\it {Non-propagating degrees of freedom in supergravity
  and very extended $G_2$}},  {\em JHEP} {\bf 11} (2007) 038,
  [\href{http://xxx.lanl.gov/abs/0706.0667}{{\tt arXiv:0706.0667}}].

\bibitem{Kleinschmidt:2008jj}
A.~Kleinschmidt and D.~Roest, {\it {Extended Symmetries in Supergravity: the
  Semi-simple Case}},  {\em JHEP} {\bf 07} (2008) 035,
  [\href{http://xxx.lanl.gov/abs/0805.2573}{{\tt arXiv:0805.2573}}].

\bibitem{Breitenlohner:1987dg}
P.~Breitenlohner, D.~Maison, and G.~W. Gibbons, {\it {Four-Dimensional Black
  Holes from Kaluza-Klein Theories}},  {\em Commun. Math. Phys.} {\bf 120}
  (1988) 295.

\bibitem{Helgason:1978}
S.~Helgason, {\em Differential Geometry, Lie Groups, and Symmetric Spaces}.
\newblock American Mathematical Society, 1978.

\bibitem{deBuyl:2006gp}
S.~de~Buyl, {\it {Kac-Moody algebras in M-theory}},
  \href{http://xxx.lanl.gov/abs/hep-th/0608161}{{\tt hep-th/0608161}}.

\bibitem{Julia:1980gr}
B.~Julia, {\it Group disintegrations}, . In: Superspace and Supergravity,
  {P}roceedings of the {N}uffield {G}ravity {W}orkshop, {C}ambridge Jun 22--Jul
  12 1980 (S. Hawking and M. Roced, eds.) 331--350, Cambridge University Press
  (1981).

\bibitem{Argurio:1998cp}
R.~Argurio, {\it Brane {P}hysics in {M}-theory},
  \href{http://xxx.lanl.gov/abs/hep-th/9807171}{{\tt hep-th/9807171}}.

\bibitem{Gaiotto:2007ag}
D.~Gaiotto, W.~W. Li, and M.~Padi, {\it {Non-Supersymmetric Attractor Flow in
  Symmetric Spaces}},  {\em JHEP} {\bf 12} (2007) 093,
  [\href{http://xxx.lanl.gov/abs/0710.1638}{{\tt arXiv:0710.1638}}].

\bibitem{Bergshoeff:2008be}
E.~Bergshoeff, W.~Chemissany, A.~Ploegh, M.~Trigiante, and T.~Van~Riet, {\it
  {Generating Geodesic Flows and Supergravity Solutions}},  {\em Nucl. Phys.}
  {\bf B812} (2009) 343--401, [\href{http://xxx.lanl.gov/abs/0806.2310}{{\tt
  arXiv:0806.2310}}].

\bibitem{Pioline:2006ni}
B.~Pioline, {\it {Lectures on on black holes, topological strings and quantum
  attractors}},  {\em Class. Quant. Grav.} {\bf 23} (2006) S981,
  [\href{http://xxx.lanl.gov/abs/hep-th/0607227}{{\tt hep-th/0607227}}].

\bibitem{Neitzke:2007ke}
A.~Neitzke, B.~Pioline, and S.~Vandoren, {\it {Twistors and Black Holes}},
  {\em JHEP} {\bf 04} (2007) 038,
  [\href{http://xxx.lanl.gov/abs/hep-th/0701214}{{\tt hep-th/0701214}}].

\bibitem{Gunaydin:2007qq}
M.~Gunaydin, A.~Neitzke, O.~Pavlyk, and B.~Pioline, {\it {Quasi-conformal
  actions, quaternionic discrete series and twistors - $SU(2,1)$ and
  $G_{2(2)}$}},  {\em Commun. Math. Phys.} {\bf 283} (2008) 169--226,
  [\href{http://xxx.lanl.gov/abs/0707.1669}{{\tt arXiv:0707.1669}}].

\bibitem{Gunaydin:2007bg}
M.~Gunaydin, A.~Neitzke, B.~Pioline, and A.~Waldron, {\it {Quantum Attractor
  Flows}},  {\em JHEP} {\bf 09} (2007) 056,
  [\href{http://xxx.lanl.gov/abs/0707.0267}{{\tt arXiv:0707.0267}}].

\bibitem{Bossard:2009at}
G.~Bossard, H.~Nicolai, and K.~S. Stelle, {\it {Universal BPS structure of
  stationary supergravity solutions}},
  \href{http://xxx.lanl.gov/abs/0902.4438}{{\tt arXiv:0902.4438}}.

\bibitem{Compere:2009zh}
G.~Compere, S.~de~Buyl, E.~Jamsin, and A.~Virmani, {\it {$G_2$ Dualities in D=5
  Supergravity and Black Strings}},
  \href{http://xxx.lanl.gov/abs/0903.1645}{{\tt arXiv:0903.1645}}.

\bibitem{Neugebauer:1969wr}
G.~Neugebauer and D.~Kramer, {\it {A method for the construction of stationary
  Einstein- Maxwell fields. (in German)}},  {\em Annalen Phys.} {\bf 24} (1969)
  62--71.

\bibitem{Cremmer:1997xj}
E.~Cremmer, H.~Lu, C.~N. Pope, and K.~S. Stelle, {\it {Spectrum-generating
  symmetries for BPS solitons}},  {\em Nucl. Phys.} {\bf B520} (1998) 132--156,
  [\href{http://xxx.lanl.gov/abs/hep-th/9707207}{{\tt hep-th/9707207}}].

\bibitem{HullTownsend}
C.~M. Hull and P.~K. Townsend, {\it {Unity of superstring dualities}},  {\em
  Nucl. Phys.} {\bf B438} (1995) 109--137,
  [\href{http://xxx.lanl.gov/abs/hep-th/9410167}{{\tt hep-th/9410167}}].

\bibitem{CecottiFerraraGirardello}
S.~Cecotti, S.~Ferrara, and L.~Girardello, {\it {Geometry of Type II
  Superstrings and the Moduli of Superconformal Field Theories}},  {\em Int. J.
  Mod. Phys.} {\bf A4} (1989) 2475.

\bibitem{Behrndt}
K.~Behrndt, I.~Gaida, D.~Lust, S.~Mahapatra, and T.~Mohaupt, {\it {From type
  IIA black holes to T-dual type IIB D-instantons in N = 2, D = 4
  supergravity}},  {\em Nucl. Phys.} {\bf B508} (1997) 659--699,
  [\href{http://xxx.lanl.gov/abs/hep-th/9706096}{{\tt hep-th/9706096}}].

\bibitem{VandorenVroome}
M.~de~Vroome and S.~Vandoren, {\it {Supergravity description of spacetime
  instantons}},  {\em Class. Quant. Grav.} {\bf 24} (2007) 509--534,
  [\href{http://xxx.lanl.gov/abs/hep-th/0607055}{{\tt hep-th/0607055}}].

\bibitem{UHPaper}
L.~Bao, A.~Kleinschmidt, B.~E. Nilsson, D.~Persson, and B.~Pioline, {\it
  {Instanton Corrections to the Universal Hypermultiplet and Automorphic Forms
  on $SU(2,1)$}},  {\em To appear}.

\bibitem{Kac:book}
V.~Kac, {\em Infinite dimensional {L}ie algebras}.
\newblock Cambridge {U}niversity {P}ress, 1995.

\bibitem{PetersonKac}
D.~Peterson and V.~Kac, {\it Infinite flag varieties and conjugacy theorems},
  {\em Proc. Natl. Acad. Sc. USA} {\bf 80} (1983) 1778.

\bibitem{SimpLie}
T.~Nutma, {\it {Software SimpLie}},  {\em http://code.google.com/p/simplie/}.

\bibitem{Keurentjes:2004bv}
A.~Keurentjes, {\it {$E_{11}$}: {S}ign of the times},  {\em Nucl. Phys.} {\bf
  B697} (2004) 302--318, [\href{http://xxx.lanl.gov/abs/hep-th/0402090}{{\tt
  hep-th/0402090}}].

\bibitem{deBuyl:2005it}
S.~de~Buyl, L.~Houart, and N.~Tabti, {\it Dualities and signatures of
  {$\mathcal{G}^{++}$} invariant theories},  {\em JHEP} {\bf 06} (2005) 084,
  [\href{http://xxx.lanl.gov/abs/hep-th/0505199}{{\tt hep-th/0505199}}].

\bibitem{Keurentjes:2004xx}
A.~Keurentjes, {\it Time-like {T}-duality algebra},  {\em JHEP} {\bf 11} (2004)
  034, [\href{http://xxx.lanl.gov/abs/hep-th/0404174}{{\tt hep-th/0404174}}].

\bibitem{Englert:2004it}
F.~Englert and L.~Houart, {\it {$\mathcal{G}^{+++}$} invariant formulation of
  gravity and {M}-theories: {E}xact intersecting brane solutions},  {\em JHEP}
  {\bf 05} (2004) 059, [\href{http://xxx.lanl.gov/abs/hep-th/0405082}{{\tt
  hep-th/0405082}}].

\bibitem{Damour:2004zy}
T.~Damour and H.~Nicolai, {\it Eleven dimensional supergravity and the
  {$E_{10}/K(E_{10})$} {$\sigma$}-model at low {$A_9$} levels},
  \href{http://xxx.lanl.gov/abs/hep-th/0410245}{{\tt hep-th/0410245}}.

\bibitem{Dynkin:1957um}
E.~B. Dynkin, {\it {Semisimple subalgebras of semisimple Lie algebras}},  {\em
  Trans. Am. Math. Soc.} {\bf 6} (1957) 111.

\bibitem{Feingold}
A.~J. Feingold and H.~Nicolai, {\it Subalgebras of {H}yperbolic {K}ac-{M}oody
  {A}lgebras},  {\em Cont. Math.} {\bf 343} (2004) 97,
  [\href{http://xxx.lanl.gov/abs/[math.qa/0303179]}{{\tt [math.qa/0303179]}}].

\bibitem{Keurentjes:2005jw}
A.~Keurentjes, {\it Poincare duality and {$G^{+++}$} algebras},
  \href{http://xxx.lanl.gov/abs/hep-th/0510212}{{\tt hep-th/0510212}}.

\bibitem{Gunaydin:2001bt}
M.~Gunaydin, K.~Koepsell, and H.~Nicolai, {\it The {M}inimal {U}nitary
  {R}epresentation of {$E_{8(8)}$}},  {\em Adv. Theor. Math. Phys.} {\bf 5}
  (2002) 923--946, [\href{http://xxx.lanl.gov/abs/hep-th/0109005}{{\tt
  hep-th/0109005}}].

\bibitem{Argurio:1997gt}
R.~Argurio, F.~Englert, and L.~Houart, {\it Intersection rules for p-branes},
  {\em Phys. Lett.} {\bf B398} (1997) 61--68,
  [\href{http://xxx.lanl.gov/abs/hep-th/9701042}{{\tt hep-th/9701042}}].

\bibitem{Maldacena:1996ky}
J.~M. Maldacena, {\it {Black holes in string theory}},
  \href{http://xxx.lanl.gov/abs/hep-th/9607235}{{\tt hep-th/9607235}}.

\bibitem{West:2002jj}
P.~West, {\it Very extended {$E_8$} and {$A_8$} at low levels, gravity and
  supergravity},  {\em Class. Quant. Grav.} {\bf 20} (2003) 2393--2406,
  [\href{http://xxx.lanl.gov/abs/hep-th/0212291}{{\tt hep-th/0212291}}].

\bibitem{Nicolai:2003fw}
H.~Nicolai and T.~Fischbacher, {\it Low level representations for {$E_{10}$}
  and {$E_{11}$}},  \href{http://xxx.lanl.gov/abs/hep-th/0301017}{{\tt
  hep-th/0301017}}.

\bibitem{West:2004st}
P.~West, {\it The {IIA}, {IIB} and eleven dimensional theories and their common
  {$E_{11}$} origin},  {\em Nucl. Phys.} {\bf B693} (2004) 76--102,
  [\href{http://xxx.lanl.gov/abs/hep-th/0402140}{{\tt hep-th/0402140}}].

\bibitem{deWit:2008ta}
B.~de~Wit, H.~Nicolai, and H.~Samtleben, {\it {Gauged Supergravities, Tensor
  Hierarchies, and M-Theory}},  {\em JHEP} {\bf 02} (2008) 044,
  [\href{http://xxx.lanl.gov/abs/0801.1294}{{\tt arXiv:0801.1294}}].

\bibitem{Elitzur:1997zn}
S.~Elitzur, A.~Giveon, D.~Kutasov, and E.~Rabinovici, {\it Algebraic aspects of
  matrix theory on {$T^{d}$}},  {\em Nucl. Phys.} {\bf B509} (1998) 122--144,
  [\href{http://xxx.lanl.gov/abs/hep-th/9707217}{{\tt hep-th/9707217}}].

\bibitem{Obers:1998rn}
N.~A. Obers and B.~Pioline, {\it U-duality and {M}-theory, an algebraic
  approach},  \href{http://xxx.lanl.gov/abs/hep-th/9812139}{{\tt
  hep-th/9812139}}.

\end{thebibliography}\endgroup
\end{document}